\documentclass{article}

\usepackage[boxed,nofillcomment,linesnumbered,noend,noresetcount, noline]{algorithm2e}
\usepackage{times}
\usepackage{fullpage}
\usepackage{cite}
\usepackage{graphicx}
\usepackage{epsf}
\usepackage{tabularx}
\usepackage{multirow}
\usepackage[small, center]{caption}
\usepackage{amsmath}
\usepackage{amsthm}
\usepackage{amssymb,latexsym,color}
\usepackage{mdwlist}

\newtheorem{lemma}{Lemma}
\newtheorem{theorem}{Theorem}

\newtheorem{corollary}{Corollary}

\newtheorem{fact}{Fact}

\newenvironment{proofT}[1]{\proof\def\toto{#1}}{\hspace*{\fill}$\Box_{Theorem~\ref{\toto}}$\par\vspace{3mm}}

\newenvironment{proofL}[1]{\proof\def\toto{#1}}{\hspace*{\fill}$\Box_{Lemma~\ref{\toto}}$\par\vspace{3mm}}
\newenvironment{proofC}{\proof}{\hspace*{\fill}$\Box_{Corollary~\ref{\toto}}$\par\vspace{3mm}}

\newcommand{\thmpostponed}[2]
{
\newcounter{#1}
\setcounter{#1}{\value{theorem}}
\begin{theorem}
\label{thm:#1}
#2
\end{theorem}
\expandafter\def\csname #1\endcsname{
\newcounter{#1temp}
\setcounter{#1temp}{\value{theorem}}
\setcounter{theorem}{\value{#1}}
\begin{theorem}
#2
\end{theorem}
\setcounter{theorem}{\value{#1temp}}
}
\vspace{-1.5em}
}

\newcommand{\thmpostponedwname}[3]
{
\newcounter{#1}
\setcounter{#1}{\value{theorem}}
\begin{theorem}[#2]
\label{thm:#1}
#3
\end{theorem}
\expandafter\def\csname #1\endcsname{
\newcounter{#1temp}
\setcounter{#1temp}{\value{theorem}}
\setcounter{theorem}{\value{#1}}
\begin{theorem}[#2]
#3
\end{theorem}
\setcounter{theorem}{\value{#1temp}}
}
\vspace{-1.5em}
}

\newcommand{\lemmaproofpostponedwname}[4]
{
\newcounter{#1}
\newcounter{#1temp}
\setcounter{#1}{\value{lemma}}
\begin{lemma}[#2]
\label{#1}
#3
\end{lemma}
\expandafter\def\csname #1\endcsname{
\setcounter{#1temp}{\value{lemma}}
\setcounter{lemma}{\value{#1}}
\begin{lemma}
#3
\end{lemma}
\setcounter{theorem}{\value{#1temp}}
\begin{proofL}{#1}
#4
\end{proofL}
}
}

\newcommand{\thmproofpostponedwname}[4]
{
\newcounter{#1}
\newcounter{#1temp}
\setcounter{#1}{\value{theorem}}
\begin{theorem}[#2]
\label{#1}
#3
\end{theorem}
\expandafter\def\csname #1\endcsname{
\setcounter{#1temp}{\value{theorem}}
\setcounter{theorem}{\value{#1}}
\begin{theorem}
#3
\end{theorem}
\setcounter{theorem}{\value{#1temp}}
\begin{proofT}{#1}
#4
\end{proofT}
}
}

\newcommand{\lemmapostponed}[2]
{
\newcounter{#1}
\newcounter{#1temp}
\setcounter{#1}{\value{theorem}}
\begin{lemma}
\label{#1}
#2
\end{lemma}
\expandafter\def\csname #1\endcsname{
\setcounter{#1temp}{\value{theorem}}
\setcounter{theorem}{\value{#1}}
\begin{lemma}
#2
\end{lemma}
\setcounter{theorem}{\value{#1temp}}
}
\vspace{-1.5em}
}

\newcommand{\lemmapostponedwname}[3]
{
\newcounter{#1}
\newcounter{#1temp}
\setcounter{#1}{\value{theorem}}
\begin{lemma}[#2]
\label{#1}
#3
\end{lemma}
\expandafter\def\csname #1\endcsname{
\setcounter{#1temp}{\value{theorem}}
\setcounter{theorem}{\value{#1}}
\begin{lemma}[#2]
#3
\end{lemma}
\setcounter{theorem}{\value{#1temp}}
}
\vspace{-1.5em}
}

\newcommand{\lemmaproofpostponed}[3]
{
\newcounter{#1}
\newcounter{#1temp}
\setcounter{#1}{\value{theorem}}
\begin{lemma}
\label{#1}
#2
\end{lemma}
\expandafter\def\csname #1\endcsname{
\setcounter{#1temp}{\value{theorem}}
\setcounter{theorem}{\value{#1}}
\begin{lemma}
#2
\end{lemma}
\setcounter{theorem}{\value{#1temp}}
\begin{proofL}{#1}
#3
\end{proofL}
}
}

\newcommand{\corollaryproofpostponed}[3]
{
\newcounter{#1}
\newcounter{#1temp}
\setcounter{#1}{\value{theorem}}
\begin{corollary}
\label{#1}
#2
\end{corollary}
\expandafter\def\csname #1\endcsname{
\setcounter{#1temp}{\value{theorem}}
\setcounter{theorem}{\value{#1}}
\begin{corollary}
#2
\end{corollary}
\setcounter{theorem}{\value{#1temp}}
\begin{proofC}
#3
\renewcommand{\toto}{#1}
\end{proofC}
}
}

\renewcommand{\paragraph}[1]{\vspace{0.1cm} \noindent\textbf{#1}\hspace{0.15em}}

\newcommand{\toto}{xxx}


\SetFuncSty{textsf}
\SetFuncSty{small}

\newtheorem{definition}{Definition} 


\usepackage{algorithm2e}

\newcommand{\E}{\mathop{\mathbb{E}}}

\newcommand{\Exp}{\mathrm{Exp}}
\newcommand{\rank}{\mathrm{rank}}
\newcommand{\Poi}{\mathrm{Poi}}
\newcommand{\poly}{\mathrm{poly}}

\newcommand{\del}{\textsc{deleteMin}}

\newtheorem*{rep@theorem}{\rep@title}
\newcommand{\newreptheorem}[2]{%
\newenvironment{rep#1}[1]{%
 \def\rep@title{#2 \ref{##1}}%
 \begin{rep@theorem}}%
 {\end{rep@theorem}}}
\makeatother

\newreptheorem{theorem}{Theorem}
\newreptheorem{lemma}{Lemma}
\newreptheorem{claim}{Claim}
\newreptheorem{invariant}{Invariant}
\newreptheorem{corollary}{Corollary}

\usepackage{hyperref}
\hypersetup{
    unicode=false,          
    colorlinks=true,        
    linkcolor=red,          
    citecolor=green,        
    filecolor=magenta,      
    urlcolor=cyan           
}
\usepackage[capitalize]{cleveref}

\DeclareMathOperator*{\maxrank}{\mathop{\max \mathrm{rank}}}

\begin{document}

\title{The Power of Choice in Priority Scheduling}

\author{Dan Alistarh\\IST Austria \& ETH Zurich\\\texttt{dan.alistarh@inf.ethz.ch}
\and
Justin Kopinsky\\Massachusetts Institute of Technology\\\texttt{jkopin@mit.edu}
\and
Jerry Li\\Massachusetts Institute of Technology\\\texttt{jerryzli@mit.edu}
\and
Giorgi Nadiradze\\ETH Zurich\\\texttt{giorgi.nadiradze@inf.ethz.ch}}

\maketitle

\date{}

\begin{abstract}
Consider the following random process: we are given $n$ queues, into which elements of increasing labels are inserted uniformly at random. To remove an element, we pick two queues at random, and remove the element of lower label (higher priority) among the two. The \emph{cost} of a removal is the \emph{rank} of the  label removed, among labels still present in any of the queues, that is, the distance from the optimal choice at each step. Variants of this strategy are prevalent in state-of-the-art concurrent priority queue implementations. Nonetheless, it is not known whether such implementations provide any rank guarantees, even in a sequential model.

 We answer this question, showing that this strategy provides surprisingly strong guarantees: Although the single-choice process, where we always insert and remove from a single randomly chosen queue, has degrading cost, going to infinity as we increase the number of steps, in the two choice process, the expected rank of a removed element is $O( n )$ while the expected worst-case cost is $O( n \log n )$. These bounds are tight, and hold irrespective of the number of steps for which we run the process.   
 The argument is based on a new technical connection between ``heavily loaded" balls-into-bins processes and priority scheduling. 
 Our analytic results inspire a new concurrent priority queue implementation, which improves upon the state of the art in terms of practical performance. 
\end{abstract}

%
%
 

\section{Introduction} 
\setcounter{page}{1}

The last two decades have seen tremendous progress in the area of concurrent data structures, to the point where scalable versions of many classic data structures are now known, including lists~\cite{Harris, LeapList}, skip-lists~\cite{skiplist2, LazySkiplist, DougSkiplist}, hash-tables~\cite{Michael, hopscotch,java-hash-map,shalev-shavit}, or search trees~\cite{CBTree, Lewandoski, Brown14}.  
In this context, significant research attention has been given to concurrent \emph{producer-consumer} data structures, such as queues, stacks, or priority queues, which are critical in concurrent systems. 
On the one hand, several elegant designs, e.g.~\cite{Morrison13, Fatourou12, Fatourou14, Morrison16}, have pushed the these data structures to the limits of scalability on current processor architectures. 
On the other, impossibility results~\cite{Alistarh14, EllenHS12} show that such data structures providing strong ordering semantics also have inherent sequential bottlenecks which limit their scalability.

A popular approach for circumventing these impossibility results is to \emph{relax} data structure semantics~\cite{ShavitCACM}. 
The first instance of this strategy is probably the relaxed PRAM priority queue structure of Karp and Zhang~\cite{KarZha93}, designed for efficient parallel execution of branch-and-bound programs. 
A flurry of subsequent work, e.g.~\cite{San98, Basin11, LotanShavit, linden2013skiplist, klsm, Nguyen13, Haas, SprayList, MQ}, has shown that relaxed semantics can yield extremely good practical results, and that it can be very useful in the context of higher-level constructs, such as concurrent graph algorithms~\cite{Nguyen13}.\footnote{In practice, it is often possible to offset the cost of \emph{priority inversions} by performing \emph{additional work}. For instance, in Dijkstra's single-source shortest-paths algorithm, priority inversions can be offset by nodes being relaxed multiple times.} 

Notably, the current state-of-the-art concurrent priority queue is obtained by the following 
\emph{MultiQueue} strategy~\cite{MQ, Wimmer}. We start from $n$ queues, each protected by a lock. 
To \emph{insert} an element, a processor picks one of the $n$ queues uniformly at random, locks it, and inserts into it. 
To \emph{remove} an element, the processor picks \emph{two queues} uniformly at random, locks them, and removes the element of \emph{higher priority} among their two top elements. 
 Extensive testing~\cite{MQ, Wimmer} has shown that this natural strategy can provide state-of-the-art throughput, and that the average number of priority inversions is relatively low.  
Further variants of this strategy are used to implement general priority schedulers~\cite{Nguyen13} and relaxed concurrent queues~\cite{Haas}. 

Despite their significant practical success, randomized relaxed concurrent data structures still lack a theoretical foundation: it is not known whether distributed strategies such as the MultiQueue can provide any \emph{guarantees} in terms of their inversion cost, or whether their performance can still be improved. 

\paragraph{Contribution.} In this paper, we take a first step towards addressing this challenge. We focus on the analysis of the following sequential process, inspired by the MultiQueue strategy: 
we are given $n$ queues, into which a large number of consecutively labeled elements are inserted initially, uniformly at random. 
To remove an element, we pick two queues at random, and remove the element of lower label on top of either queue. We are interested in the \emph{cost} of removals, defined as the \emph{rank} of the removed label among labels still present in any of the queues, i.e., the ``distance" from the optimal choice at each step, which would be the top element among all queues. 

Our main technical contribution is showing that this process provides surprisingly strong rank guarantees: 
for any time $t \geq 0$ in the execution of the process, the expected rank of an element removed at time $t$ is $O( n )$, while the expected worst-case rank removed at a step is $O( n \log n )$. These bounds are asymptotically tight, and hold for arbitrarily large $t$. Our analysis generalizes to a $(1 + \beta)$ extension of the process, where the algorithm deletes from the higher priority among two random queues with probability $0 < \beta < 1$, and from a single randomly chosen queue with probability $(1 - \beta)$. It also extends to show that the process is \emph{robust to bias} in the insertion distribution towards some bins by a constant factor $\gamma \in (0, 1)$. 
By contrast, we show that the strategy which always removes from \emph{a single} randomly chosen queue \emph{diverges}, in the sense that its average rank guarantee evolves as $\Omega\left( \sqrt{t  n \log n} \right)$, for time $t \geq n \log n$.  
The $(1 + \beta)$ strategy can be extended to an efficient implementation, which improves upon the state-of-the-art in terms of practical performance. 

\paragraph{Technical Overview.} A tempting first approach at analysis is to reduce to classic ``power of two choices" processes, e.g.~\cite{ABKU, PTW15}, in which balls are always inserted into the least loaded of two randomly chosen bins. 
A simple reduction between these two processes exists for the case where labels are inserted  into the queues in \emph{round-robin} fashion (please see Section~\ref{sec:reduction-rr} for details). 
However, this reduction breaks if elements are inserted uniformly at random, as is the case in real systems. 
Another important difference from the classic process is that elements are \emph{labelled}, and so the state of the system at a given step is highly correlated with previous steps. 
For instance, given a queue $Q$ whose top element has label $\ell$, if we wish to characterize the probability that the next label on top of $Q$ is a specific value $\ell' > \ell$, 
we need to know whether the history of elements examined by the algorithm (in other queues)  contains element $\ell'$. 
Such correlations make a standard step-by-step analysis extremely challenging. 

We circumvent these issues by relating the original labelled process to a continuous \emph{exponential process}, which reduces correlations by replacing integer labels with real-valued labels. 
In this process, on every ``insertion," we insert a new label whose value equals that of the latest inserted label in that bin (or $0$ for the first insertion), plus \emph{an exponential random variable} of mean $n$. 
Intuitively, we wish to simulate the integer label insertion process with the same mean $n$, while removing correlations but using  continuous label values to prevent label clashes and reduce correlations.  
Once this insertion process has ended, we replace each (real-valued) label with its \emph{rank} among all (real-valued) labels. 

We relate these two processes by the following key claim: \emph{the rank distributions for the original process and for the exponential process are identical}. 
More precisely, for any queue $i$ and rank $j$, the event that, after insertion, the label with rank $j$ is located in queue $i$ has the same probability in the original process and in the exponential process (and all such events are independent in both processes).

Given this fact, we can  couple the two processes as follows. We first couple the insertion steps such that each bin contains the same sequence of ranks in both processes. 
We then couple the removal steps so that they make exactly the same sequence of random choices. Under this coupling, the two processes will pay the same rank cost. 
Thus, it will suffice to characterize the rank cost of the exponential process under our removal scheme. 

This is achieved via two steps. 
The first is a potential argument which carefully characterizes the average value of labels on top of the queues, and the maximum deviation from the average by a queue,  at any time $t > 0$. 
This part of the proof is very technical, and generalizes an analytic approach developed for weighted balls-into-bins processes by Peres, Talwar and Wieder~\cite{PTW15}. 
Specifically, if we let $x_i(t)$ be the difference between the label on top of queue $i$ at time $t$ and the mean top label across all queues, 
then the \emph{total potential} of the system at time $t$ is defined as $\sum_{i = 1}^{n} \left[ \exp( \alpha x_i / n ) + \exp( - \alpha x_i / n ) \right]$. 

The key claim in the argument is showing that the expected value of this potential is bounded by $O( n )$ for any step $t$. This is done by a careful technical analysis of the expected change in potential at a step, 
which proves that the potential has supermartingale-like behavior: that is, it tends to decrease in expectation once it surpasses the $O(n)$ threshold. 
It is interesting to note that neither of the two exponential factors in the definition of the potential satisfies this bounded increase property in isolation, yet their sum can be bounded in this way.  

The $O(n)$ bound on the total potential provides a strong handle on the maximum deviation from the mean in the exponential process. 
The last step of our argument builds on this characterization to show that, since label values cannot stray too far either above or below the mean, the \emph{ranks} of these values among all values on top of queues cannot be too large. In particular, the average rank of elements on top of queues must be $O( n )$, and the maximum rank is $O( n \log n )$. The rank equivalence theorem implies our main claim. 

The analytic framework we sketched above is quite general. It can be extended to showing that this implementation strategy is robust to insertion bias, i.e. that it guarantees similar rank bounds even if the insertion process is biased (within some constant $\gamma \ll 1$) towards some of the bins. 
Further, it also applies to the $(1 + \beta)$ variant of the process, where each removal considers two randomly chosen queues with probability $\beta < 1$, and a single queue otherwise. 
Formally, the expected cost at a step is $O( n / \beta^2)$, and the expected maximum cost at a step is $O( n \log n / \beta)$. 

\paragraph{Applications.} 
The process considered above is sequential, and has FIFO semantics, since we are inserting labels in strictly increasing order, and assumes that all labels are inserted initially. We relax each of these assumptions, and discuss sufficient conditions for extending the analysis to concurrent and general priority insertions. 
To validate the practicality of our results, we implemented a variant of the $(1 + \beta)$ priority queue in C++. Empirical results, given in Section~\ref{appendix:tests}, show that it improves upon the state-of-the-art implementation of Rihani et al.~\cite{MQ} by $20\%$ in terms of throughput, while providing similar rank guarantees. Moreover, when used in Dijkstra's algorithm,  $(1 + \beta)$ priority queue reduces running time by $10\%$ over the original MultiQueue, and by $40\%$ over a deterministic implementation~\cite{klsm}. This suggests that the decrease in rank guarantees is well compensated by improved scalability.


\section{Related Work}

The first instance of a distributed data structure implementation using a similar approach to the one we consider is probably the parallel branch-and-bound framework by Karp and Zhang~\cite{KarZha93}. 
In their construction, each processor is assigned a queue, and elements are inserted randomly into one of the queues. Each processor removes from its own queue. 
This critically relies on the synchronicity of PRAM processors to achieve bounds on average rank and maximum rank difference. 
It is easy to see that processor delays can cause the rank difference to become unbounded. 
Quantitative relaxations of  concurrent data structures have been previously formalized by Henzinger et al.~\cite{Henzinger}, but only for \emph{deterministic} relaxations. 

\paragraph{The MultiQueue.} The strategy we consider is based on the  MultiQueue data structure of Rihani et al.~\cite{MQ}. 
They start from $n$ sequential priority queues, each of which is protected by a lock. 
To \emph{insert} an element, each processor repeatedly picks a queue at random, and tries to lock it. 
When successful, the processor inserts the element into the queue, and unlocks it. 
Otherwise, the process re-tries with a new randomly chosen lock. 

To \emph{delete} an element, the algorithm samples two queues uniformly, and reads the top element from both.
It then locks the queue whose top element had higher priority, removes that element, and returns it.
If the algorithm failed to obtain the lock on the chosen queue, then it restarts the whole process. 
(They also consider variations of this strategy, which are essentially equivalent from the point of view of the sequential version.)

In the paper, the authors show via a basic balls-into-bins argument that, initially, before any elements are removed, the expected rank cost is $O( n )$, and the max cost is $O( n \log n )$ with high probability. 
Our argument applies to a sequential version of the concurrent process described above. The argument is significantly more general, since it applies to any time in the execution of the process. 
Variants of this approach have also been considered in other contexts, for instance for relaxed queues~\cite{Haas} and priority schedulers~\cite{Nguyen13}. Designing efficient concurrent priority queues is a very active research area, see e.g.~\cite{Basin11, LotanShavit, klsm, SprayList} for recent examples, and~\cite{SprayList} for a survey. 

\paragraph{Balls-into-bins Processes.} 
In the classic two-choice balanced allocation process, at each step, one new ball is to be inserted into one of $n$ bins; the ball is inserted into the less loaded among two random choices~\cite{ABKU, Mitz01}. 
In this setting, it is known that the most loaded bin is at most $O( \log \log n )$ above the average.  
The literature studying extensions of this process is extremely vast; we direct the reader to~\cite{Richa01} for a survey. 
Considerable effort has been dedicated to understanding guarantees in the ``heavily-loaded" case, where the number of insertion steps is unbounded~\cite{Berenbrink00, PTW15}, 
and in the ``weighted" case, in which ball weights come from a probability distribution~\cite{TW07, Berenbrink08}. 
In a tour-de-force, Peres, Talwar, and Wieder~\cite{PTW15} gave a potential argument characterizing a general form of the heavily-loaded, weighted process on graphs. 
The second step in our proof, which characterizes the deviation of weights from the mean in the exponential process, builds on their approach. 
Specifically, we use a similar potential function, and the parts of the analysis are similar. 
However, we generalize their approach to the case of biased insert distributions, and our argument diverges in several technical points: in particular, the potential analysis under unbalanced conditions is different. 
The first and third steps of our argument are completely new, as we consider a more complex \emph{labelled} allocation process.

\section{Process Definition} 
\label{sec:def}

%


We are given with $n$ priority queues, labeled $1, \ldots, n$. 
To insert,  we choose queue $i$ with probability $\pi_i \in (0, 1)$, and insert into that queue.
We assume that $\sum_{i = 1}^n \pi_i = 1$, and that the bias is bounded, i.e. there exists a constant $\gamma \in (0, 1)$ such that, for any queue $i$, $1 - \gamma \leq \frac{1}{n\pi_i} \leq 1 + \gamma$. 
To remove, with probability $0 < \beta \leq 1$, we pick two queues u.a.r., and remove the element of minimum label among the two choices, and with probability $(1 - \beta)$ remove from a random queue. 
We call such a process a $(1 + \beta)$-sequential process.
When omit $\beta$ when unimportant or clear from context.

At any point in the execution, we define the \emph{rank} of any element to be the number of elements currently in the system which have lower label than it (including itself), so the minimal rank is $1$. 
Our goal is to show that the ranks of the elements returned by the random process are small, throughout the entire execution.
We will restrict our attention to executions which never inspect empty queues, and no priority inversions on inserts are visible to the removal process.
Formally:
\begin{definition}
An execution of the sequential process is \emph{prefixed} if except with negligible probability (i.e. probability less than $n^{-\omega (1)}$), the sequential process never inspects an empty queue in a remove, and no remove sees an insert with lower priority than one that is inserted later in the same queue. 
\end{definition}
There are many natural types of executions which are prefixed. For instance, a common strategy in practice is to insert a large enough ``buffer" of elements initially, so as to minimize the likelihood of examining empty queues. 
More formally, since in each iteration we touch a queue with probability at most $\frac{2}{n} + O(\frac{1}{n^2})$ and only remove one element at a time, by Bernstein's inequality, any execution of length $T$ whose first $T / 2 + \sqrt{n \cdot \omega (\log n)}$ operations are inserts, and which contains no inserts after this time, or which only inserts larger labels after this point, will be prefixed.

For the rest of the paper, we will tacitly assume that no remove inspects any empty queue in the sequential process, and that no priority inversions are visible to the removal process, and condition on the event that this does not happen.
By Markov's inequality, as long as we only consider prefixed executions, then for any sequence of operations of length $\poly (n)$, this will change the max rank or average rank of the operations by at most a subconstant factor.

\section{Analysis of the Sequential Process}
The goal of this section is to prove the following theorem:
\begin{theorem}[see Corollary \ref{cor:max} and Corollary \ref{cor:avg}]
\label{thm:main}
Fix a bias bound $\gamma$.
In any prefixed $(1 + \beta)$ sequential process, for $\beta = \Omega(\gamma)$, and for any time $t > 0$, the max rank of any element on top of a queue at time $t$ is at most $O(\frac{1}{\beta} (n \log n + n \log \frac{1}{\beta}))$, and the average rank at time $t$ is at most $O(n / \beta^2)$.
\end{theorem}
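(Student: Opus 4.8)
The argument has three stages, following the technical overview. \emph{Stage 1: rank equivalence and coupling.} I would first introduce the \emph{exponential process}: insertions follow the same $(1+\beta)$ rule, but when a bin receives a new element its real-valued label is set to the largest label currently in that bin (or $0$ for the first insertion) plus an independent $\Exp(n)$ variable; after the insertion phase, each real label is replaced by its rank among all labels. The goal is the \emph{rank equivalence claim}: for every bin $i$ and rank $j$, the event ``the element of rank $j$ lies in bin $i$ after insertions'' has the same probability in the original labelled process as in the exponential process, and these events are jointly distributed identically. This holds because, conditioned on the number of insertions into each bin, both processes assign ranks to bins by a uniformly random arrangement of the corresponding multiset --- in the labelled process since labels arrive in increasing order, and in the exponential process since memorylessness of $\Exp(n)$ makes the next-smallest real label equally likely to lie in any bin that still has unranked elements --- while the vector of per-bin insertion counts has the same (multinomial) law in both. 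Given the claim, I couple the two processes so that each bin holds the same sequence of ranks after insertions and every removal makes the same random choices (and tie-breaks); then the rank removed at each step is identical in the two processes, so it suffices to bound the rank cost of the exponential process.

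\emph{Stage 2: the potential bound.} For the exponential process let $x_i(t)$ be the gap between the top label of bin $i$ at time $t$ and the average top label, and define $\Gamma(t) = \sum_{i=1}^n \bigl( e^{\alpha x_i(t)/n} + e^{-\alpha x_i(t)/n} \bigr)$, where $\alpha = \Theta(\beta)$ is chosen small enough relative to $\gamma$; this is where the hypothesis $\beta = \Omega(\gamma)$ is used. The crux is the claim $\E[\Gamma(t)] = O(n)$ for all $t$. I would prove it by a drift argument: conditioning on $\mathcal F_t$, bound $\E[\Gamma(t+1) - \Gamma(t) \mid \mathcal F_t]$ by separately treating an insertion (which raises one $x_i$ by an $\Exp(n)$ increment, contributing only a bounded multiplicative factor to one term of the upper potential $\Phi = \sum_i e^{\alpha x_i/n}$ once $\alpha$ is a small constant), a two-choice removal (probability $\beta$: deleting the smaller of two random top labels raises a bin with probability at least proportional to that bin's normalized rank, producing a Peres--Talwar--Wieder-type contraction of $\Phi$), and a single-choice removal (probability $1-\beta$: no contraction, only a bounded perturbation). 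As flagged in the overview, neither $\Phi$ nor the lower potential $\Psi = \sum_i e^{-\alpha x_i/n}$ decreases in expectation on its own when large, but their sum $\Gamma = \Phi + \Psi$ does: whenever $\Gamma(t) \ge C n/\beta$ one obtains $\E[\Gamma(t+1) \mid \mathcal F_t] \le \Gamma(t) - \Omega(\beta\,\Gamma(t)/n) + O(1)$, and iterating this multiplicative drift yields $\E[\Gamma(t)] = O(n)$. This stage is the main obstacle: it needs a careful case analysis, control of the moment generating function of the unbounded $\Exp(n)$ weights, and an adaptation of the Peres--Talwar--Wieder potential computation both to biased insertions and to the one/two-choice mixture.

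\emph{Stage 3: from the potential to the rank bounds.} Since $\Gamma(t) \ge e^{\alpha |x_i(t)|/n}$ for each $i$, we have $\max_i |x_i(t)| \le (n/\alpha)\ln\Gamma(t)$; combined with $\E[\Gamma(t)] = O(n)$ and Markov's inequality this gives that, except with probability $n^{-\omega(1)}$ over any $\poly(n)$-length execution, every top label lies in a window of width $W = O\bigl(\tfrac1\beta(n\log n + n\log\tfrac1\beta)\bigr)$ around the mean, the $\log\tfrac1\beta$ term coming from $\alpha = \Theta(\beta)$. The rank of the top of bin $i$ equals the total number of elements, over all bins, with label at most that top label; since consecutive labels within a bin differ by independent $\Exp(n)$ variables, a Chernoff bound on sums of exponentials shows that any bin has only $O(W/n + \log n)$ elements inside a width-$W$ window with high probability, so summing over $n$ bins the max rank is $O(W + n\log n) = O\bigl(\tfrac1\beta(n\log n + n\log\tfrac1\beta)\bigr)$. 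For the average rank I would use $\E[\Gamma(t)] = O(n)$ directly rather than the high-probability window: the expected rank of the element removed at step $t$ is controlled by the expected number of top labels below it, which for a minimum-of-two removal is governed by a convex function of the $x_i$ comparable to $\Gamma$, and accounting for the $\beta$-fraction of steps that enjoy the two-choice improvement yields $O(n/\beta^2)$. Undoing the Stage-1 coupling transfers both bounds to the original $(1+\beta)$ sequential process, completing the proof.
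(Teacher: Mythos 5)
Your architecture is the paper's: rank-equivalence between the labelled and exponential processes plus a coupling, a Peres--Talwar--Wieder-style potential $\Gamma=\Phi+\Psi$ with $\E[\Gamma(t)]=O(n)$ proved by the observation that only the sum contracts, and then a conversion of the weight-deviation bound into rank bounds. However, there are concrete gaps. First, Stage~1 as written only covers uniform insertions: you give every bin $\Exp$ increments of mean $n$ and argue the next-smallest label is ``equally likely to lie in any bin,'' but the theorem explicitly allows bias $\gamma$, so bin $i$ must receive increments of mean $1/\pi_i$ and the element of each rank lands in bin $i$ with probability $\pi_i$ (not uniformly); the paper's memorylessness computation is exactly this biased statement, and your conditional-uniform-arrangement argument is false once the $\pi_i$ differ. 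Second, in Stage~3 the claim that Markov's inequality applied to $\E[\Gamma(t)]=O(n)$ gives a width-$W$ window for \emph{all} tops except with probability $n^{-\omega(1)}$ does not follow: Markov only yields polynomially small failure probability at that width. This is harmless only because the theorem needs an expectation bound, which follows from Jensen ($\E[\log\Gamma]\le\log\E[\Gamma]$) as in the paper's Lemma~\ref{lem:Emaxlabel}; the Chernoff/whp detour is neither needed nor justified as stated.

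The more serious gap is the average-rank bound. The rank of the removed element counts \emph{all} elements with smaller label, i.e.\ the buried elements of every bin whose top lies below the removed label, not ``the expected number of top labels below it,'' and your appeal to ``a convex function of the $x_i$ comparable to $\Gamma$'' plus ``accounting for the $\beta$-fraction of two-choice steps'' is not an argument. In the paper the $O(n/\beta^2)$ term has nothing to do with the one/two-choice mixture at the removal step: the removed element is simply dominated in expectation by a uniformly chosen top, and the work is in bounding the average rank of a top, which requires striping the region below $\mu$ into unit intervals, counting buried labels per bin via the Poisson property of exponential gaps (with care because the interval endpoints depend on $\mu$, costing an extra additive $n$), and summing the tail $\sum_k k\,e^{-\alpha k}=O(1/\alpha^2)$ against the bound $\E[b_{<-s}]\le n e^{-\alpha s}$ from $\Psi$; this is where $1/\alpha^2=O(1/\beta^2)$ comes from. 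As a minor point, the drift you assert, $\E[\Delta\Gamma\mid\mathcal F_t]\le-\Omega(\beta\,\Gamma/n)+O(1)$, is stronger than what the potential analysis gives (the contraction rate is $\Theta(\hat\alpha\epsilon/n)=\Theta(\beta^2/n)$), and under the prefixed assumption there are no interleaved insertion events to handle in the drift (insertions do not move tops), though neither issue affects the stated conclusions.
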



\label{sec:analysis-general}
\paragraph{The Exponential Process.}
For the purposes of analysis, it is useful to consider an alternative process, which we will call the \emph{exponential process}, in which each bin independently generates real-valued labels by starting from $0$, and adding an exponentially-distributed random variable with parameter $\lambda_i = 1/\pi_i$ to its previous label. 
More precisely, if $w_i(t)$ is the label present on top of bin $i$ at time $t$, then the value of the next label is $w_i(t + 1) = w_i(t) + \exp(1 / \pi_i)$.
Once we have enqueued all the items we wanted to enqueue, we proceed to remove items by the two-choice rule: we always pick the element of minimum rank (label value) among two random choices. 
At any step, and for any $v \in \mathbb{R}$, we define $\rank (v)$ to be the number of elements currently in the system with label at most $v$.
At each step, we pay a cost equal to the rank of the element we just removed.

\paragraph{Proof Strategy.} 
The argument proceeds in three steps.  
First, we show that, perhaps surprisingly, after all insertion steps have been performed, the rank distribution of the exponential process is \emph{the same} as the rank distribution of the original process (Section~\ref{sec:equivalence}). 
With this in place, we will perform the following coupling between the two processes:

Given an arbitrary number $M$ of balls to be inserted into the $n$ bins, we first generate $M$ total weights across the $n$ queues in the exponential process, so that each bin has exactly the same number of elements in both processes. 
We then replace each (real-valued) generated weight with its \emph{rank} among all weights in the system. 
As we will see, for any rank $j$ and queue $i$, the probability that the ball with rank $j$ is in bin $i$ is exactly the same as in the simple process, where queue $i$ has probability of insertion $\pi_i$. 
More precisely, if we fix an increasing  sequence of ranks $r_1, r_2, \ldots$, the probability that bin $i$ contains this exact sequence of ranks is the same across the two processes. 
Hence, we can couple the two processes to generate the same sequence of ranks in each bin. 

For the removal step, we couple the two processes such that, in every step, they generate exactly the same $\beta$ values and random choice indices. 
Since the ranks are the same, and the choices are the same, the two processes will remove from exactly the same bins at each step, and will pay the same rank cost in every step. 

Hence, it will suffice to bound the expected rank cost paid at a step in the exponential process. Since this is difficult to do directly, we will first aim to characterize the \emph{concentration} of the difference between the weight (value) on top of each bin and the average weight on top of bins, via a potential argument (Section~\ref{sec:potential}). 
This will show that relatively few values can stray far from the mean. In turn, this will imply a bound on the average rank removed (Section~\ref{sec:avgrank}). 

In the following, we use the terms ``queue" and ``bin" interchangeably. 

\subsection{Equivalence between Rank Distributions} 
\label{sec:equivalence} 

Let $\pi_i$ be the probability that a ball is inserted into bin $i$.  This section will be dedicated to showing that the  distribution of ranks in the exponential process where bin $i$ gets weights exponentially distributed with mean $1 / \pi_i$ is identical to the distribution of labels in the original process, where the $i$th bin is chosen for insertion with probability $\pi_i$.

\begin{theorem}
\label{lem:distributions}
	Let $I_{j \gets i}$ be the event that the label with rank $i$ is located in bin $j$, in either process. 
	Let $\Pr_e [ I_{j \gets i} ]$ be its probability in the exponential process, and $\Pr_o [ I_{j \gets i} ]$ be its probability in the original process. Then for both processes $I_{j\gets i}$ is independent from $I_{j'\gets i'}$ for all $i\neq i'$ and
$$
		\Pr_e [ I_{j \gets i}] = \Pr_o [ I_{j \gets i} ] = \pi_j.
$$
\end{theorem}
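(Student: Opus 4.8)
The plan is to establish a statement slightly stronger than Theorem~\ref{lem:distributions}: in \emph{both} processes, the map sending a rank $i \in \{1,\dots,M\}$ to the bin that, after all insertions, holds the label of rank $i$ is a sequence of i.i.d.\ bin-valued random variables, each equal to $j$ with probability $\pi_j$. This yields mutual (not merely pairwise) independence of the events $I_{j\gets i}$, which is what the bin-by-bin coupling in the sequel really uses, and it makes the two rank distributions visibly identical.

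For the \textbf{original process} this is immediate. All $M$ labels are inserted, in strictly increasing order, before any removal, so the elements present afterwards are exactly $\{1,\dots,M\}$ and the label of rank $i$ is the label $i$. Label $i$ was placed in bin $B_i$, where the $B_i$ are independent with $\Pr[B_i=j]=\pi_j$; hence $\Pr_o[I_{j\gets i}]=\pi_j$ and $\{I_{j\gets i}\}_i$ is independent.

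For the \textbf{exponential process}, first note that, as all labels come from continuous distributions, almost surely no two labels coincide and the labels form a locally finite set, so ``the $i$-th smallest label overall'' and hence all ranks are well defined; these first $M$ labels are the elements present when removals begin. Bin $j$'s labels are partial sums of i.i.d.\ exponential gaps of mean $1/\pi_j$, i.e.\ the points of an independent Poisson process of rate $\pi_j$, so their superposition is a Poisson process of rate $\sum_j \pi_j = 1$, each of whose points is independently colored $j$ with probability $\pi_j/\sum_k\pi_k=\pi_j$. Stated without reference to Poisson processes: letting $C_i$ be the bin of the $i$-th smallest label, I would prove by induction on $i$ that $C_1,C_2,\dots$ are i.i.d.\ with $\Pr[C_i=j]=\pi_j$. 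Conditioned on the times and colors of the first $i-1$ labels, the residual time until bin $j$'s next label is --- by memorylessness for $j\neq C_{i-1}$, and by a fresh draw for $j=C_{i-1}$ --- an independent exponential of mean $1/\pi_j$; the $i$-th label is contributed by the bin attaining the minimum of these $n$ independent exponentials, i.e.\ by bin $j$ with probability $\pi_j$, independently of the past. Therefore $\Pr_e[I_{j\gets i}]=\pi_j$ and $\{I_{j\gets i}\}_i$ is independent, exactly as for the original process.

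Combining the two, for any fixed increasing sequence of ranks the probability that they all fall into a prescribed bin is the same product of $\pi$'s in both processes --- which is exactly what lets one couple the processes to realize the same rank sequence in every bin; as a by-product $(N_1,\dots,N_n)$ is $\mathrm{Multinomial}(M;\pi_1,\dots,\pi_n)$ in both, so the coupling also matches the number of elements per bin. The step I expect to need real care is the memorylessness induction: one must make precise that, conditioned on the entire history through the $(i-1)$-st label (its time and all earlier times and colors), the vector of residual bin waiting times consists of independent $\mathrm{Exp}$ variables of means $1/\pi_j$ --- which is the strong Markov property / memorylessness applied at that (stopping) time. The rest is routine.
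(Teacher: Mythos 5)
Your proposal is correct and follows essentially the same route as the paper: condition on the history up to the $(i-1)$-st smallest label, use memorylessness to see that the residual waiting times of the $n$ bins are independent exponentials with means $1/\pi_j$, and conclude via the minimum-of-exponentials computation that the rank-$i$ label lies in bin $j$ with probability $\pi_j$, independently of earlier ranks. Your Poisson-superposition remark and the explicit induction giving full (mutual) independence are just a cleaner packaging of the same argument, which is indeed the form of independence the subsequent coupling uses.
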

\begin{proof}
That $\Pr_o  [ I_{j \gets i} ] = \pi_j$ and that $I_{j\gets i},I_{j'\gets i'}$ are independent for $i\neq i'$ in the original process both follow immediately by the construction of the sequential process.

We now require some notation.
Let $\ell(i)$ be the label with rank $i$ in the exponential process, and let $b(i)$ be the bin containing $\ell(i)$. 
We will employ the following \emph{memoryless} property of exponential random variables:
\begin{fact}
Let $X \sim \Exp (r)$.
Then, for all $s, t \geq 0$, we have $\Pr [X > s + t] = \Pr [X > s] \Pr [X > t]$.
\end{fact}

Fix an arbitrary $i \geq 1$, and let $L = \ell({i - 1})$ be the label of the $(i - 1)$th element. 
For each bin $i$, we isolate the two consecutive weights between which $L$ can be placed. More precisely, for each bin $1\leq j \leq n$, 
denote the smallest label \emph{larger} than $L$ in bin $j$ by $\ell_{j,>L}$ and let the largest label in $B_j$ \emph{smaller} than $L$ be $\ell_{j,<L}$. 
By construction of the exponential process, $\ell_{j,>L} = \ell_{j,<L} + X_j$ where $X_j$ is an exponentially distributed random variable with mean $1/\pi_j$. Furthermore, by assumption we have that $\ell_{j,>L} > L$, so $X_j > L - \ell_{j,<L}$. 
 Then by memoryless-ness, 
\begin{eqnarray*}
\Pr[\ell_{j,>L} > \ell_{j,<L} + s | \ell_{j,>L} > L] & = & \Pr[X_j > s | X_j > (L-\ell_{j,<L})]\\
& = &\Pr[X_j > (L-\ell_{j,<L}) + (\ell_{j,>L} - L) | X_j > (L-\ell_{j,<L})]\\
& = &\Pr[X_j > \ell_{j,>L} - L]
\end{eqnarray*}
Let $\Delta_j = \ell_{j,>L} - L$, observing by the above that $\Delta_j$ is distributed exponentially with mean $\pi_i$. Furthermore, since the $X_j$ are independent, the $\Delta_j$ are independent as well. Consider the event that $\ell_{j,>L}$ is the smallest label larger than $L$ in the system; that is $\ell(i) = \ell_{j,>L}$ and $b(i) = j$.
This occurs if $\Delta_j < \Delta_j'$ for all $j'\neq j$, i.e. if $\Delta_j$ is the smallest such value among all bins. 
In turn, this occurs with probability

\begin{eqnarray*}
	\int_{0}^{\infty} \Pr[ \Delta_j = t ] \Pi_{k \neq j} \Pr[ \Delta_k > t ] dt & = &	\int_{0}^{\infty} \pi_j \exp\left( - t \pi_j \right)  \Pi_{k \neq j} \exp\left( - t \pi_k \right) dt \\
	& = &\pi_j \int_{0}^{\infty} \exp\left( - t \sum_{k = 1}^n \pi_k \right) dt = \pi_j.
\end{eqnarray*}

\end{proof}

\subsection{Analysis of the Exponential Process}
\label{sec:potential}


In the previous section, we have shown that the rank distributions are the same at the end of the insert process. The coupling described in Section~\ref{sec:analysis-general}, by which we inspect the same queues in each removal steps,  
implies that two processes will produce the same cost in terms of average rank removed. 
Hence, we focus on bounding the \emph{rank} of the labels removed from the exponential process at each step $t$. 
Our strategy will be to first bound the deviation of the \emph{weight} on top of each queue from the average, at every step $t$, and then re-interpret the deviation in terms of rank cost.  

\paragraph{Notation.}
From this point on, we will assume for simplicity that, at the beginning of each step, queues are always ranked in \emph{increasing order of their top label}. If $p_i$ is the probability that we pick the $i$th ranked bin for a removal, and $\beta$ is the two-choice probability, then it is easy to see that the $(1 + \beta)$ process guarantees that

$$p_i =  (1 - \beta)\frac{1}{n} + \beta \left[\frac{2}{n} \left(1 - \frac{i - 1}{n} \right) - \frac{1}{n^2}\right].$$ 

\noindent Further, notice that, for any $1 \leq m \leq n$, we have, ignoring the negligible $O( 1 / n^2)$ factor, that $$\sum_{i = 1}^{m} p_i \simeq \frac{m}{n} \left(  1 + \beta - \frac{m}{n} \beta \right).$$

\noindent For any bin $j$ and time $t$, let $w_j(t)$ be the label on top of bin $j$ at time $t$, and let $x_j(t) = w_j(t) / n$ be the normalized label. 
Let $\mu(t) = \sum_{j = 1}^{n} x_j(t) / n$ be the average normalized label at time $t$ over the bins. 
Let $y_i(t) = x_i(t) - \mu (t)$, and let $\alpha < 1$ be a parameter we will fix later. Define  $$\Phi(t) = \sum_{j = 1}^n \exp\left( \alpha y_i(t) \right), \,\text{and}\, \Psi(t) = \sum_{j = 1}^n \exp\left( - \alpha y_i(t) \right).$$  

\noindent Finally, define the potential function 
$$ \Gamma(t) = \Phi(t) + \Psi(t).$$ 

\paragraph{Parameters and Constants.} Define $\epsilon = \frac{\beta}{16}$. Recall that the parameter $\gamma >0 $ is such that, 
for every $1 \leq i \leq n$, $\frac{1}{n\pi_i} \in [1 - \gamma, 1 + \gamma]$. 
Next, let $c \geq 2$ be a constant, and $\delta$ be a parameter such that 
\begin{eqnarray}
	1 + \delta = \frac{ 1 + \gamma + c \alpha \left( 1 + \gamma \right)^2} { 1 - \gamma - c \alpha \left( 1 + \gamma \right)^2  }. \label{eq:param-delta}
\end{eqnarray}
In the following, we assume that the parameters $\alpha, \beta,$ and $\gamma$ satisfy the inequality 
\begin{equation} \frac{\beta}{16} = \epsilon \geq \delta. \label{eq:param-beta}
\end{equation}
\noindent
We assume that the insertion bias $\gamma \leq 1 / 2$ is small, and hence this is satisfied by setting $\beta = \Omega(\gamma)$ and $\alpha = \Theta(\beta)$.

\noindent The rest of this section is dedicated to the proof of the following claim:
\begin{theorem}
\label{thm:beta}
	Let $\vec{p} = (p_1, p_2, \ldots, p_n)$ be the vector of probabilities, sorted in increasing order. 
	Let $\epsilon = \beta / 16$, and $c \geq 2$ be a small constant. 
	Let $\alpha$, $\beta$, $\gamma$, $\delta$ be parameters as given above, such that $\epsilon \geq \delta$. 
	Then there exists a constant $C(\epsilon) = \poly ( \frac{1}{\epsilon} )$ such that, for any time $t\geq 0$, we have $\E[ \Gamma(t) ] \leq  C(\epsilon) n$.
\end{theorem}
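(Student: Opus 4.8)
The plan is to run a potential (drift) argument in the spirit of Peres, Talwar and Wieder~\cite{PTW15}, adapted to the biased, weighted removal dynamics of the exponential process. The starting point is the observation that a single removal step is exactly one step of a weighted two-choice balls-into-bins process: when the top label of bin $i$ is removed, the new top label equals $w_i(t) + Z_i$ with $Z_i \sim \Exp(1/\pi_i)$ independent of the past, so in normalized coordinates $x_i$ increases by a weight $z$ with mean $m_i = 1/(n\pi_i) \in [1-\gamma, 1+\gamma]$ and moment generating function $\E[e^{sz}] = (1 - s m_i)^{-1}$ for $s < 1/m_i$. Since $\mu(t) = \frac1n\sum_j x_j(t)$, adding $z$ to $x_i$ updates $y_i \mapsto y_i + (1 - \tfrac1n)z$ and $y_j \mapsto y_j - \tfrac1n z$ for $j \neq i$. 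The bin whose weight is incremented is the bin of rank $i$ (bins sorted in increasing order of top label), picked with probability $p_i$, which is decreasing in $i$; this is the self-balancing effect, quantified by $\sum_{i=1}^m p_i \simeq \tfrac mn(1 + \beta - \tfrac mn\beta)$.

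The next step is to compute the one-step drift of the two halves of the potential, conditioned on the configuration $\mathcal{F}_t$ and on the picked bin $i$. A direct expansion gives
\[
\E[\Phi(t+1) - \Phi(t)\mid\mathcal{F}_t, i] = e^{\alpha y_i}\bigl(\E[e^{\alpha(1-1/n)z}] - \E[e^{-\alpha z/n}]\bigr) + \Phi(t)\bigl(\E[e^{-\alpha z/n}] - 1\bigr),
\]
and, symmetrically, $\E[\Psi(t+1) - \Psi(t)\mid\mathcal{F}_t, i] = e^{-\alpha y_i}\bigl(\E[e^{-\alpha(1-1/n)z}] - \E[e^{\alpha z/n}]\bigr) + \Psi(t)\bigl(\E[e^{\alpha z/n}] - 1\bigr)$. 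With $\alpha = \Theta(\beta)$ small, $\E[e^{\pm\alpha z/n}] = 1 \pm \tfrac{\alpha m_i}{n} + O(\alpha^2/n^2)$, so the second (``mean-shift'') term contributes $\approx -\tfrac{\alpha m_i}{n}\Phi(t)$ to $\Phi$ (good) and $\approx +\tfrac{\alpha m_i}{n}\Psi(t)$ to $\Psi$ (bad), while the ``picked-bin'' term contributes $\approx +\alpha m_i e^{\alpha y_i}$ to $\Phi$ --- bad, but weighted by $e^{\alpha y_i}$, which is small precisely for the low-rank bins that are chosen with the largest $p_i$ --- and $\approx -\alpha m_i e^{-\alpha y_i}$ to $\Psi$ --- good, and large precisely for those same bins. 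Averaging over $i$ with weights $p_i$ and invoking the partial-sum estimate for $\vec p$, the picked-bin term on the ``large'' side gains a factor proportional to $\epsilon = \beta/16$ relative to the uniform-choice case. Summing the two halves, the goal is a drift inequality of the shape
\[
\E[\Gamma(t+1)\mid\mathcal{F}_t] \;\le\; \Bigl(1 - \frac{\alpha\epsilon}{n}\Bigr)\Gamma(t) + b,
\]
valid for \emph{every} configuration, with $b = \poly(1/\epsilon)$.

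Establishing this last inequality uniformly is the main obstacle, and the technical core of the argument (generalizing the heart of~\cite{PTW15}). The crux is that neither $\Phi$ nor $\Psi$ is individually a supermartingale above its $O(n)$ threshold: the mean-shift term is favorable for one and unfavorable for the other, and $-(\Phi - \Psi)$ can be positive, so the sum must be controlled by a case split according to which of $\Phi,\Psi$ dominates --- equivalently, whether most of the deviation lies above or below the mean --- playing the $\epsilon$-gain from the two-choice bias on the dominant side against the loss on the other. The insertion bias makes this strictly harder than in~\cite{PTW15}: the per-bin increment means $m_i$ now vary over $[1-\gamma,1+\gamma]$, so the mean-shift terms of $\Phi$ and $\Psi$ no longer exactly cancel, and the residual must be absorbed by the two-choice gain --- which is exactly where the hypothesis $\epsilon \ge \delta$ of~\eqref{eq:param-beta}, with $\delta$ as in~\eqref{eq:param-delta}, is used, together with $\beta = \Omega(\gamma)$ and $\alpha = \Theta(\beta)$ small enough that $\alpha m_i(1-1/n) < 1$ (so all the relevant MGFs are finite and the $O(\alpha^2)$ remainders are genuinely lower order). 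Bounding these remainders carefully across all cases is the bulk of the work.

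Finally, granting the drift inequality, the theorem follows by a standard iteration. At $t = 0$ the top label of bin $j$ is its first increment $\Exp(1/\pi_j)$, so $y_j(0)$ has exponential tails and $\E[\Gamma(0)] = O(n)$; taking total expectations in the drift inequality and unrolling,
\[
\E[\Gamma(t)] \;\le\; \Bigl(1 - \frac{\alpha\epsilon}{n}\Bigr)^{t}\E[\Gamma(0)] + b\sum_{s\ge 0}\Bigl(1 - \frac{\alpha\epsilon}{n}\Bigr)^{s} \;\le\; \E[\Gamma(0)] + \frac{bn}{\alpha\epsilon}.
\]
Since $\alpha = \Theta(\epsilon)$ and $b = \poly(1/\epsilon)$, the right-hand side is $C(\epsilon)\,n$ with $C(\epsilon) = \poly(1/\epsilon)$, which is the claim.
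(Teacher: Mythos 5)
Your skeleton follows the paper's route (one-step MGF drift bounds for $\Phi$ and $\Psi$ as in Lemmas~\ref{lem:phibound} and~\ref{lem:psibound}, the role of $\epsilon\ge\delta$ in absorbing the insertion bias, $\E[\Gamma(0)]=O(n)$ as in Lemma~\ref{lem:initial}, and unrolling a uniform drift inequality as in Lemma~\ref{lem:finalbound}), but the step you flag as ``the technical core'' is exactly where your plan has a genuine gap, and the mechanism you propose for it does not work as stated. You say the uniform inequality is obtained by a case split ``according to which of $\Phi,\Psi$ dominates,'' playing the two-choice gain on the dominant side against the loss on the other. However, whether $\Phi$ or $\Psi$ enjoys negative drift is governed by the quantile structure of the configuration, not by which potential is larger: $\Phi$ is guaranteed to decrease only when $y_{n/4}\le 0$ (Lemma~\ref{lem:phigood}) and $\Psi$ only when $y_{3n/4}\ge 0$ (Lemma~\ref{lem:psigood}). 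In an unbalanced configuration with $y_{n/4}>0$, the dominant potential can well be $\Phi$, and $\E[\Delta\Phi]$ can fail to be negative (this is the known obstruction from~\cite{PTW15} that the paper cites); meanwhile the guaranteed decrease of $\Psi$, of order $\frac{\hat\alpha\epsilon}{3n}\Psi$, cannot offset a possible increase of order $\frac{\hat\alpha\delta}{n}\Phi$ when $\Phi\gg\Psi$. So ``gain on the dominant side vs.\ loss on the other'' leaves these configurations uncovered, and the inequality $\E[\Gamma(t+1)\mid\mathcal{F}_t]\le(1-\frac{\alpha\epsilon}{n})\Gamma(t)+b$ cannot be concluded from the balancing you describe.

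The missing idea is the dichotomy of Lemmas~\ref{lem:phibad} and~\ref{lem:psibad}: if $y_{n/4}>0$ and $\E[\Delta\Phi]\ge-\frac{\epsilon\hat\alpha}{3n}\Phi$, then either $\Phi<\frac{\epsilon}{4}\Psi$ (so the $\Psi$-decrease from Lemma~\ref{lem:psigood} does compensate), or else the failure of decrease itself forces structure on the configuration: the drift bound pins $\Phi\le C(\epsilon)\,\Phi_{\le\lambda n}$, which combined with a Jensen lower bound $\Psi\ge\frac{n}{4}\exp(3\alpha B/n)$ (where $B=\sum_{y_i>0}y_i$) forces $\exp(\alpha B/n)=\poly(1/\epsilon)$, hence $\Gamma=O(\poly(1/\epsilon)\,n)$ already; in that branch the uniform drift inequality holds only because the additive constant absorbs the step (second sub-case of Case~2 in Lemma~\ref{lem:gammabound}), not because of any decrease. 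Without this ``either the other potential dominates or $\Gamma$ is already $O(n)$'' argument, your target inequality is asserted but not provable by the plan given; with it, the rest of your outline (initial bound and iteration) goes through exactly as in the paper.
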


\paragraph{Potential Argument Overview.} 
The argument proceeds as follows. We will begin by bounding the change of each potential function $\Psi$ and $\Phi$, at each queue in a step (Lemmas~\ref{lem:phibound} and~\ref{lem:psibound}).
We then use these bounds to show that, if not too many queues have weights above or below the mean ($y_{n / 4} \leq 0$ and $y_{3n/4} \geq 0$), then $\Phi$ and $\Psi$ respectively decrease in expectation (Lemmas~\ref{lem:phigood} and~\ref{lem:psigood}). 
Unfortunately, this does not necessarily hold in general configurations. 
However, we are able to show the following claim: if the configuration is unbalanced (e.g. $y_{n / 4} > 0$ ) and $\Phi$ does not decrease in expectation at a step, then either the symmetric potential $\Psi$ is large, and will decrease on average, or the global potential function $\Gamma$ must be in $O( n )$ (Lemma~\ref{lem:phibad}). We will prove a similar claim for $\Psi$. Putting everything together, we get that the global potential function $\Gamma$ always decreases in expectation once it exceeds the $O(n)$ threshold, which implies Theorem~\ref{thm:beta}.

\paragraph{Potential Change at Each Step.}
We begin by analyzing the expected change in potential for each queue from step to step. 
We first look at the change in the weight vector $\vec{y} = (y_1, y_2, \ldots, y_n)$. 
(Recall that we always re-order the queues in increasing order of weight at the beginning of each step.)
Below, let $\kappa_j$ be the cost increase if the bin of rank $j$ is chosen, which is an exponential random variable with mean $1 / \pi_j$. 
We have that, for every rank $i$, 

$$
y_i(t + 1) =  \begin{cases} 
y_i(t) + \kappa_i \left(1 - 1 / n\right)   , \mbox{ with probability } p_i \mbox{; i.e., the queue of rank $i$ is picked, and }\\
y_i(t) - \frac{\kappa_j }{n}, \mbox{if some other bin $j \neq i$ is chosen}. 
\end{cases}
$$

\paragraph{The Change in $\Phi$ at a Step.}
We prove the following bound on the expected  increase in $\Phi$ at a step. 

\begin{lemma}
	\label{lem:phibound}
	For any bin rank $i$, 
	$$\E \left[ \Delta \Phi \,|\, y(t) \right] \leq \sum_{i = 1}^n \hat{\alpha}\left(  (1 + \delta) p_i - \frac{1}{n}  \right) \exp\left( \alpha y_i(t) \right).$$
\end{lemma}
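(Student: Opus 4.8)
The plan is to compute $\E[\Delta\Phi \mid y(t)]$ directly from the transition rule for $\vec y$, bounding the contribution of each queue $i$ separately and then summing. Recall $\Phi(t) = \sum_i \exp(\alpha y_i(t))$ and that at a step the rank-$i$ queue either (i) receives a new top label, increasing $y_i$ by $\kappa_i(1-1/n)$ with probability $p_i$, where $\kappa_i \sim \Exp(1/\pi_i)$, or (ii) has its normalized value decreased by $\kappa_j/n$ when some other queue $j$ is picked. I would therefore write
\[
\E[\Delta\Phi \mid y(t)] = \sum_{i=1}^n \exp(\alpha y_i(t)) \cdot \E\bigl[\exp(\alpha \Delta y_i) - 1 \,\big|\, y(t)\bigr],
\]
and focus on bounding the per-queue factor $\E[\exp(\alpha\Delta y_i) - 1]$.

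First I would handle the two cases. In case (i), occurring with probability $p_i$, we get a factor $\E[\exp(\alpha \kappa_i (1-1/n))]$; since $\kappa_i$ is exponential with mean $1/\pi_i \le (1+\gamma)n$ and $\alpha(1-1/n)/n$ is small, the moment generating function of the exponential gives $\E[\exp(\alpha\kappa_i(1-1/n)/n)] = \frac{1}{1 - \alpha(1-1/n)\pi_i^{-1}/n \cdot \pi_i \cdot n \cdot \ldots}$ — more carefully, $\E[e^{s\kappa_i}] = (1 - s/\pi_i)^{-1}$ for $s < \pi_i$, and here the increment to $y_i$ is $\kappa_i(1-1/n)$ but $y_i = w_i/n$ so the exponent gains $\alpha\kappa_i(1-1/n)$, wait — one must be careful with the $1/n$ normalization. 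The point is that after expanding $e^x \le 1 + x + \text{(quadratic)}$ or using the exact MGF, the case-(i) contribution is roughly $p_i(1 + \alpha/(n\pi_i) + O(\alpha^2/(n\pi_i)^2)) - p_i$ on the relevant scale, i.e. approximately $p_i\alpha(1+\gamma)(1 + c\alpha(1+\gamma))/n$ for a suitable constant $c$. In case (ii), with probability $1-p_i$, $y_i$ decreases by $\kappa_j/n$ for the chosen $j$; taking expectation over $j$ and over $\kappa_j$, and using $e^{-x} \le 1 - x + x^2/2$, this contributes a negative term of order $-(1-p_i)\alpha \bar\mu/n$ where $\bar\mu$ reflects the average increment $\simeq 1/n$ per unit — overall roughly $-\alpha/n \cdot (1 - p_i)(1 - O(\gamma + \alpha))$. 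Combining, the net per-queue coefficient is $\hat\alpha\bigl((1+\delta)p_i - \frac1n\bigr)$ where $\hat\alpha = \Theta(\alpha)$ absorbs the $1/n$ scaling and the $(1\pm\gamma)$, $(1\pm c\alpha(1+\gamma)^2)$ corrections are exactly what the definition of $\delta$ in Eq.~\eqref{eq:param-delta} is engineered to capture: $1+\delta$ is the ratio of the worst-case upward push to the best-case downward pull.

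The main obstacle I anticipate is the bookkeeping of the $1/n$ factors and the precise form of the error terms, so that the ratio of the "bad" (increase) coefficient to the "good" (decrease) coefficient is controlled by exactly $1+\delta$ as defined in~\eqref{eq:param-delta}, rather than by some looser bound. Concretely, one needs: the MGF bound $(1 - s/\pi_i)^{-1} \le 1 + s/\pi_i + c(s/\pi_i)^2$ valid for $s/\pi_i$ bounded away from $1$ (which holds since $\alpha$ is a small constant and $s/\pi_i = \alpha(1-1/n)\pi_i \cdot$ stuff $= O(\alpha(1+\gamma))$); the bound $1/(n\pi_i) \le 1+\gamma$ on the upward side and $1/(n\pi_i) \ge 1-\gamma$ on the downward side; and care that in case (ii) we are summing over which other bin is picked — here the fact that $\sum_j p_j = 1$ and $\E[\kappa_j] = 1/\pi_j$ means the expected downward drift of $y_i$ per step is $\simeq (1-p_i)/n \cdot (\text{avg of } 1/(n\pi_j)) \approx (1-p_i)/n$, and the $y_i$ are measured relative to the moving mean $\mu(t)$, which is why only the $-1/n$ term (not a $-1$ term) appears. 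Once these estimates are assembled and the definition of $\delta$ is substituted, the stated inequality follows by summing over $i$; I would not grind through the constant $c$ explicitly beyond noting $c \ge 2$ suffices for the quadratic remainder of the exponential's MGF on the relevant range.
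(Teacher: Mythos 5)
Your proposal follows essentially the same route as the paper's proof: decompose $\E[\Delta\Phi\mid y(t)]$ queue by queue, condition on which bin is removed from, evaluate the exact moment generating function of the exponential increment, bound it by a quadratic Taylor expansion with a constant $c$, apply the bias bounds $1-\gamma\le\frac{1}{n\pi_i}\le 1+\gamma$, and read off $\hat\alpha$ and $1+\delta$ as the best-case downward coefficient and the ratio of worst-case upward to best-case downward push. So the structure is right; the issues are in the $1/n$ bookkeeping that you yourself flag as the obstacle, and as written the estimates do not yet yield the stated inequality.

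Concretely, two points need repair. First, in case (i) the exponent of $\Phi_i$ gains $\frac{\alpha}{n}\kappa_i\left(1-\frac1n\right)$ (since $y_i$ lives at the $w_i/n$ scale), so the MGF bound gives a coefficient $\frac{\alpha}{\pi_i n}\left(1-\frac1n\right)+c\left(\frac{\alpha}{\pi_i n}\right)^2\le \alpha(1+\gamma)+c\alpha^2(1+\gamma)^2$ multiplying $p_i$; there is no residual $1/n$ here, and $\hat\alpha=\alpha\left(1-\gamma-c\alpha(1+\gamma)^2\right)$ does not ``absorb'' any $1/n$. Your intermediate expression $p_i\alpha(1+\gamma)(1+c\alpha(1+\gamma))/n$ is off by a factor of $n$ and would not reproduce the $\hat\alpha(1+\delta)p_i$ term. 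Second, carrying out case (ii) as you describe gives a downward contribution summed only over $j\neq i$, i.e. roughly $-\frac{\hat\alpha}{n}(1-p_i)$, which is strictly weaker than the $-\frac{\hat\alpha}{n}$ appearing in the lemma. The paper closes this gap by extending the negative sum from $j\neq i$ to all $j=1,\dots,n$ and paying for the added $j=i$ term with the slack obtained by dropping the factor $\left(1-\frac1n\right)$ from the upward coefficient; some such step (or an explicit absorption of the $p_i/n=O(1/n^2)$ discrepancy into the quadratic error term) is needed to land exactly on $\hat\alpha\left((1+\delta)p_i-\frac1n\right)$.
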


\begin{proof}
Let $\Phi_i (t) = \exp\left( \alpha y_i(t) \right)$. We have two cases. If the bin is chosen for removal, then the change is:
\begin{eqnarray*}
	\Delta \Phi_i  := \Phi_i(t+ 1) - \Phi_i(t) = & \exp\left( \frac{\alpha}{n} ( y_i(t) + \kappa_i \left(1 - \frac{1}{n} \right) ) \right) - \exp\left(  \frac{\alpha}{n} y_i(t)  \right) \\ 
	= & \exp\left( \frac{\alpha}{n} y_i(t) \right) \left( \exp \left( \frac{\alpha}{n} \kappa_i \left(1 - \frac{1}{n} \right) \right) - 1 \right).
\end{eqnarray*}

\noindent 

\noindent Taking expectations with respect to the random choices made on insertion, that is, the value of $\kappa_j$, we have 
\begin{eqnarray*}
	\E_i \left[  \exp \left( \kappa_i \frac{\alpha}{n}  \left(1 - \frac{1}{n} \right) \right)  \right] \stackrel{(a)}{=} & \frac{\pi_i }{ \pi_i - \left( \frac{\alpha}{n}  ( 1 - \frac{1}{n} ) \right)} = \frac{1}{ 1 - \frac{\alpha}{\pi_i n} \left(1 - \frac{1}{n} \right) } \stackrel{(b)}{\leq} 1 + \frac{\alpha}{\pi_i n} \left(1 - \frac{1}{n} \right) + c \left( \frac{\alpha}{\pi_i n} \left(1 - \frac{1}{n} \right) \right)^2,
\end{eqnarray*}
\noindent for some constant $c > 1$. Step (a) follows from the observation that the expectation we wish to compute is the moment-generating function of the exponential distribution at $\frac{\alpha}{n} ( 1 - \frac{1}{n} )$, while (b) follows from the Taylor expansion.  (We slightly abused notation in step (a) by denoting with $\pi_i$ the insert probability for the $i$th ranked queue, according to the ranking in this step.)

\noindent The second step if if some other bin $j \neq i$ is chosen for removal, then the change is:
\begin{eqnarray*}
	\Delta \Phi_i = & \exp\left( \frac{\alpha}{n} ( y_i(t) - \kappa_j \frac{1}{n}  ) \right) - \exp\left(  \frac{\alpha}{n} y_i(t)  \right) 
	= \exp\left( \frac{\alpha}{n} y_i(t) \right) \left( \exp \left( - \frac{\alpha \kappa_j}{n^2}  \right)  - 1 \right).
\end{eqnarray*}

\noindent Again taking expectations with respect to the random choices made on insertion, i.e. the value of $\kappa_j$, we have 
\begin{eqnarray*}
	\E_i \left[  \exp \left( - \kappa_j \frac{\alpha}{n^2}  \right)  \right] = & \frac{ \pi_j }{ \pi_j + \frac{\alpha}{n^2}   } = 
	\frac{ 1 }{ 1 +   \frac{\alpha}{\pi_j n^2} } \leq 1 -  \frac{\alpha}{\pi_j n^2} + \left( \frac{\alpha}{\pi_j n^2} \right)^2 - \ldots \leq 1 -  \frac{\alpha}{\pi_j n^2} + \left( \frac{\alpha}{\pi_j n^2} \right)^2.
\end{eqnarray*}

\noindent Therefore, we have that 
\begin{eqnarray*}
	\E \left[ \Delta \Phi_i \right] / \Phi_i (t) & \leq & \left( 1 + \frac{\alpha}{\pi_i n} \left(1 - \frac{1}{n} \right) + c \left( \frac{\alpha}{\pi_i n} \left(1 - \frac{1}{n} \right) \right)^2 \right) p_i + 
	\sum_{j \neq i} \left(1 -  \frac{\alpha}{\pi_j n^2} + \left( \frac{\alpha}{\pi_j n^2} \right)^2\right) p_j - 1 \\
	& \leq & \left( \frac{\alpha}{\pi_i n} + c \left( \frac{\alpha}{\pi_i n} \right)^2 \right) p_i - 
	\sum_{j = 1}^n \left( \frac{\alpha}{\pi_j n^2} - c \left( \frac{\alpha}{\pi_j n^2} \right)^2\right) p_j  \\ 
	& \leq & \left( 1 + \gamma + c \alpha \left( 1 + \gamma \right)^2 \right) \alpha p_i - 
	\frac{\alpha}{n} \sum_{j = 1}^n \left( 1 - \gamma - c \alpha \left( 1 + \gamma \right)^2\right) p_j.
%
%
\end{eqnarray*}

\noindent If we denote for convenience 
$$  \hat{\alpha} = \alpha \left( 1 - \gamma - c \alpha \left( 1 + \gamma \right)^2  \right), \text{ and recall that } \delta := \frac{ 1 + \gamma + c \alpha \left( 1 + \gamma \right)^2} { 1 - \gamma - c \alpha \left( 1 + \gamma \right)^2  } - 1,$$

\noindent then we can rewrite this as 
\begin{eqnarray*}
	\E \left[ \Delta \Phi \,| \, y(t) \right] \leq  \sum_{i = 1}^n \hat{\alpha} \left( p_i (1 + \delta) - 
	\frac{1}{n} \right) \Phi_i(t).
%
%
\end{eqnarray*}
\end{proof}

%

\paragraph{The Change in $\Psi$.}
Using a symmetric argument, we can prove the following about the expected change in $\Psi$. 

\begin{replemma}{lem:psibound}
	$$\E \left[ \Delta \Psi \,|\, y(t) \right] \leq \sum_{i = 1}^n \hat{\alpha}\left(  (1 + \delta) \frac{1}{n} - p_i  \right) \exp\left( - \alpha y_i(t) \right).$$
\end{replemma}

%

\paragraph{Bounds under Balanced Conditions.}
Let us briefly stop to examine the bounds in the above Lemmas. The terms $\left( p_i (1 + \delta) - \frac{1}{n} \right)$ are \emph{decreasing} in $i$, and in fact become \emph{negative} as $i$ increases. (The exact index where this occurs is controlled by $\beta$ and $\delta$.) The  $\exp\left( \alpha y_i(t) \right)$ terms are \emph{increasing} in $i$. 
Bins whose weight is \emph{below} the mean (i.e., $y_i (t) \leq 0$) have a negligible effect on $\Phi$, since each of their contributions is at most $1$. At the same time, notice that the contribution of bins of large index $i$ will be negative. 
Hence, we can show that, if at least $n / 4$ bins have weights below average, then the value of $\Phi$ tends to \emph{decrease} on average. 
\begin{replemma}{lem:phigood}
If $y_{n / 4} \leq 0$, then we have that 
\begin{eqnarray*}
\E \left[  \Phi(t + 1) \,|\, y(t) \right] 
\leq \left( 1 -   \frac{\hat{\alpha} \epsilon}{3n} \right) \Phi(t) + 1.
\end{eqnarray*}
\end{replemma}

%

\noindent A similar claim holds for $\Psi$, under the condition that there are at least $n / 4$ bins with weight \emph{larger} than average.

\begin{replemma}{lem:psigood}
If $y_{3n / 4} \geq 0$, then we have that 
\begin{eqnarray*}
\E \left[ \Psi (t + 1) \,|\, y(t) \right] \leq &  \left(   1 -  \frac{\hat{\alpha} \epsilon}{3n}  \right) \Psi + 1.
\end{eqnarray*}
\end{replemma}

%
%
%

\paragraph{Bounds under Unbalanced Conditions.}
We now analyze unbalanced configurations, where there are either many bins whose weights are above average (e.g., $y_{n / 4} > 0$), or below average ($y_{3 n / 4} < 0$).  
The rationale we used to bound each potential function independently no longer applies. 
In particular, as shown in~\cite{PTW15}, we can have unbalanced settings where for example $\Phi$ does not decrease in expectation. 
However, we can show that,  one of two things must hold: 
either the other potential $\Psi$ is \emph{larger} and \emph{does decrease} in expectation, or the global potential $\Gamma$ is in $O(n)$. 

\begin{replemma}{lem:phibad}
Given $\epsilon$ as above, assume that $y_{n / 4}(t) > 0$, and  $\mathbb{E}[ \Delta \Phi ] \geq - \frac{\epsilon \hat{ \alpha}}{ 3n } \Phi(t)$. 
 Then either $\Phi < \frac{\epsilon}{4} \Psi$ or $\Gamma = O( n )$.   
\end{replemma}
\begin{proof}

Fix $\lambda  = 2 / 3 - 1 / 54$  for the rest of the proof. 
We can split the inequality in Lemma~\ref{lem:phibound} as follows:
\begin{eqnarray}
\E \left[ \Delta \Phi \,|\, y(t) \right] \leq \sum_{i = 1}^{\lambda n} \hat{\alpha}\left(  (1 + \delta) p_i - \frac{1}{n}  \right) \exp\left( \alpha y_i(t) \right) + 
\sum_{i = \lambda n + 1}^{n} \hat{\alpha}\left(  (1 + \delta) p_i - \frac{1}{n}  \right) \exp\left( \alpha y_i(t) \right). 
\label{ineq:phisplit}
\end{eqnarray}

\noindent We bound each term separately. Since the probability terms are non-increasing and the exponential terms are non-decreasing, the first term is maximized when all $p_i$ terms are equal. Since these probabilities are at most $1$, we have
\begin{eqnarray}
\label{ineq:philess}
\sum_{i = 1}^{\lambda n} \hat{\alpha}\left(  (1 + \delta) p_i - \frac{1}{n}  \right) \exp\left( \alpha y_i(t) \right) \leq \frac{\hat{\alpha}}{n} \left( (1 +\delta) \frac{1}{\lambda} - 1 \right) \Phi_{\leq \lambda n}.
\end{eqnarray} 

\noindent The second term is maximized by noticing that the $p_i$ factors are non-increasing, and are thus dominated by their value at $\lambda n$. Noticing that we carefully picked $\lambda$ such that 
$$ p_{\lambda n} \leq \frac{1}{n} - \frac{4 \epsilon}{n}, $$
we obtain, using the assumed inequality $\delta \leq \epsilon$, that 
\begin{eqnarray}
\label{ineq:phimore}
	\sum_{i = \lambda n + 1}^{n} \hat{\alpha}\left(  (1 + \delta) p_i - \frac{1}{n}  \right) \exp\left( \alpha y_i(t) \right)  \leq \hat{\alpha} \left( \delta - 4 \epsilon  \right) \frac{\Phi_{>\lambda n}}{n} \leq - \frac{3 \epsilon \hat{\alpha}}{n} \Phi_{>\lambda n}. 
\end{eqnarray}

\noindent By the case assumption, we know that $
\E \left[ \Delta \Phi \,|\, y(t) \right] \geq - \frac{\hat{\alpha} \epsilon}{3n} \Phi(t).$
 Combining the bounds (\ref{ineq:phisplit}), (\ref{ineq:philess}), and (\ref{ineq:phimore}), this yields:
\begin{eqnarray*}
\frac{\hat{\alpha}}{n} \left( (1 +\delta) \frac{1}{\lambda} - 1 \right) \Phi_{\leq \lambda n}  - \frac{3 \epsilon \hat{\alpha}}{n} \Phi_{>\lambda n} 
 \geq - \frac{\hat{\alpha}\epsilon}{3n} \Phi(t). 
\end{eqnarray*}

\noindent Substituting $\Phi_{>\lambda n} =  \Phi - \Phi_{\leq \lambda n}$ yields:
%
%
\begin{eqnarray*}
 \left(  3 \epsilon - \epsilon / 3 \right) \Phi \leq \left(  (1 + \delta) \frac{1}{\lambda} - 1 + 3 \epsilon  \right) \Phi_{\leq \lambda n} . 
\end{eqnarray*}

\noindent For simplicity, we fix 
$
 C(\epsilon) =  \frac{  (1 + \delta) \frac{1}{\lambda} - 1 + 3 \epsilon }{ 3 \epsilon - \epsilon / 3 } = O\left( \frac{1}{\epsilon} \right), 
$
to obtain 
\begin{eqnarray}
\label{eq:phirel}
\Phi \leq C(\epsilon)  \Phi_{\leq \lambda n}. 
\end{eqnarray}

\noindent Let $B = \sum_{y_i > 0} y_i$. Since we are normalizing by the mean, it also holds that $B = \sum_{y_i < 0} (- y_i )$. 
Notice that 
\begin{eqnarray}
\label{eq:brel}
\Phi_{ \leq \lambda n} \stackrel{y_i \text{incr.}}{\leq} \lambda n \exp{\left(  {\alpha y_{\lambda n} } \right)} 
 \stackrel{y_i \text{incr.}}{\leq} \lambda n \exp{\left(  \frac{\alpha B }{ (1 - \lambda) n} \right)}.
\end{eqnarray}

\noindent We put inequalities~(\ref{eq:phirel}) and~(\ref{eq:brel}) together and get 
\begin{eqnarray}
\label{eq:phitwo}
	\Phi(t) \leq & \lambda n C(\epsilon)  \exp{\left(  \frac{\alpha B }{ (1 - \lambda) n} \right)} ,
\end{eqnarray}

\noindent Let us now lower bound the value of $\Psi$ under these conditions. Since $y_{n / 4} > 0,$ all the costs below average must be in the first quarter of $y$. 
We can apply Jensen's inequality to the first $n / 4$ terms of $\Psi$ to get that 
\begin{eqnarray*}
	\Psi \geq \sum_{i = 1}^{n / 4} \exp \left( - \alpha {y_i} \right) \geq 
	\frac{n}{4} \exp \left(  - {\alpha} \frac{\sum_{i = 1}^{n / 4} y_i }{  n / 4  }      \right).
\end{eqnarray*}

\noindent We now split the sum $\sum_{i = 1}^{n / 4} y_i$ into its positive part and its negative part. We know that the negative part is summing up to exactly $- B$, as it contains all the negative $y_i$'s and the total sum is $0$. The positive part can be of size at most $B / 4$,  since it is maximized when there are exactly $n - 1$ positive costs and they are all equal. 
Hence the sum of the first $n / 4$ elements is at least $ - 3B / 4$, which implies that the following bound holds: 
\begin{eqnarray}
\label{eq:psilb}
	\Psi \geq 
	\frac{n}{4} \exp{ \left(  - {\alpha} \frac{- 3B / 4 }{  n / 4  } \right)} \geq 
	\frac{n}{4} \exp \left(  {\alpha} \frac{ 3B }{ n } \right). 
\end{eqnarray}

\noindent If  $\Phi < \frac{\epsilon}{4} \Psi$, then there is nothing to prove. 
Otherwise, if $\Phi \geq  \frac{\epsilon}{4} \Psi$, we get from~(\ref{eq:psilb}) and~(\ref{eq:phitwo}) that 
\begin{eqnarray*}
	\frac{\epsilon}{4} \frac{n}{4} \exp \left(  {\alpha} \frac{ 3B }{ n } \right) \leq \frac{\epsilon}{4} \Psi  \leq  \Phi(t) \leq & \lambda n C(\epsilon)  \exp{\left(  \frac{\alpha B }{ (1 - \lambda) n} \right)} ,
\end{eqnarray*}
\noindent Therefore, we get that: 
\begin{eqnarray*}
	 \exp \left(  {\alpha} \frac{ B }{ n } \left(  3 - \frac{1}{1 - \lambda}  \right) \right) \leq \frac{4\lambda}{\epsilon} C(\epsilon) = O\left( \frac{1}{\epsilon^2} \right). 
\end{eqnarray*}

\noindent Using the mundane fact that $  3 - \frac{1}{1 - \lambda} = 3 / 19$, we get that 
%
\begin{eqnarray}
\label{Bbound}
	 \exp \left(   \frac{ {\alpha}B }{ n } \right) \leq O\left( \frac{1}{\epsilon^{14}}\right). 
\end{eqnarray}

\noindent To conclude, notice that (\ref{Bbound}) implies we can upper bound $\Gamma$  in this case as:
 \begin{eqnarray*}
 \Gamma  = \Phi + \Psi \leq & \frac{4 + \epsilon}{\epsilon} \Phi \leq \frac{4 + \epsilon}{\epsilon} \lambda n C(\epsilon)  \exp{\left(  \frac{\alpha B }{ (1 - \lambda) n} \right)}
 \leq  O\left( \frac{1}{\epsilon^{14/(1 - \lambda)}}\right) \frac{4 + \epsilon}{\epsilon}  C(\epsilon) \lambda n = O( \text{poly}\left( \frac{1}{\epsilon} \right ) n ). 
 \end{eqnarray*}

\end{proof}

\noindent We can prove a symmetric claim for $\Psi$ by a slightly different argument. 
\begin{replemma}{lem:psibad}
Given $\epsilon$ as above, assume that $y_{\frac{3n}{4}} < 0$, and that $\mathbb{E}[ \Delta \Psi ] \geq -\frac{ \hat{\alpha} \epsilon}{ 3 n } \Psi$. 
	Then either $\Psi < \frac{\epsilon}{4} \Phi$ or $\Gamma = O (n )$. 
\end{replemma}

\paragraph{Endgame.} We now finally have the required machinery to prove that $\Gamma$ satisfies a supermartingale property:

\begin{lemma}
\label{lem:gammabound}
There exists a constant $\epsilon$ such that $$\mathbb{E}[ \Gamma(t + 1) | y(t) ] \leq \left( 1 - \frac{ \hat{\alpha} \epsilon }{4n}\right) \Gamma(t) + C, \textnormal{ where $C$ is a constant in $O\left( \text{poly} \left( \frac{1}{\epsilon} \right)\right)$.}$$ 
\end{lemma}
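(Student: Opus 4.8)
The plan is to combine the four structural lemmas established above (Lemmas~\ref{lem:phigood}, \ref{lem:psigood}, \ref{lem:phibad}, \ref{lem:psibad}) through a case analysis on the configuration $y(t)$, and in each case show that either $\Gamma$ contracts by a $(1 - \hat\alpha\epsilon/(4n))$ factor up to an additive constant, or $\Gamma(t)$ is already $O(\poly(1/\epsilon)\,n)$ (in which case the inequality holds trivially by absorbing everything into the additive constant $C$, since $\E[\Delta\Gamma]$ is bounded). Note that one of the two balance conditions $y_{n/4}(t)\le 0$ or $y_{3n/4}(t)\ge 0$ must hold unless the configuration is ``doubly unbalanced'' in an impossible way — in fact $y_{n/4}(t) > 0$ and $y_{3n/4}(t) < 0$ cannot both occur, since the $y_i$ are sorted in increasing order and $y_{n/4} \le y_{3n/4}$. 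So at least one of the ``good'' lemmas is always applicable; the work is in controlling the \emph{other} potential in that case.

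First I would dispose of the easy case: if $y_{n/4}(t) \le 0$ \emph{and} $y_{3n/4}(t) \ge 0$ simultaneously, then Lemmas~\ref{lem:phigood} and~\ref{lem:psigood} both apply, giving $\E[\Phi(t+1)\mid y(t)] \le (1 - \hat\alpha\epsilon/(3n))\Phi(t) + 1$ and the symmetric bound for $\Psi$; adding these yields $\E[\Gamma(t+1)\mid y(t)] \le (1 - \hat\alpha\epsilon/(3n))\Gamma(t) + 2$, which is stronger than needed. The substantive case is when exactly one balance condition fails, say $y_{n/4}(t) > 0$ (the case $y_{3n/4}(t) < 0$ is symmetric, using Lemmas~\ref{lem:psigood} and~\ref{lem:psibad}). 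Since $y_{n/4}(t) > 0$ forces $y_{3n/4}(t) \ge y_{n/4}(t) > 0$, Lemma~\ref{lem:psigood} applies and gives $\E[\Psi(t+1)\mid y(t)] \le (1 - \hat\alpha\epsilon/(3n))\Psi(t) + 1$. For $\Phi$, I would split on whether $\E[\Delta\Phi\mid y(t)] \ge -\frac{\hat\alpha\epsilon}{3n}\Phi(t)$. If not, then $\Phi$ itself already contracts at rate $\hat\alpha\epsilon/(3n)$, so adding the two contraction bounds finishes the case with additive constant $2$. If it does hold, then Lemma~\ref{lem:phibad} applies: either $\Gamma = O(\poly(1/\epsilon)\,n)$ — and then the lemma's conclusion holds by absorbing $\E[\Delta\Gamma]$, which is itself $O(\poly(1/\epsilon)\,n)$ times a $1/n$-order step (and in fact $O(\poly(1/\epsilon))$), into a sufficiently large constant $C$ — or $\Phi(t) < \frac{\epsilon}{4}\Psi(t)$.

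The remaining subcase, $\Phi(t) < \frac{\epsilon}{4}\Psi(t)$, is the one to handle with care. Here $\Phi$ is negligible compared to $\Psi$, so I would bound $\E[\Delta\Phi\mid y(t)]$ crudely. From Lemma~\ref{lem:phibound}, $\E[\Delta\Phi\mid y(t)] \le \sum_i \hat\alpha((1+\delta)p_i - 1/n)\exp(\alpha y_i(t)) \le \hat\alpha(1+\delta)\max_i p_i \cdot \Phi(t) \le \frac{2\hat\alpha(1+\delta)}{n}\Phi(t) \le \frac{\hat\alpha\epsilon}{2n}\Psi(t)$ for $\epsilon$ small (using $p_i \le 2/n$ and $\Phi < \frac\epsilon4\Psi$), so that $\E[\Phi(t+1)\mid y(t)] \le \Phi(t) + \frac{\hat\alpha\epsilon}{2n}\Psi(t)$. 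Combining with the $\Psi$ contraction:
\begin{eqnarray*}
\E[\Gamma(t+1)\mid y(t)] &\le& \Phi(t) + \frac{\hat\alpha\epsilon}{2n}\Psi(t) + \left(1 - \frac{\hat\alpha\epsilon}{3n}\right)\Psi(t) + 1 \\
&=& \Phi(t) + \left(1 - \frac{\hat\alpha\epsilon}{6n}\right)\Psi(t) + 1 \le \Gamma(t) - \frac{\hat\alpha\epsilon}{6n}\Psi(t) + 1.
\end{eqnarray*}
Since $\Psi(t) \ge \Gamma(t)/(1 + \epsilon/4) \ge \frac{2}{3}\Gamma(t)$, this gives $\E[\Gamma(t+1)\mid y(t)] \le (1 - \frac{\hat\alpha\epsilon}{9n})\Gamma(t) + 1$, which implies the claimed bound after renaming constants ($\epsilon/9 \ge \epsilon/4$ fails, so I would instead track the constant honestly — the cleanest route is to state the lemma with $\hat\alpha\epsilon/(4n)$ replaced by a generic $c\hat\alpha\epsilon/n$ and note the downstream argument only needs \emph{some} such constant; alternatively absorb the slack by choosing the split points more tightly). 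The main obstacle I anticipate is precisely this bookkeeping of constants across the case split — making sure the worst of the three contraction rates ($\hat\alpha\epsilon/(3n)$ from the good lemmas, versus whatever comes out of the $\Phi < \frac\epsilon4\Psi$ subcase) is uniformly at least $\hat\alpha\epsilon/(4n)$, which may require slightly sharpening the constants in the earlier lemmas or, more simply, weakening the stated rate in Lemma~\ref{lem:gammabound} to an unspecified $\Theta(\hat\alpha\epsilon/n)$, since only the existence of a linear-in-$1/n$ drift is used afterward.
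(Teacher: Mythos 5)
Your overall case decomposition is the same as the paper's (balanced case via Lemmas~\ref{lem:phigood} and~\ref{lem:psigood}; unbalanced case split on whether $\E[\Delta\Phi\mid y(t)]$ already contracts; then Lemma~\ref{lem:phibad} giving either $\Gamma=O(\poly(1/\epsilon)\,n)$, absorbed into $C$, or $\Phi<\frac{\epsilon}{4}\Psi$), but the decisive subcase $\Phi<\frac{\epsilon}{4}\Psi$ is broken as written. Adding your own two bounds, $\E[\Phi(t+1)\mid y(t)]\le\Phi(t)+\frac{\hat{\alpha}\epsilon}{2n}\Psi(t)$ and $\E[\Psi(t+1)\mid y(t)]\le\bigl(1-\frac{\hat{\alpha}\epsilon}{3n}\bigr)\Psi(t)+1$, gives $\Phi(t)+\bigl(1+\frac{\hat{\alpha}\epsilon}{6n}\bigr)\Psi(t)+1$, since $\frac{\epsilon}{2}-\frac{\epsilon}{3}=+\frac{\epsilon}{6}$; your display records this as $\bigl(1-\frac{\hat{\alpha}\epsilon}{6n}\bigr)\Psi(t)$, a sign error. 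So the inequalities you derive yield a possible \emph{increase} of $\Gamma$, not a weaker contraction constant: the problem is not the bookkeeping you flag at the end, but that no drift in the right direction comes out. The root cause is that your crude bound on $\E[\Delta\Phi]$ is too lossy: dropping the $-\frac{1}{n}$ term in $(1+\delta)p_i-\frac{1}{n}$ and using $p_i\le\frac{2}{n}$ leaves a prefactor of order $\frac{1}{n}$, and after multiplying by $\Phi<\frac{\epsilon}{4}\Psi$ you get $\frac{\hat{\alpha}\epsilon}{2n}\Psi$, which overwhelms the $\frac{\hat{\alpha}\epsilon}{3n}\Psi$ contraction from Lemma~\ref{lem:psigood}.

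The subcase is salvageable, but only with a sharper bound on $\E[\Delta\Phi]$ of order $\frac{\epsilon\hat{\alpha}}{n}\Phi$ (so that, after $\Phi<\frac{\epsilon}{4}\Psi$, the damage is $O\bigl(\frac{\epsilon^2\hat{\alpha}}{n}\bigr)\Psi$ and is dominated by the contraction). Two routes: keep the $-\frac{1}{n}$ term and note $(1+\delta)p_i-\frac{1}{n}\le\frac{(1+\delta)(1+\beta)-1}{n}=O\bigl(\frac{\epsilon}{n}\bigr)$ since $\beta=16\epsilon$ and $\delta\le\epsilon$; or, as the paper does, use $\E[\Delta\Phi\mid y(t)]\le\frac{\hat{\alpha}\delta}{n}\Phi(t)$, which follows from Lemma~\ref{lem:phibound} because the sorted $p_i$ are non-increasing while $\exp(\alpha y_i(t))$ is non-decreasing, hence $\sum_i p_i\exp(\alpha y_i(t))\le\Phi(t)/n$. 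With either bound, combining with Lemma~\ref{lem:psigood}, $\Phi<\frac{\epsilon}{4}\Psi$, and $\Psi\ge\Gamma/(1+\epsilon/4)$ recovers the stated $\frac{\hat{\alpha}\epsilon}{4n}$ drift (this is exactly the paper's computation). Your handling of the remaining cases — both balance conditions holding, the branch where $\E[\Delta\Phi]$ already contracts, and the $\Gamma=O(\poly(1/\epsilon)\,n)$ branch absorbed into the additive constant using that $\E[\Delta\Gamma]$ is bounded — matches the paper and is fine.
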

\begin{proof}

\paragraph{Case 1:} If $y_{n / 4} \leq 0$ and $y_{3n/4} \geq 0$, then the property follows by putting together Lemmas~\ref{lem:phigood} and~\ref{lem:psigood}. 

\paragraph{Case 2:} If $y_{3n/4} \geq y_{n / 4}  \geq 0$. This means that the weight vector is unbalanced, in particular that there are few bins of low cost, and many bins of high cost. 
However, we can show that the expected decrease in $\Psi$ can compensate this decrease, and the inequality still holds. Notice that Lemmas~\ref{lem:phibound} and~\ref{lem:psibound} imply that $\mathbb{E}[ \Delta \Phi ] \leq  \frac{\hat{\alpha} \delta}{n}  \Phi(t)$ and $\mathbb{E}[ \Delta \Psi ] \leq  \frac{\hat{\alpha} \delta}{n}  \Psi(t)$, respectively. 

If $\mathbb{E}[ \Delta \Phi ] \leq - \frac{\epsilon \hat{ \alpha }}{ 4 n } \Phi(t)$, then the claim simply follows by putting this bound together with Lemma~\ref{lem:psigood}. 
Otherwise, if $\mathbb{E}[ \Delta \Phi ] \geq - \frac{\epsilon \hat{\alpha} }{ 4 n } \Phi(t)$, we obtain from Lemma~\ref{lem:phibad} that either $\Phi < \frac{\epsilon}{4} \Psi$ or $\Gamma = O(n)$. 
In the first sub-case, we have that 
\begin{eqnarray*} 
\mathbb{E} [ \Delta \Gamma ] = \mathbb{E} [ \Delta \Phi ] + \mathbb{E} [ \Delta \Psi ] \leq & 
 \frac{\hat{\alpha} \delta}{n}  \Phi(t) + 1 - \frac{\hat{\alpha}\epsilon}{3n} \Psi(t) \leq 
- \frac{\hat{\alpha} \epsilon }{3n}  \Psi(t) + 1 + \frac{\epsilon \delta \hat{\alpha}}{4n} \Psi(t)  \\ \leq & - \frac{\hat{\alpha} \epsilon}{3n}  \left( 1 - \frac{3\delta}{4}  \right) \Psi(t) + 1 \leq - \frac{\hat{\alpha}}{3n} \left( 1 + \frac{\epsilon}{4}  \right)^{-1} \left( 1 - \frac{3\delta}{4}  \right)  \Gamma(t) \leq - \frac{\hat{\alpha}}{4 n} \Gamma(t)
, \text{as claimed.} \end{eqnarray*}

\noindent In the second sub-case, we know that $\Gamma \leq C n$, for some constant $C$. 
Hence,  we can get that 
$$\mathbb{E}[ \Delta \Gamma ]  = \mathbb{E} [ \Delta \Phi ] + \mathbb{E} [ \Delta \Psi ] \stackrel{(a)}{\leq} \frac{\hat{\alpha} \delta}{n} \Gamma \stackrel{\Gamma = O(n)}{\leq} 2C \hat{\alpha},$$ 

\noindent where in step (a) we used the upper bounds in Lemmas~\ref{lem:phibound} and~\ref{lem:psibound}, respectively.
On the other hand, 
$$ C - \frac{\hat{\alpha}\epsilon}{4 n } \Gamma(t)  \geq  C - \frac{\hat{\alpha}\epsilon}{4 n } C n = C \left( 1 - \frac{\hat{\alpha}\epsilon}{4 }\right)  \geq  3C \hat{\alpha} \geq \mathbb{E}[ \Delta \Gamma ].$$

\paragraph{Case 3:} If $y_{n / 4} \leq y_{3n / 4} < 0$. 
This case is symmetric to the one above. 
\end{proof}

\noindent The intuition behind the above bound is that $\Gamma$ will always tend to decrease once it surpasses the $\Theta(n)$ threshold. This implies a bound on the expected value of $\Gamma$, which completes the proof of Theorem~\ref{thm:beta}.

\begin{lemma}
\label{lem:finalbound}
	For any $t \geq 0$, $\mathbb{E} [ \Gamma(t) ] \leq \frac{4C}{\hat{\alpha} \epsilon} n.$
\end{lemma}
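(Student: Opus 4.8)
The plan is to run the standard geometric-drift argument on the scalar sequence $\mathbb{E}[\Gamma(t)]$, using Lemma~\ref{lem:gammabound} as the one-step estimate; the only extra ingredient needed is a bound on the initial potential $\mathbb{E}[\Gamma(0)]$.

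Concretely, I would first take total expectation over $y(t)$ in Lemma~\ref{lem:gammabound}. Setting $\rho := 1 - \frac{\hat{\alpha}\epsilon}{4n}$, and noting $\rho \in (0,1)$ since $\hat{\alpha} < 1$, $\epsilon < 1$ and $n \geq 1$, this yields the affine recursion $\mathbb{E}[\Gamma(t+1)] \leq \rho\,\mathbb{E}[\Gamma(t)] + C$. The unique fixed point of $z \mapsto \rho z + C$ is $\Gamma^{\star} := \frac{C}{1-\rho} = \frac{4C}{\hat{\alpha}\epsilon}\,n$, and the map is a contraction toward it, so the bound $\mathbb{E}[\Gamma(t)] \leq \Gamma^{\star}$ is preserved: if $\mathbb{E}[\Gamma(t)] \leq \Gamma^{\star}$ then $\mathbb{E}[\Gamma(t+1)] \leq \rho\,\Gamma^{\star} + C = \Gamma^{\star}$. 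Hence the lemma follows by induction on $t$ once the base case $\mathbb{E}[\Gamma(0)] \leq \Gamma^{\star}$ is established. (Equivalently, unrolling gives $\mathbb{E}[\Gamma(t)] \leq \rho^{t}\,\mathbb{E}[\Gamma(0)] + C\sum_{s=0}^{t-1}\rho^{s} \leq \mathbb{E}[\Gamma(0)] + \frac{4C}{\hat{\alpha}\epsilon}n$, which already gives the claim up to a constant factor.)

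For the base case I would invoke the explicit initial configuration of the exponential process: at the start of the removal phase the top label $w_i(0)$ of bin $i$ is an exponential random variable of mean $1/\pi_i$, independent across bins. Since $y_i(0) = x_i(0) - \mu(0) \leq x_i(0) = w_i(0)/n$, the moment generating function of the exponential distribution gives $\mathbb{E}[\exp(\alpha y_i(0))] \leq \mathbb{E}[\exp(\tfrac{\alpha}{n}w_i(0))] = \bigl(1 - \tfrac{\alpha}{n\pi_i}\bigr)^{-1} = O(1)$, using $n\pi_i \geq 1-\gamma \geq \tfrac12$ and that $\alpha$ is bounded away from $1$ in our parameter regime; hence $\mathbb{E}[\Phi(0)] = O(n)$. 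For $\Psi(0)$ I would write $\exp(-\alpha y_i(0)) = \exp\!\bigl(-\alpha(1-\tfrac1n)x_i(0)\bigr)\prod_{j \neq i}\exp\!\bigl(\tfrac{\alpha}{n}x_j(0)\bigr)$; by independence its expectation is at most $1 \cdot \prod_{j\neq i}\bigl(1 - \tfrac{\alpha}{n^{2}\pi_j}\bigr)^{-1} = O(1)$ (each factor is $1 + O(1/n)$ and there are $n-1$ of them), so $\mathbb{E}[\Psi(0)] = O(n)$ as well. Therefore $\mathbb{E}[\Gamma(0)] = O(n)$, and since $\frac{4C}{\hat{\alpha}\epsilon} = \mathrm{poly}(1/\epsilon)$ is a large constant (as $\hat{\alpha},\epsilon$ are small), it dominates this $O(1)$ factor, giving $\mathbb{E}[\Gamma(0)] \leq \Gamma^{\star}$ and closing the induction.

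I do not anticipate a genuine obstacle: this is the wrap-up of the potential analysis, resting entirely on Lemma~\ref{lem:gammabound}. The only part that is not a single line is the base case, which uses the exponential-process initial distribution and a routine moment-generating-function computation (including checking that the product of the $n-1$ cross-terms in $\Psi(0)$ stays $O(1)$). Throughout, we work inside a prefixed execution and condition on no removal inspecting an empty queue, so that $\Gamma(t)$ and the drift inequality of Lemma~\ref{lem:gammabound} are well-defined at every step $t$ under consideration.
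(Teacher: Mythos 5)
Your proposal is correct and follows essentially the same route as the paper: the induction step is exactly the paper's use of the drift bound from Lemma~\ref{lem:gammabound} (via the tower property over $y(t)$, with $\frac{4C}{\hat{\alpha}\epsilon}n$ as the fixed point of the affine recursion), and your base-case moment-generating-function computation for the initial exponential configuration reproduces the paper's Lemma~\ref{lem:initial}, up to the harmless simplification of dropping the $-\mu(0)$ term when bounding $\Phi(0)$.
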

\begin{proof}
By induction. This holds for $t = 0$ by a direct computation: see Lemma~\ref{lem:gammabound} for the argument. 
Then, we have 
\begin{eqnarray*}
	\mathbb{E}[ \Gamma(t + 1) ] =  \E [ \E[ \Gamma(t + 1) | \Gamma(t) ] ]  \leq \E \left[ \left( 1 - \frac{\hat{\alpha} \epsilon}{4 n } \right) \Gamma(t) + C \right] \leq   \frac{4C}{\hat{\alpha} \epsilon} n \left( 1 - \frac{\hat{\alpha}\epsilon}{4n} \right) + C \leq  \frac{4C}{\hat{\alpha} \epsilon} n. 
\end{eqnarray*} 
\end{proof}

\subsection{Guarantees on Max Rank}
\label{sec:maxrank}
\noindent We can use the characterization of the exponential process to prove the following: 
\begin{lemma}
	\label{lem:Emaxlabel}
	If $w_{\max}(t)$ is the maximum bin weight at time $t$ and $w_{\min}(t)$ is the minimum, then 
	\begin{eqnarray} 
	\E \left[ w_{\max}(t) - w_{\min}(t)  \right] = O \left( \frac{1}{\alpha} n ( \log n + \log C) \right). 
	\end{eqnarray}
\end{lemma}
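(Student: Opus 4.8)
The plan is to read off the bound directly from the potential estimate of Theorem~\ref{thm:beta} (equivalently Lemma~\ref{lem:finalbound}), using only the definition of $\Gamma$ and Jensen's inequality; no new probabilistic argument is needed.

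First I would translate the quantity of interest into the normalized deviations. Since $x_j(t) = w_j(t)/n$ and $y_j(t) = x_j(t) - \mu(t)$ with $\mu(t)$ independent of $j$, the bin achieving the maximum (resp.\ minimum) weight is the same as the one achieving the maximum (resp.\ minimum) $y_j(t)$, and
$$w_{\max}(t) - w_{\min}(t) = n\bigl(\max_j y_j(t) - \min_j y_j(t)\bigr).$$
Next I would extract pointwise bounds on $\max_j y_j(t)$ and $\min_j y_j(t)$ from $\Gamma(t)$. For any $j$, the term $\exp(\alpha y_j(t))$ is one of the nonnegative summands of $\Phi(t) \le \Gamma(t)$, so $\exp\bigl(\alpha \max_j y_j(t)\bigr) \le \Gamma(t)$, i.e.\ $\max_j y_j(t) \le \tfrac1\alpha \log \Gamma(t)$. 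Symmetrically, $\exp(-\alpha y_j(t))$ is a summand of $\Psi(t) \le \Gamma(t)$, giving $-\min_j y_j(t) \le \tfrac1\alpha\log\Gamma(t)$. Combining, $\max_j y_j(t) - \min_j y_j(t) \le \tfrac{2}{\alpha}\log\Gamma(t)$, hence
$$w_{\max}(t) - w_{\min}(t) \le \frac{2n}{\alpha}\,\log\Gamma(t).$$
(Note $\Gamma(t) \ge 2n$ by AM--GM, so $\log\Gamma(t) > 0$ and there are no sign issues.)

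Finally I would take expectations and apply Jensen's inequality to the concave function $\log$, together with the bound $\E[\Gamma(t)] \le C(\epsilon)\, n$ from Theorem~\ref{thm:beta}:
$$\E\bigl[w_{\max}(t) - w_{\min}(t)\bigr] \le \frac{2n}{\alpha}\,\E[\log\Gamma(t)] \le \frac{2n}{\alpha}\,\log \E[\Gamma(t)] \le \frac{2n}{\alpha}\bigl(\log n + \log C(\epsilon)\bigr),$$
which is exactly $O\bigl(\tfrac1\alpha n(\log n + \log C)\bigr)$.

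There is no real obstacle here: the entire content of the statement is supplied by Theorem~\ref{thm:beta}, and the only care needed is the bookkeeping of the $1/n$ normalization between $w_j$, $x_j$, and $y_j$, and checking that the prefixed-execution conditioning guarantees every $w_j(t)$ is well defined (no empty queue) so that $w_{\max}(t)$ and $w_{\min}(t)$ make sense. If one wants the high-probability / max-rank consequence rather than just the expectation, one would additionally invoke Markov's inequality on $\Gamma(t)$ to get $\Gamma(t) = O(\mathrm{poly}(1/\epsilon)\,n)$ with the required probability and repeat the same deterministic computation; but for the stated lemma the expectation bound above suffices.
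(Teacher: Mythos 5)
Your proposal is correct and follows essentially the same route as the paper: bound the normalized gap by $\tfrac{2}{\alpha}\log\Gamma(t)$ using the definitions of $\Phi$ and $\Psi$, then apply Jensen's inequality and the bound $\E[\Gamma(t)] \le C n$ from Theorem~\ref{thm:beta}. The extra observations (the $1/n$ bookkeeping and the remark that $\Gamma(t) \ge 2n$ keeps the logarithm positive) are fine but add nothing beyond the paper's argument.
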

\begin{proof} 
	Let $x_{\max} (t) =w_{\max}(t) / n$ and $x_{\min} (t) = w_{\min} (t) / n$.
	By definition, we have $\exp (\alpha (x_{\max} (t) - \mu(t) ) ) \leq \Gamma (t)$ and $\exp ( \alpha (\mu(t) - x_{\min} (t) ) ) \leq \Gamma (t)$.
	Therefore $\alpha (x_{\max} (t) - x_{\min} (t)) \leq 2 \log \Gamma (t)$.
	Thus, we have 
	\begin{align*}
	\E [\alpha (x_{\max} (t) - x_{\min} (t))] &\leq 2 \E [\log \Gamma (t)] \stackrel{(a)}{\leq} 2 \log (\E [\Gamma (t)]) \stackrel{(b)}{=} O(\log n + \log C) \; ,
	\end{align*}
	where (a) follows from Jensen's inequality and (b) follows from Theorem \ref{thm:beta}.
	Simplifying yields the desired claim.
\end{proof}
\noindent In Appendix \ref{sec:app-proofs-maxrank} we show that this implies the following theorem:
\begin{theorem}
\label{thm:Emaxrank}
	For all $t \geq 0$, we have $\E [\maxrank (t)] = O \left( \frac{1}{\alpha} n ( \log n + \log C) \right)$.
\end{theorem}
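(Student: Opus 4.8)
The plan is to reduce the statement to the weight‑spread bound already established in Lemma~\ref{lem:Emaxlabel}, via an exact combinatorial identity followed by an elementary Poisson‑process computation. First observe that the rank of a label is monotone non‑decreasing in the label value, so among all elements currently sitting on top of a queue at time $t$, the one of largest rank is the one with the largest top weight; hence $\maxrank(t) = \rank(w_{\max}(t))$, and it suffices to bound $\E[\rank(w_{\max}(t))]$.

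Next I would decompose $\rank(w_{\max}(t))$ bin by bin. In the exponential process, bin $j$ is initially filled with an increasing sequence of real labels whose consecutive gaps are i.i.d.\ $\Exp(1/\pi_j)$, and \textsc{deleteMin} removes them in increasing order; thus at time $t$ bin $j$ holds a suffix of that sequence whose smallest element is exactly the current top $w_j(t) \le w_{\max}(t)$. Consequently the number of elements of bin $j$ with label at most $w_{\max}(t)$ equals $1 + N_j$, where $N_j$ is the number of labels of bin $j$ lying strictly above $w_j(t)$ but not exceeding $w_{\max}(t)$, and summing over bins gives the exact identity $\maxrank(t) = \rank(w_{\max}(t)) = n + \sum_{j=1}^n N_j$.

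The crux is to bound $\E[N_j]$. Condition on the entire history $\mathcal{F}_t$ of the process up to time $t$. A \textsc{deleteMin} only ever inspects the minimum (top) of the two queues it samples, and every top observed at any earlier step is a label lying at or below a current top; hence $\mathcal{F}_t$ is a deterministic function of the sampled indices, the $\beta$‑coin flips, and the labels up to and including each current top, and it reveals nothing about the labels lying strictly above the current tops. By the memorylessness of the exponential distribution, conditioned on $\mathcal{F}_t$ the labels of bin $j$ above $w_j(t)$ therefore form a fresh rate‑$\pi_j$ Poisson process on $(w_j(t),\infty)$, independent across bins. Since $\mathcal{F}_t$ also determines all the tops, it determines $w_{\max}(t)$ and the deterministic interval length $d_j := w_{\max}(t) - w_j(t) \in [0,\, w_{\max}(t)-w_{\min}(t)]$; hence $N_j \mid \mathcal{F}_t \sim \Poi(\pi_j d_j)$ and $\E[N_j \mid \mathcal{F}_t] = \pi_j d_j \le \frac{w_{\max}(t)-w_{\min}(t)}{(1-\gamma)n}$, using $\pi_j \le \frac{1}{(1-\gamma)n}$. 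Summing over the $n$ bins gives $\E[\maxrank(t)\mid\mathcal{F}_t] \le n + \frac{1}{1-\gamma}\bigl(w_{\max}(t)-w_{\min}(t)\bigr)$; taking expectations and applying Lemma~\ref{lem:Emaxlabel}, together with $\gamma \le 1/2$ and $\frac{1}{\alpha}(\log n + \log C) \ge 1$ to absorb the additive $n$, yields $\E[\maxrank(t)] = O\!\left(\frac{1}{\alpha} n(\log n + \log C)\right)$.

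I expect the only delicate point to be the conditioning argument: arguing rigorously that $\mathcal{F}_t$ carries no information about the labels above the current queue tops, so that memorylessness may be invoked to turn them into an independent Poisson process against which the interval $[w_j(t), w_{\max}(t)]$ is deterministic. Everything else — the monotonicity reduction, the bin‑by‑bin identity, and the Poisson mean computation — is routine. One should also keep in mind that we work under the prefixed‑execution conditioning, under which empty queues are never inspected, and that in the exponential process inserts within a bin are automatically sorted, so the ``no visible inversions'' requirement is vacuous there.
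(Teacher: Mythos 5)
Your proof is correct and follows essentially the same route as the paper's: reduce $\maxrank(t)$ to per-bin counts of labels between a queue's top and $w_{\max}(t)$, invoke memorylessness to see these counts as Poisson with mean $O\!\left((w_{\max}(t)-w_{\min}(t))/n\right)$, and finish with Lemma~\ref{lem:Emaxlabel}. The only difference is cosmetic: the paper conditions on $I(t)=[w_{\min}(t),w_{\max}(t)]$ and bounds each bin's count by $\Poi(|I|/n)$ (getting $1+(n-1)|I|/n$), whereas you condition on the full history and count over $(w_j(t),w_{\max}(t)]$ (getting $n+\sum_j N_j$), which if anything makes the delicate memorylessness step more explicit.
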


\noindent By plugging in constants as in (\ref{eq:param-delta}) and (\ref{eq:param-beta}), we get:
\begin{corollary}
\label{cor:max}
	For all $t \geq 0$, and any $\beta = \Omega (\gamma)$, we have $\E [\maxrank (t)] = O \left( \frac{1}{\beta} n ( \log n + \log 1 / \beta) \right)$.
\end{corollary}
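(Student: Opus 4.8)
The plan is to obtain Corollary~\ref{cor:max} as a direct specialization of Theorem~\ref{thm:Emaxrank}, by substituting the concrete parameter choices prescribed in (\ref{eq:param-delta}) and (\ref{eq:param-beta}). Theorem~\ref{thm:Emaxrank} already gives $\E[\maxrank(t)] = O\left(\frac{1}{\alpha} n (\log n + \log C)\right)$ for every $t \geq 0$, where $C = C(\epsilon) = \poly(1/\epsilon)$ is the constant furnished by Theorem~\ref{thm:beta} and $\epsilon = \beta/16$. So all that remains is to verify that the hypotheses of Theorem~\ref{thm:beta} can be met with $\alpha = \Theta(\beta)$ whenever $\beta = \Omega(\gamma)$, and then to simplify the two logarithmic terms.

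First I would check feasibility of the parameters. We need $\epsilon \geq \delta$, where $1 + \delta = \frac{1 + \gamma + c\alpha(1+\gamma)^2}{1 - \gamma - c\alpha(1+\gamma)^2}$. Since we assume $\gamma \leq 1/2$ is small and $\beta = \Omega(\gamma)$, choosing $\alpha = \Theta(\beta)$ with a sufficiently small hidden constant makes $\gamma + c\alpha(1+\gamma)^2 = \gamma + O(\beta) = O(\beta)$, so $\delta = O(\gamma + \alpha) = O(\beta)$; tuning constants gives $\delta \leq \beta/16 = \epsilon$, which is (\ref{eq:param-beta}), and also keeps $1 - \gamma - c\alpha(1+\gamma)^2$ bounded away from $0$ so that $\hat{\alpha} = \Theta(\alpha) = \Theta(\beta)$. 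This is exactly the claim made right after (\ref{eq:param-beta}), and it is the only place where the hypothesis $\beta = \Omega(\gamma)$ is used. With this choice the conclusion of Theorem~\ref{thm:beta} applies.

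Next I would substitute. With $\alpha = \Theta(\beta)$ we have $1/\alpha = \Theta(1/\beta)$, and with $\epsilon = \beta/16$ we have $C(\epsilon) = \poly(1/\epsilon) = \poly(1/\beta)$, hence $\log C = O(\log(1/\beta))$. Plugging both into the bound of Theorem~\ref{thm:Emaxrank} gives
$$\E[\maxrank(t)] = O\!\left(\frac{1}{\beta}\, n\,(\log n + \log(1/\beta))\right),$$
for all $t \geq 0$, which is the claimed bound.

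I do not expect any real obstacle here: the argument is a routine substitution once the parameter interdependence is pinned down. The one point requiring attention is consistency of the parameters — $\alpha$ occurs both inside the definition of $\delta$ (and of $\hat\alpha$) and in the final estimate, while $\epsilon$ is tied to $\beta$ — so one must confirm that a single choice $\alpha = \Theta(\beta)$ simultaneously yields $\delta \leq \epsilon$, a well-defined and small $\delta$, and $\hat\alpha = \Theta(\beta)$; given $\gamma \leq 1/2$ and $\beta = \Omega(\gamma)$ this is immediate.
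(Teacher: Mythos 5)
Your proposal is correct and matches the paper's own (implicit) argument: the corollary is obtained exactly by plugging the parameter choices of (\ref{eq:param-delta}) and (\ref{eq:param-beta}) — $\alpha = \Theta(\beta)$, $\epsilon = \beta/16$, $\log C(\epsilon) = O(\log(1/\beta))$ — into Theorem~\ref{thm:Emaxrank}. Your extra check that a single choice of $\alpha$ simultaneously gives $\delta \leq \epsilon$ and $\hat{\alpha} = \Theta(\beta)$ is just a more explicit version of what the paper asserts after (\ref{eq:param-beta}).
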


%
%
%

\subsection{Guarantees on Average Rank}
\label{sec:avgrank}
We now focus on at the rank cost paid in a step by the algorithm. 
Let $A = \log C / \alpha$.
For real values $s \geq 0$, we ``stripe" the bins according to their top value, denoting by $b_{>s} (t)$ the number of bins with $w_j (t) \geq (s + A) n + \mu $
at time $t$, and let $b_{<- s} (t)$ be the number of bins with $w_j (t) \leq \mu - (s + A) n $ at time $t$. 
For any bin $j$ and interval $I$, we also let $\ell_{j, I} (t)$ be the number of elements in $j$ at time $t$ with label in $I$.
Finally, let $p_{j, \mu}$ denote the PDF of $w_j (t)$ given $\mu_t$.

We discuss the proof strategy at a high level here, deferring the full proofs to Appendix \ref{sec:app-proofs-avgrank}.
First, we use the bounds on $\Gamma$ to obtain the following bounds on the quantities defined previously:

\begin{lemma}
\label{lem:b}
	For any time $t$, we have that $\E[b_{>s}] \leq n\exp(-\alpha s)$ and that $\E[b_{<- s}] \leq n \exp(- \alpha s)$. 
\end{lemma}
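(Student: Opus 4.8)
The plan is to read the bound straight off the potential estimate of Theorem~\ref{thm:beta} (equivalently Lemma~\ref{lem:finalbound}), using that any bin whose top value is far above (resp.\ below) the mean contributes a correspondingly large term to $\Phi(t)$ (resp.\ $\Psi(t)$). First I would unwind the normalization: since $x_j(t) = w_j(t)/n$, $\mu(t)$ is the average of the $x_j(t)$, and $y_j(t) = x_j(t) - \mu(t)$, the offset $\mu$ appearing in the statement is the \emph{average top weight} $n\mu(t)$. Hence the defining condition $w_j(t) \ge (s+A)n + \mu$ for $b_{>s}$ is exactly $y_j(t) \ge s + A$, and $w_j(t) \le \mu - (s+A)n$ for $b_{<-s}$ is exactly $y_j(t) \le -(s+A)$. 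Recall also $A = \log C/\alpha$, so $\exp(\alpha A) = C$, where $C$ is the constant with $\E[\Gamma(t)] \le Cn$ for all $t$.

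For $b_{>s}$: every bin $j$ with $y_j(t) \ge s+A$ contributes a term $\exp(\alpha y_j(t)) \ge \exp(\alpha(s+A)) = C\exp(\alpha s)$ to $\Phi(t) = \sum_j \exp(\alpha y_j(t))$. Since all terms of $\Phi(t)$ are nonnegative and $\Phi(t) \le \Gamma(t)$, this gives the pointwise inequality
\[
b_{>s}(t)\,C\exp(\alpha s) \;\le\; \Phi(t) \;\le\; \Gamma(t),
\]
i.e.\ $b_{>s}(t) \le \Gamma(t)/(C\exp(\alpha s))$. Taking expectations and invoking $\E[\Gamma(t)] \le Cn$ yields $\E[b_{>s}(t)] \le n\exp(-\alpha s)$. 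The bound on $b_{<-s}$ is entirely symmetric with $\Psi$ in place of $\Phi$: a bin $j$ with $y_j(t) \le -(s+A)$ contributes $\exp(-\alpha y_j(t)) \ge C\exp(\alpha s)$ to $\Psi(t) = \sum_j \exp(-\alpha y_j(t)) \le \Gamma(t)$, so $b_{<-s}(t)\,C\exp(\alpha s) \le \Gamma(t)$, and taking expectations gives $\E[b_{<-s}(t)] \le n\exp(-\alpha s)$.

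There is essentially no hard step here; the computation is a one-line application of monotonicity plus Markov to the potential. The only point requiring care is bookkeeping — ensuring that the constant $C$ fixed in the definition $A = \log C/\alpha$ is indeed (an upper bound for) the constant in $\E[\Gamma(t)] \le Cn$ coming from Lemma~\ref{lem:finalbound} (so that the factors $C$ cancel), and that the offset $\mu$ in the statement is read as the average top weight $n\mu(t)$ rather than as $\mu(t)$ itself.
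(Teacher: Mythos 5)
Your argument is correct and is essentially the paper's own proof: each bin counted by $b_{>s}$ (resp.\ $b_{<-s}$) contributes at least $\exp(\alpha(s+A)) = C\exp(\alpha s)$ to $\Phi(t)$ (resp.\ $\Psi(t)$), and taking expectations against $\E[\Phi(t)],\E[\Psi(t)] \le \E[\Gamma(t)] \le Cn$ cancels the factor $C$. Your reading of the offset $\mu$ as the average top weight is also the one the paper intends.
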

\begin{proof}
	Recall that $\Phi(t) =  \sum_{i = 1}^n \exp\left( \alpha \left( x_i(t) - \mu \right)\right)$ and that $\E\left[ \Phi(t) \right] \leq C n$. 
	By linearity of expectation, we have $$\E[ \Phi(t) ] \geq \E[ b_{>s} \exp (\alpha( s + A))] = C \exp(\alpha s) \cdot \E[ b_{> s} ],$$
	\noindent which implies the claim. The converse claim follows from the bound on $\Psi(t)$. 
\end{proof}

This lemma gives us strong bounds on the tail behavior of the $w_j$.
We show that this implies a bound on the average rank. We will need the following technical result.
\begin{replemma}{lem:poi1}
For any interval $I = [a, b]$ which may depend on $\mu$, we have $\E [\rank (b) - \rank (a)| \mu] \leq n (b - a + 1)$.
\end{replemma}

\noindent We then show a bound on the rank of $\mu$:
\begin{replemma}{lem:below-mu}
	 For all $t$, we have $E [\rank (\mu(t))] \leq O \left( (A + 1 / \alpha^2) n \right)$.
\end{replemma}

The proof of this lemma, deferred to Appendix \ref{sec:app-proofs-avgrank}, works as follows.
We divide up the interval $(-\infty, \mu]$ into infinitely many intervals of constant length.
We will count the number of $w_j (t)$ within each interval separately.
Within each such interval, we can give a crude upper bound the number of weights that could have ever been in that interval at any point in the execution.
We then use Lemma \ref{lem:b} to show that at the current time in the execution, the number of bins with elements in each interval decays exponentially as we take intervals which are further and further away from $\mu$.
Summing these values up gives the desired bound.
Finally, this implies:
\begin{theorem}
\label{thm:avgrank}
	For all $t$, we have $\E \left[ \frac{1}{n} \sum_{i = 1}^n \rank_j (t) \right] = O \left(A +  \frac{1}{\alpha^2} \right) n$.
\end{theorem}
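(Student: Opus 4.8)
The plan is to separate the average rank of top-of-bin elements into the rank of the mean label $\mu(t)$, which is already controlled by Lemma~\ref{lem:below-mu}, plus an ``excess'' contributed by bins whose top lies above $\mu(t)$. Write $\rank_j(t) = \rank(w_j(t))$. Since any bin with $w_j(t) \le \mu(t)$ satisfies $\rank_j(t) \le \rank(\mu(t))$ by monotonicity of $\rank$, and the bins above and below $\mu(t)$ together number $n$, we get
\[
\sum_{j=1}^n \rank_j(t) \;\le\; n\,\rank(\mu(t)) \;+\; \sum_{j\,:\,w_j(t) > \mu(t)} \bigl(\rank(w_j(t)) - \rank(\mu(t))\bigr).
\]
Taking expectations and dividing by $n$, Lemma~\ref{lem:below-mu} bounds the first term by $O\bigl((A + 1/\alpha^2)n\bigr)$, so it remains to show that the expected excess is $O\bigl((A + 1/\alpha^2)\,n^2\bigr)$.

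To handle the excess I would stripe the region above $\mu(t)$ into consecutive intervals of constant normalized width, $I_k = \bigl(\mu(t) + (k-1)n,\ \mu(t) + kn\bigr]$ for $k \ge 1$, and set $R_k := \rank(\mu(t)+kn) - \rank(\mu(t)+(k-1)n)$, the number of elements currently lying in $I_k$. For a bin $j$ with $x_j(t) - \mu(t) \in (k-1,k]$, monotonicity gives $\rank(w_j(t)) - \rank(\mu(t)) \le \sum_{\ell=1}^{k} R_\ell$, and reorganizing the double sum yields
\[
\sum_{j\,:\,w_j(t) > \mu(t)} \bigl(\rank(w_j(t)) - \rank(\mu(t))\bigr) \;\le\; \sum_{k \ge 1} R_k \cdot \widetilde{b}_{k-1}(t), \qquad \widetilde{b}_{k-1}(t) := \bigl|\{\, j : x_j(t) - \mu(t) > k-1 \,\}\bigr|.
\]
Now $\widetilde{b}_{k-1}(t) \le b_{>k-1-A}(t)$ once $k - 1 \ge A$, so Lemma~\ref{lem:b} gives $\E[\widetilde{b}_{k-1}] \le n\,e^{-\alpha(k-1-A)}$ for $k > A+1$, while trivially $\E[\widetilde{b}_{k-1}] \le n$ for all $k$; and Lemma~\ref{lem:poi1} gives $\E[R_k \mid \mu(t)] = O(n)$ since each stripe has constant normalized width. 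If $R_k$ and $\widetilde{b}_{k-1}$ were conditionally independent given $\mu(t)$, then splitting the sum around $k = A$ and bounding the two geometric series $\sum_{k \le A+1} O(n)\cdot n$ and $\sum_{k > A+1} O(n)\cdot n\,e^{-\alpha(k-1-A)}$ would immediately give $\E[\text{excess}] = O(An^2 + n^2/\alpha) = O\bigl((A + 1/\alpha^2)n^2\bigr)$ since $\alpha < 1$.

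The main obstacle is that $R_k$ and $\widetilde{b}_{k-1}$ are \emph{not} independent: how many elements currently occupy stripe $k$ and how many bins have their top that high are correlated through the removal history, so one may not simply multiply expectations. I would resolve this by conditioning on the entire configuration of top labels $W(t) = (w_1(t), \ldots, w_n(t))$. Given $W(t)$ the quantity $\widetilde{b}_{k-1}$ is determined; moreover, by the memorylessness of the exponential insertions, conditioned on $w_j(t)$ the labels of bin $j$ lying strictly above $w_j(t)$ are distributed as an independent rate-$\pi_j$ Poisson process on $(w_j(t), \infty)$, regardless of which removals actually occurred (since the process is prefixed, the fact that each bin holds only finitely many labels costs nothing here). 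Hence the expected number of \emph{present} elements of bin $j$ in any width-$n$ interval is at most $\pi_j n + 1 \le 2 + \gamma$, and only bins with $w_j(t) \le \mu(t) + kn$ can contribute to $R_k$, so $\E[R_k \mid W(t)] = O(n)$ --- the configuration-conditioned strengthening of Lemma~\ref{lem:poi1} that legitimizes $\E[R_k \widetilde{b}_{k-1}] = \E\bigl[\widetilde{b}_{k-1}\,\E[R_k \mid W(t)]\bigr] = O(n)\cdot\E[\widetilde{b}_{k-1}]$. Summing the two geometric series as above bounds the expected excess by $O(An^2 + n^2/\alpha)$; combined with the $\rank(\mu)$ bound from Lemma~\ref{lem:below-mu} and divided by $n$, this yields $\E\bigl[\tfrac1n\sum_j \rank_j(t)\bigr] = O\bigl((A + 1/\alpha^2)n\bigr)$, recalling $A = \log C/\alpha$.
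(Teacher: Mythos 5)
Your proof is correct and follows essentially the same route as the paper's: split the rank around the mean $\mu(t)$, control the part below via Lemma~\ref{lem:below-mu}, and control the part above by striping into width-$n$ intervals and combining the tail bound of Lemma~\ref{lem:b} with the Poisson/memorylessness estimates underlying Lemma~\ref{lem:poi1}, then summing a geometric series. Your rewriting of the excess as $\sum_k R_k\,\widetilde{b}_{k-1}$ with conditioning on the full top-label configuration $W(t)$ is a Fubini-style reorganization of the paper's per-bin bound, and if anything it handles the correlation between stripe occupancies and bin heights more explicitly than the paper's own write-up, which conditions only on $\mu$ and $w_j(t)$.
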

\begin{proof}
By Lemma \ref{lem:below-mu} and Lemma \ref{lem:poi1}, we have that 
\begin{align*}
\E [\rank_j (t) | w_j (t) \leq \mu (t) + An] &\leq O \left( (A + 1 / \alpha^2) n \right) + (A + 1)n \\
& = O \left( (A + 1 / \alpha^2) n \right) \; .
\end{align*}
Thus, we have
\begin{align*}
\E \left[  \sum_{i = 1}^n \rank_j (t) \right] &= \E \left[ \sum_{w_j(t) \leq \mu + An} \E[ \rank_j (t) | \mu ] + \sum_{w_j (t) > \mu + An} \E[ \rank_j (t) | \mu ] \right] \\
&= \E \left[ \sum_{w_j(t) \leq \mu + An} \E[ \rank_j (t) | \mu ] + \sum_{j = 1}^n \sum_{w_j (t) > \mu + An} \E[ \rank_j (\mu + An) + \ell_{j, (\mu, w_j (t))} | \mu, w_j (t) ] \right] \\
&\stackrel{(a)}{\leq}   O \left( \left( A + \frac{1}{\alpha^2 } \right) n^2 \right) + \sum_{j = 1}^n \E \left[ \sum_{w_j (t) > \mu + An} \E[ \ell_{j, (\mu, w_j (t))} | \mu, w_j (t) ]  \right] \\
&\stackrel{(b)}{\leq} O \left( \left( A + \frac{1}{\alpha^2 } \right) n^2 \right) + \sum_{j = 1}^n \E \left[ \sum_{w_j (t) > \mu + An} (w_j(t) - \mu + 1)  \right] \\
&= O \left( \left( A + \frac{1}{\alpha^2 } \right) n^2 \right) + \sum_{j = 1}^n \sum_{k = 0}^\infty \E \left[ \sum_{w_j (t) \in [\mu + (A + k)n, \mu + (A + k + 1) n ]} (w_j(t) - \mu + 1)  \right] \\
&\leq O \left( \left( A + \frac{1}{\alpha^2 } \right) n^2 \right) + n \sum_{k = 0}^\infty (A + k + 2) n \E \left[ b_{>k} \right] \\
&\stackrel{(c)}{\leq}  O \left( \left( A + \frac{1}{\alpha^2 } \right) n^2 \right) + O\left( \frac{1}{\alpha^2} n^2 \right).
\end{align*}
where (a) follows by Lemma~\ref{lem:below-mu}, (b) follows by Lemma \ref{lem:poi1}, and (c) follows from Lemma \ref{lem:b}.
Thus $$\E \left[ \frac{1}{n} \sum_{i = 1}^n \rank_j (t) \right] = O \left(A +  \frac{1}{\alpha^2} \right) n$$ as claimed.
\end{proof}

We now show how these imply bounds for our removal processes.
The actual rank choice at time $t$ is always better than a uniform choice in expectation, since it uses power of two choices.
If we consider an $(1 + \beta)$ process, where we only do two choices with probability $\beta = \Omega(\gamma)$, we obtain the following, by setting parameters as in (\ref{eq:param-delta}) and (\ref{eq:param-beta}) : 
\begin{corollary}
	\label{cor:avg}
	For all $t$, if we let $\beta = \Omega(\gamma)$, and we let $r(t)$ denote the rank of the removed element at time $t$, then
	\[
	\E [r(t)] = O \left( \left( \frac{\log C}{\alpha} + \frac{1}{\alpha^2}\right) n \right) =O \left( \frac{n}{\beta^2} \right) \; .\]
\end{corollary}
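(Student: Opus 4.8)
The plan is to assemble three already-established pieces: the rank-equivalence coupling of Section~\ref{sec:equivalence}, the average-rank bound of Theorem~\ref{thm:avgrank}, and the observation that power-of-two-choices removal never hurts in expectation, followed by a substitution of the concrete parameters. First, by Theorem~\ref{lem:distributions} and the coupling described at the start of Section~\ref{sec:analysis-general}, running the $(1+\beta)$ original process and the $(1+\beta)$ exponential process with the same $\beta$-flips and the same random bin indices at every removal makes the two processes delete an element of the same rank at each step. Hence it suffices to bound $\E[r(t)]$ in the exponential process, where Theorem~\ref{thm:avgrank} is available.

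Second, I would show that, conditioned on the current configuration $y(t)$, the $(1+\beta)$ removal rule is no worse in expectation than a single uniformly random removal. Sorting the bins in increasing order of top label, the probability $p_i$ of removing the $i$-th ranked bin is non-increasing in $i$ (this is exactly the closed form for $p_i$ stated in the notation paragraph of Section~\ref{sec:potential}, valid both for the single-choice part and the two-choice part), while $\rank_{(i)}(t)$, the rank of the top element of the $i$-th ranked bin, is non-decreasing in $i$. By Chebyshev's sum inequality, together with $\sum_{i=1}^{n} p_i = 1$,
\[
\E[r(t)\mid y(t)] \;=\; \sum_{i=1}^{n} p_i\,\rank_{(i)}(t) \;\le\; \frac{1}{n}\sum_{i=1}^{n}\rank_{(i)}(t) \;=\; \frac{1}{n}\sum_{j=1}^{n}\rank_j(t).
\]
Taking expectations over configurations and invoking Theorem~\ref{thm:avgrank} gives $\E[r(t)] \le O\big((A + 1/\alpha^2)\,n\big)$ with $A = \log C/\alpha$.

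Third, I plug in the parameter settings. By Theorem~\ref{thm:beta}, $C = C(\epsilon) = \poly(1/\epsilon)$ with $\epsilon = \beta/16$, and by the discussion following~(\ref{eq:param-beta}) one may take $\alpha = \Theta(\beta)$ while preserving $\epsilon \ge \delta$ for $\beta = \Omega(\gamma)$ (with $\gamma \le 1/2$). Consequently $A = \log C/\alpha = O(\log(1/\beta)/\beta)$ and $1/\alpha^2 = \Theta(1/\beta^2)$, so $A + 1/\alpha^2 = O(1/\beta^2)$, the $1/\alpha^2$ term dominating since $\log(1/\beta)/\beta = o(1/\beta^2)$. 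This yields $\E[r(t)] = O(n/\beta^2)$, as claimed.

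I do not expect a genuine obstacle here: all the difficulty has been absorbed into Theorem~\ref{thm:beta} and Theorem~\ref{thm:avgrank}. The only points requiring care are (i) checking that the step-by-step coupling of Section~\ref{sec:analysis-general} really transfers the per-step rank cost verbatim, so that a bound proved for the exponential process applies unchanged to the original one, and (ii) verifying the two monotonicities (of $\vec p$ in $i$ and of $\rank_{(i)}$ in $i$) that make the removal distribution dominated by the uniform one. Both are routine, so the corollary amounts to bookkeeping plus tracking of constants.
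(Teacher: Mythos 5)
Your proposal is correct and follows essentially the same route as the paper: couple to the exponential process via the rank-equivalence theorem, bound the removed rank by the average top-of-queue rank from Theorem~\ref{thm:avgrank} using the fact that the $(1+\beta)$ removal distribution is stochastically no worse than uniform, and substitute $\alpha = \Theta(\beta)$, $C = \poly(1/\epsilon)$ with $\epsilon = \beta/16$. The paper leaves the domination-by-uniform step as a one-line remark, which your Chebyshev's sum inequality argument simply makes explicit; otherwise the two arguments coincide.
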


\section{Experimental Results}
\label{appendix:tests}

\paragraph{Setup and Methodology.} 
We implemented a $(1 + \beta)$ priority queue based on the MultiQueue implementation from the priority queue benchmark framework of~\cite{Wimmer}, and benchmarked it against the original MultiQueue ($\beta = 1$), the Linden-Jonsson~\cite{linden2013skiplist} skiplist-based implementation, and the kLSM deterministic-relaxed data structure~\cite{klsm}, with a relaxation factor of $256$, which has been found to perform best. The MultiQueue uses efficient sequential priority queues from the $\mathsf{boost}$ library. 
Tests were performed on a recent Intel(R) Xeon(R) CPU E7-8890 (Haswell architecture), with 18 hardware threads, each running at 2.5GHz.   
The tests for mean rank returned use coherent timestamps to record the times when elements are returned at each thread. 
We use these in a post-processing step to count rank inversions. This methodology might not be $100\%$ accurate, since the use of timestamps might change the schedule; however, we believe results should be reasonably close to the true values. 

For the throughput experiments, we consider executions consisting of alternating insert and deleteMin operations, for $10$ seconds. 
Experiment outputs are averaged over $10$ trials. 
Removals on empty queues do not count towards throughput. Since we are interested in the regime where queues are never empty, we insert 10 million elements initially. Threads are pinned to cores, and memory allocation is affinitized. 
The single-source shortest paths benchmark is a version of Dijkstra's algorithm, running on a weighted, directed California road network graph.

\paragraph{Results.} Figure~\ref{fig:throughput} illustrates the throughput differential between the various implementations, As previously stated, the MultiQueue variants are superior to other implementations (except at very low thread counts). 
Of note, the variants with $\beta < 1$ improve on the standard implementation by up to $20\%$. 
Since throughput figures are not conclusive in isolation, we also benchmarked the average rank cost in Figure~\ref{fig:mean}. (The y axis is logarithmic.) 
Note that the increase in average cost due to the further $\beta$ relaxation is relatively limited. 
Results are conformant with our analysis for $\beta \geq 0.5$. The apparent inflection point at around $\beta = 0.5$ could be explained by the $\epsilon \geq \delta$  bias assumptions breaking down after this point, or by non-trivial correlations in the actual execution which mean that our analysis no longer applies. 

Finally, Figure~\ref{fig:sssp} gives a running times for a single-source shortest path benchmark, using a parallel version of Dijkstra's algorithm. 
We note that the relaxed versions with $\beta <1$ can be superior in terms of running time to the version with $\beta = 1$, by up to $10\%$. The version with $\beta = 0$ (not shown) is the fastest at low thread counts, but then loses performance at  thread counts $\geq 8$, probably because of excessive relaxation.

\begin{figure}
    \centering
    \begin{minipage}{0.47\textwidth}
        \centering
        \includegraphics[width=\textwidth]{images/throughputv3.png} 
        \caption{Throughput comparison for  the $(1 + \beta)$ priority queue with $\beta = 0.5$ and $0.75$, versus the original MultiQueues, the Linden-Jonsson implementation, and kLSM. Higher is better.}
        \label{fig:throughput}
    \end{minipage}\hfill
    \begin{minipage}{0.47\textwidth}
        \centering
        \includegraphics[width=\textwidth]{images/meanv6.png} 
        \caption{Mean rank returned (log scale) for the $(1 + \beta)$ priority queue, for various values of $\beta$ on $8$ queues and $8$ threads. Lower is better.}
        \label{fig:mean}
    \end{minipage}
\end{figure}

\begin{figure}
    \centering
        \includegraphics[width=0.5\textwidth]{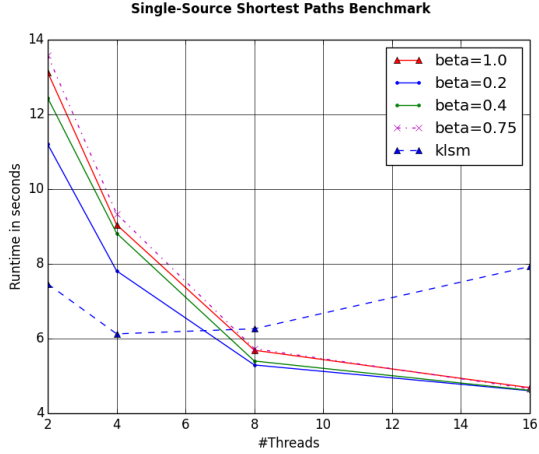} 
        \caption{Running times for single-source shortest path benchmark, using various versions of the priority queue, and kLSM. Lower is better.}
        \label{fig:sssp}
\end{figure}

\section{Discussion and Future Work}

We have provided tight rank guarantees for a practically-inspired randomized priority scheduling process. Moreover, we showed that this strategy is robust in terms of its bias and randomness requirements.  
Intuitively, our results show that, given biased random insertions into the queues, the preference towards lower ranks provided by the two-choice process is enough to give strong linear bounds on the average rank removed. We extended our analysis to a practical algorithm which improves on the state-of-the-art MultiQueues in Section~\ref{appendix:tests}. 

\paragraph{Tightness.} 
The bounds we provide for the two-choice process are asymptotically tight. To see that the $O(n)$ bound is tight, it suffices to notice that, even in a best-case scenario, the rank of the $k$th least expensive queue is at least $k$. Hence, the process has to have expected rank cost $\Omega( n )$. The tightness argument for $\Theta( n \log n  )$ expected worst-case cost is more complex. 
In particular, it is known~\cite[Example 2]{PTW15} that the gap between the most loaded and average bin load in a weighted balls-into-bins process with weights coming from an exponential distribution of mean $1$ is $\Theta( \log n )$ in expectation. That is, there exists a queue $j$ from which we have removed $\Theta( \log n )$ fewer times than the average.  
We can extend the argument in Section~\ref{sec:avgrank} to prove that there exist $\Theta(n)$ queues which have $\Theta( \log n )$ elements of higher label than the top element of queue $j$. This implies that the \emph{rank cost} of queue $j$ is $\Theta( n \log n )$, as claimed. 
We conjecture that the dependency in $\beta$ for the expected rank bound on the $(1 + \beta)$ process can be improved to \emph{linear}.

\paragraph{Relation to Concurrent Processes.} 
An important question is whether we can show that the practical \emph{concurrent} implementations  provide similar bounds. 
This claim holds if we assume that, in the concurrent implementation, the steps comparing the two top elements of the randomly chosen queues and removing the better top element are all performed \emph{atomically}, e.g. through hardware transactional memory. In this case, we can show that the probability distributions of ranks returned by the concurrent and sequential processes are identical, and hence our guarantees apply. We formalize the required property, called \emph{distributional linearizability}, in Appendix~\ref{sec:sequential-concurrent}. 

This property does not appear to hold in general for fine-grained lock-based or lock-free implementations. Appendix~\ref{sec:sequential-concurrent} provides a detailed discussion of the difficulties and limitations of extending this approach to a concurrent setting. In particular, due to complex correlations implied by concurrent execution, our bounds might not hold for some locking strategies for subtle reasons, and it appears highly non-trivial to rigorously extend them to any particular fine-grained strategy. 

In future work, we plan to examine whether  exist \emph{relaxations} of distributional linearizability satisfied by real implementations, which would allow our bounds to be extended to concurrent implementations without additional assumptions. The results in Section~\ref{appendix:tests} suggest that existing implementations do satisfy strong rank guarantees.  

\paragraph{Future Work.} 
An extremely intriguing question regards the impact of using such relaxed priority queue structures in the context of existing parallel algorithms, e.g. for shortest-paths computation. 
In particular, it would be interesting to bound the amount of extra work caused by relaxation versus the benefits of parallelism.

Another interesting question in terms of future work is whether our results can also be extended to processes on \emph{graphs}. 
More precisely, imagine we are given a strongly connected, undirected graph with $n$ vertices. 
Insertions of increasing integer labels are performed at uniformly random nodes. 
In every removal step, we pick a random edge, remove the higher priority label among its endpoints, and pay the rank of this label among all labels. 
It is not hard to see that, for graph families with good expansion, our analytic framework can be extended to this process as well. 
We plan to explore tight bounds for such graphical processes in future work.


\bibliographystyle{plain}
\bibliography{bibliography}

\appendix

\section*{Appendix}

The Appendix proceeds as follows. 
We give a reduction between classic power-of-two-choices processes and our sequential process, under round-robin insertions, in Section~\ref{sec:reduction-rr}.  We prove that the single random choice process diverges in terms of rank cost in Section~\ref{sec:diverge}. 
We cover the definition of distributional linearizability and relation to concurrent processes in Section~\ref{sec:sequential-concurrent}. 
We give complete proofs in Sections~\ref{sec:full},~\ref{sec:app-proofs-maxrank}, and~\ref{sec:app-proofs-avgrank}.

\section{Reduction to Two-Choices Process for Round-Robin Insertions}
\label{sec:reduction-rr}

\paragraph{Process Description.} For clarity, we re-define the process for round-robin insertions: 
  we are given $n$ queues, into which consecutively labeled elements are inserted in round-robin order: at the $t$th insertion step, we insert the element with label $t$  into bin $t \pmod n$. 
To remove an element, we pick two queues at random, and remove the element of lower label on top of either queue. 
We again measure the cost of a removal as the rank of the removed label among labels still present in any of the queues.

\paragraph{Reduction.} We will associate a ``virtual" bin $V_i$ with each queue $i$. Whenever an element is removed from the queue $i$, it gets immediately placed into the corresponding virtual bin. 
Notice that in every step we are removing the element of minimum label, which, by round-robin insertion, corresponds to the queue having been removed from less times than the other choice. 
Alternatively, we are inserting the removed element into the \emph{less loaded} virtual bin, out of two random choices (implicitly breaking ties by bin ID). 
This is the classic definition of the two-choices process~\cite{ABKU}, which has been analyzed for the long-lived case in e.g.~\cite{Berenbrink00}. 
One can prove bounds on the rank of elements removed using the existing analyses of this process, e.g.~\cite{Berenbrink00, PTW15}. 

\paragraph{The Reduction Breaks for Random Insertions.} Notice that the critical step in the above argument is the fact that, out of two random choices $i$ and $j$, 
the less loaded virtual bin corresponds to the queue which has lower label on top. This does not hold in the random insertion case; for instance, it is possible for a bin to get two elements with consecutive labels, which breaks the above property. 
Upon closer examination, we can observe that in the random insertion case, the correlation between the event that queue $i$ has lower label than queue $j$ and the event that queue $i$ has been removed from less times than queue $j$ exists, but is too weak to allow a direct reduction. 

\section{The Single Choice Process Diverges}
\label{sec:diverge}

In this section, we prove the following claim. 

\begin{theorem}
	\label{thm:diverge}
	The expected maximum rank guarantee of the process which inserts and removes from uniform random queues in each time step $t$ evolves as $\Omega( \sqrt{t n \log n} )$, for  $t = \Omega (n \log n )$.  
\end{theorem}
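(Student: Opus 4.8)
The plan is to show that the single-choice process, where each insertion and each removal both pick a uniformly random queue, behaves essentially like $n$ independent birth-death processes whose drift is zero, so that the occupancy imbalance among queues grows like a random walk. Concretely, I would track, for each queue $i$, the quantity $D_i(t)$ equal to the number of elements inserted into queue $i$ minus the number removed from queue $i$ up to time $t$ — equivalently the current load. Since labels are inserted in increasing order and within a queue FIFO order is preserved, the top label of a queue that has received $a$ insertions and suffered $b$ removals is (in the round-robin picture, or in expectation in the random picture) roughly the $(b+1)$-st smallest label it holds; and the \emph{rank} of a removed element is at least the number of elements globally that are smaller than it. The key point: the rank removed from queue $i$ at time $t$ is at least $\sum_{j} (\text{number of elements in } Q_j \text{ below the removed label})$, and a queue from which we have removed many fewer times than average will have a large backlog of small labels sitting in \emph{other} queues.

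First I would set up the reduction to a balls-into-bins / random-walk statement: in the single-choice process, removals target a uniformly random (non-empty) queue, exactly like throwing a ball into a uniformly random bin; insertions also target a uniformly random queue. So the vector of "removal counts" $(r_1(t),\dots,r_n(t))$ is distributed as $t/2$ balls thrown uniformly at random (once we are in the steady regime where queues are never empty, using the prefixed-execution assumption), and likewise insertion counts. After $\Theta(t)$ steps, standard maximal-inequality/anti-concentration arguments for sums of i.i.d.\ multinomial coordinates give that with constant probability there is a queue $j$ whose removal count deviates below the mean by $\Omega(\sqrt{(t/n)\log n})$ — this is the classic "the least-loaded of $n$ bins after $m$ balls is $m/n - \Theta(\sqrt{(m/n)\log n})$" fact, valid once $m = \Omega(n\log n)$. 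Equivalently, queue $j$ retains $\Omega(\sqrt{(t/n)\log n})$ more of its oldest (smallest-label) elements than a typical queue.

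Next I would convert this load deficit into a rank lower bound, mirroring the tightness discussion in the paper. If queue $j$ has removed $\Theta(\sqrt{(t/n)\log n})$ fewer times than average, then it still holds that many elements whose labels are smaller than the current top labels of a constant fraction of the other queues. More precisely, I would argue that there are $\Theta(n)$ queues each of which currently holds $\Omega(\sqrt{(t/n)\log n})$ elements with label larger than $w_j(t)$ (the top label of $j$), because those queues have "made progress" past $w_j(t)$ by the average amount while $j$ has lagged. Hence when we next remove from one of those $\Theta(n)$ lagging-comparison queues — or indeed from $j$ itself — the removed label has rank at least $\Theta(n)\cdot\Omega(\sqrt{(t/n)\log n}) = \Omega(\sqrt{t n \log n})$. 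Taking expectations (the deficit event holds with constant probability, and rank is nonnegative) yields $\E[\maxrank(t)] = \Omega(\sqrt{t n \log n})$ for $t = \Omega(n\log n)$, which is the claim.

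The main obstacle I expect is making the second step — "load deficit implies rank deficit" — rigorous in the \emph{random} insertion model rather than round-robin, because the correspondence between "number of removals from $Q_j$" and "rank of the top label of $Q_j$" is only exact in the round-robin case (as the paper itself notes when explaining why the clean reduction breaks). I would handle this by showing the two quantities are within $O(\sqrt{t/n})$ of each other with high probability: the number of elements globally with label below the $r_j$-th label ever inserted into $Q_j$ concentrates around $r_j\cdot(n)$ up to $\tilde O(\sqrt{r_j n})$ fluctuations, which is a lower-order term compared to the $\Omega(\sqrt{tn\log n})$ signal when $t=\Omega(n\log n)$. A secondary technical point is justifying the "queues are never empty" assumption uniformly over the whole horizon $[0,t]$ so that removals really are uniform over all $n$ queues; this follows from the prefixed-execution definition together with the buffering argument already given in the paper, so I would simply invoke it. Everything else is a routine application of Kolmogorov's maximal inequality and standard lower bounds on the minimum bin load.
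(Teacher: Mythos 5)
Your first step (reducing removal counts to a classic uniform balls-into-bins process and extracting a $\Omega\bigl(\sqrt{(t/n)\log n}\bigr)$ deviation in the number of removals) matches the paper. The gap is in your second step, and it is a real one: you anchor the argument on the \emph{lagging} queue $j$ (fewest removals) and claim that removing from $j$, or from queues holding elements with labels \emph{larger} than $w_j(t)$, yields rank $\Theta(n)\cdot\Omega\bigl(\sqrt{(t/n)\log n}\bigr)$. This has the rank inequality backwards. The rank of a removed label counts the elements currently in the system with \emph{smaller} labels; elements larger than $w_j(t)$ sitting in other queues contribute nothing, and the lagging queue's own top is an old, small label, so removing from $j$ is in fact \emph{cheap}. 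The expensive removal is from the queue that has been removed from the \emph{most}: its top label is far ahead, and the small-label backlog retained by the queues that lag behind it is what inflates its rank. This is exactly how the paper argues: take $Q$ with the maximum number of removals (whose count exceeds the mean by $\Omega\bigl(\sqrt{(t/n)\log n}\bigr)$ w.h.p.), note that with constant probability $\Theta(n)$ queues have at most the average number of removals, so each such queue has a deficit of $\Omega\bigl(\sqrt{(t/n)\log n}\bigr)$ relative to $Q$, and each deficit element lies below $Q$'s top with probability at least $1/2$ by symmetry; summing gives $\Omega(\sqrt{tn\log n})$ elements below $Q$'s top.

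Note also that even the charitable reading of your step does not recover the claimed bound: with $j$ as the minimum-removal queue, a removal from a typical (near-average) queue sees only $j$'s deficit of $\sqrt{(t/n)\log n}$ plus typical pairwise deficits of order $\sqrt{t/n}$ (no $\log n$) from the other queues, so you would get at best $\Omega(\sqrt{tn}+\sqrt{(t/n)\log n})$, losing the $\log n$ factor; the $\log n$ must come from the \emph{upward} deviation of the single maximum-removal queue, with the other $\Theta(n)$ queues only needing to be at or below the mean. Finally, your planned machinery for relating removal counts to label positions (concentration of the global count of labels below the $r_j$-th label of $Q_j$, Kolmogorov-type maximal inequalities) is unnecessary: the paper's symmetry argument, that each deficit element is smaller than $Q$'s top with probability at least $1/2$, sidesteps the round-robin-versus-random-insertion issue entirely.
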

\begin{proof}
	First, notice that we can apply the reduction in Section~\ref{sec:reduction-rr} to obtain that the maximum number of elements removed from a queue is the same as the maximum load of a bin in the classic long-lived random-removal process. (This holds irrespective of the insertion process, since we are performing uniform random removals.) The maximum load of a bin in the classic process is known to be $\Theta\left(t / n + \sqrt{\frac{t}{n} \log n} \right)$, with high probability, for $t > n \log n$~\cite{PTW15}. 
	
	Let the queue with the most removals up to time $t$ be $Q$. For any other queue $Q_i$ with less removals, let $\ell( Q_i )$ be the difference between the number of elements removed from $Q$ and the number of elements removed from $Q_i$. 
	Notice that the expected number of elements removed from a queue up to $t$ is $t / n$. Hence, it is easy to prove that, with probability at least $1 / 2$, there are at least $n / 4$ queues which have had at most $t / n$ elements removed. Hence, for any such queue $Q_i$, $\ell( Q_i ) \geq \sqrt{\frac{t}{n} \log n}$. Hence, in expectation, $\sum_i \ell(Q_i) = \Omega (\sqrt{tn \log n})$. 
	
	We say that an element is \emph{light} if its label is \emph{smaller} than the label of the the top element of $Q$. 
	By symmetry, we know that each element counted in $\sum_i \ell(Q_i)$ is light with probability $\geq 1/2$. 
	Putting everything together, we get that, in expectation, we have $\Omega( \sqrt{tn \log n } )$ elements which are \emph{light}, i.e. of smaller label than the top label of $Q$. This implies that the expected rank of $Q$ is $\Omega( \sqrt{tn \log n } )$, as claimed. 

\end{proof}

\section{Relationship to Concurrent Processes}
\label{sec:sequential-concurrent}

On first glance, it might seem that a simple lock-based strategy should be linearizable to the sequential process we define in Section~\ref{sec:def}. For example, we could lock both queues that are to be examined (locking in order of their index to avoid deadlock and restarting the operation on failure) and declare the linearization point to be the point at which the second lock is grabbed. While this does linearize to \emph{some} relaxed sequential process, it turns out that our upper bounds fail to hold for subtle reasons when concurrency is introduced.

Consider the extreme execution in which process $0$ grabs the locks on some two queues, say $Q_i,Q_j$, and then hangs for a long time. Meanwhile, all the other processes complete many operations while process $0$ holds these locks, and all operations will have to retry if they try to grab locks on $Q_i$ or $Q_j$. In this case, many delete operations will be performed, none of which can delete from $Q_i$ or $Q_j$. Such an execution could produce arbitrarily bad rank errors. We can formalize the property we would like a correct concurrent strategy to have as \emph{distributional linearizability}:

\begin{definition}[Distributional Linearizability]
A randomized data structure $Q$  is \emph{distributionally linearizable} to a sequential data structure $S$ if for any parallel asynchronous execution, there exists a linearization of the operations of $Q$ to operations of $S$ so that the outputs of each operation of $Q$ are distributionally equivalent to those $S$.
\end{definition}

The simple locking strategy fails to be distributionally linearizable due to, e.g., the counter-example execution above. We would like to ask if there is a distributionally linearizable strategy, yet there appears to be an inherent limitation. Consider three processes $i, j, k$ performing \del{} operations in lock-step. No matter what strategy is used, if some two processes, say $i,j$, try to delete from the same queue $Q$, at least one, say $j$, will necessarily be delayed. As a result $i$ and $k$ will finish their operations while $j$ takes longer, causing $i$ and $k$ to be linearized before $j$ (this can be forced by e.g. an insertion to $Q$ while $j$ is delayed). As a result, an additional constraint has been introduced which requires that the first two \del{} operations to complete in this execution cannot have deleted from the same queue, a constraint which is not present for the sequential process.

In this way, the distribution of outputs of any concurrent process is affected by subtle timing issues in ways which seem hard to circumvent. We conjecture that there is no concurrent algorithm which is distributionally linearizable to the sequential process. That said, it may be possible that one can produce a simple strategy satisfying a weaker condition than distributional linearizability such as an analogous variant of sequential consistency or quiescent consistency.

\section{Omitted Proofs from Section \ref{sec:potential}}
\label{sec:full}
Here, we give a complete version of the potential argument. 

\begin{lemma}
\label{lem:psibound}
	$$\E \left[ \Delta \Psi \,|\, y(t) \right] \leq \sum_{i = 1}^n \hat{\alpha}\left(  (1 + \delta) \frac{1}{n} - p_i  \right) \exp\left( - \alpha y_i(t) \right).$$
\end{lemma}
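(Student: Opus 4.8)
The plan is to mirror the proof of Lemma~\ref{lem:phibound} line for line, with the only change being that the per-queue potential is now $\Psi_i(t) = \exp(-\alpha y_i(t))$ instead of $\Phi_i(t) = \exp(\alpha y_i(t))$. Recall from the notation that if the queue of rank $i$ is picked for removal (probability $p_i$) then $y_i(t+1) = y_i(t) + \kappa_i(1 - 1/n)$, while if any other queue $j \neq i$ is picked then $y_i(t+1) = y_i(t) - \kappa_j/n$. Substituting into $\Psi_i$, in the first case
$$\Delta \Psi_i = \Psi_i(t)\left( \exp\left( -\tfrac{\alpha}{n}\kappa_i\left(1 - \tfrac{1}{n}\right) \right) - 1 \right),$$
and in the second case $\Delta \Psi_i = \Psi_i(t)\left( \exp\left( \tfrac{\alpha \kappa_j}{n^2} \right) - 1 \right)$. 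Note the sign flip relative to $\Phi$: being picked now tends to \emph{shrink} $\Psi_i$, and not being picked tends to \emph{grow} it, which is exactly what will swap $(1+\delta)p_i - \frac{1}{n}$ into $(1+\delta)\frac{1}{n} - p_i$.

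\textbf{Key steps.} First, take expectations over the exponential costs $\kappa$ using the moment-generating function $\E[e^{s\kappa}] = \frac{\pi}{\pi - s}$ for $\kappa \sim \Exp(\pi)$ and $s < \pi$:
$$\E\left[ \exp\left( -\kappa_i \tfrac{\alpha}{n}\left(1-\tfrac{1}{n}\right) \right) \right] = \frac{1}{1 + \frac{\alpha}{\pi_i n}\left(1 - \frac{1}{n}\right)} \leq 1 - \frac{\alpha}{\pi_i n}\left(1 - \frac{1}{n}\right) + c\left( \frac{\alpha}{\pi_i n} \right)^2,$$
$$\E\left[ \exp\left( \kappa_j \tfrac{\alpha}{n^2} \right) \right] = \frac{1}{1 - \frac{\alpha}{\pi_j n^2}} \leq 1 + \frac{\alpha}{\pi_j n^2} + c\left( \frac{\alpha}{\pi_j n^2} \right)^2,$$
both valid since $\alpha < 1$ and $\pi_i n \in [1-\gamma, 1+\gamma]$ keep the MGF arguments safely below $\pi_i$. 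Second, weight these by $p_i$ and $p_j$ respectively, sum over all cases, and use $\sum_j p_j = 1$ to cancel the constant terms, exactly as for $\Phi$ (absorbing the genuinely negligible $O(1/n^2)$-scale terms, such as extending a sum over $j \neq i$ to all $j$, into the constant $c$). The "picked" case contributes $-p_i\frac{\alpha}{\pi_i n}(1 - 1/n) + O(\alpha^2/(\pi_i n)^2)p_i \leq -\alpha(1 - \gamma - c\alpha(1+\gamma)^2)p_i = -\hat\alpha\, p_i$, using $\frac{1}{\pi_i n} \geq 1 - \gamma$; the "not picked" case contributes $\frac{\alpha}{n}\sum_j p_j \frac{1}{\pi_j n}(1 + O(\cdot)) \leq \frac{\alpha}{n}(1 + \gamma + c\alpha(1+\gamma)^2) = \hat\alpha(1+\delta)\frac{1}{n}$, using $\frac{1}{\pi_j n} \leq 1 + \gamma$ and the definitions $\hat\alpha = \alpha(1 - \gamma - c\alpha(1+\gamma)^2)$, $1 + \delta = \frac{1 + \gamma + c\alpha(1+\gamma)^2}{1 - \gamma - c\alpha(1+\gamma)^2}$ from~(\ref{eq:param-delta}). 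Thus $\E[\Delta\Psi_i \mid y(t)]/\Psi_i(t) \leq \hat\alpha\left( (1+\delta)\frac{1}{n} - p_i \right)$, and summing over $i$ with weights $\Psi_i(t)$ gives the claim.

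\textbf{Main obstacle.} There is no substantial obstacle here — the statement is the exact mirror image of Lemma~\ref{lem:phibound}. The only care needed is the sign bookkeeping: because $\Psi_i = e^{-\alpha y_i}$, the effects of "picked" and "not picked" on the potential are reversed relative to $\Phi$, so one must check that each Taylor estimate is applied in the direction that yields an \emph{upper} bound on $\E[\Delta\Psi]$, and that the resulting positive and negative contributions indeed assemble into $(1+\delta)\frac{1}{n} - p_i$ rather than $(1+\delta)p_i - \frac{1}{n}$.
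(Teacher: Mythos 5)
Your proposal is correct and follows essentially the same route as the paper's own proof in Appendix~\ref{sec:full}: the identical two-case decomposition (queue $i$ picked versus some $j \neq i$ picked), the same moment-generating-function evaluations $\frac{1}{1+\frac{\alpha}{\pi_i n}(1-\frac 1n)}$ and $\frac{1}{1-\frac{\alpha}{\pi_j n^2}}$ with Taylor upper bounds, and the same assembly via $\frac{1}{\pi_i n}\in[1-\gamma,1+\gamma]$, $\sum_j p_j = 1$, and the definitions of $\hat{\alpha}$ and $\delta$ into $\hat{\alpha}\bigl((1+\delta)\frac{1}{n}-p_i\bigr)$. Your explicit remark about absorbing the $O(1/n^2)$-scale terms matches (and is slightly more careful than) the paper's own handling of the $(1-\frac{1}{n})$ factor and the extension of the sum from $j\neq i$ to all $j$.
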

\begin{proof}
\paragraph{Case 1:} If the bin is chosen, then the change is:

\begin{eqnarray*}
	\Delta \Psi_i  &=& \Psi_i(t+ 1) - \Psi_i(t) = \exp\left( - \frac{\alpha}{n} ( y_i(t) + \Delta_i \left(1 - \frac{1}{n} \right) ) \right) - \exp\left( -  \frac{\alpha}{n} y_i(t)  \right) \\ 
	&= & \exp\left( -\frac{\alpha}{n} y_i(t) \right) \left( \exp \left( - \frac{\alpha}{n} \Delta_i \left(1 - \frac{1}{n} \right) \right) - 1 \right)
\end{eqnarray*}

Taking expectations with respect to the random choices made on insertion, we have 
\begin{eqnarray*}
	\E_i \left[  \exp \left( - \Delta_i \frac{\alpha}{n}  \left(1 - \frac{1}{n} \right) \right)  \right] &= & \frac{1}{ 1 +  \frac{\alpha }{\pi_i n} \left( 1 - \frac{1}{n}  \right) } \\
	&=& 1 -  \frac{\alpha }{\pi_i n}\left( 1 - \frac{1}{n}  \right)  + \left( \frac{\alpha }{\pi_i n} \left( 1 - \frac{1}{n}  \right) \right)^2 - \ldots \\
  &\leq & 1 -  \frac{\alpha }{\pi_i n} \left( 1 - \frac{1}{n}  \right)  + \left( \frac{\alpha }{ \pi_i n} \left( 1 - \frac{1}{n}  \right)  \right)^2.
\end{eqnarray*}

\paragraph{Case 2:} If some other bin $j \neq i$ is chosen, then the change is:
\begin{eqnarray*}
	\Delta \Psi_i = & \exp\left( - \frac{\alpha}{n} ( y_i(t) - \Delta_j \frac{1}{n}  ) \right) - \exp\left( - \frac{\alpha}{n} y_i(t)  \right) = \\ 
	= & \exp\left( - \frac{\alpha}{n} y_i(t) \right) \left( \exp \left( \frac{\alpha \Delta_j}{n^2}  \right)  - 1 \right).
\end{eqnarray*}

Again taking expectations with respect to the random choices made on insertion, we have 
\begin{eqnarray*}
	\E_i \left[  \exp \left( \Delta_i \frac{\alpha}{n^2}  \right)  \right] = & \frac{ \pi_j  }{ \pi_j -  \frac{\alpha}{n^2}  }  = 
	\frac{ 1 }{ 1  -  \frac{\alpha}{\pi_j n^2}   } = 1 + \frac{\alpha}{\pi_j n^2} + \left( \frac{\alpha}{\pi_j n^2} \right)^2 + \ldots \leq  1 + \frac{\alpha}{\pi_j n^2} + c \left( \frac{\alpha}{\pi_j n^2} \right)^2.
\end{eqnarray*}

Therefore, we have that 

\begin{eqnarray*}
	\E \left[ \Delta \Psi_i \right]  &\leq & \left(  1 - \frac{\alpha }{n \pi_i}  \left( 1 - \frac{1}{n} \right) + \left(  \frac{\alpha }{n \pi_i } (1 - \frac{1}{n} ) \right)^2 \right) p_i  - 1 
	+ \sum_{j \neq i} \left(  1 + \frac{\alpha}{n^2 \pi_j }  + c \left( \frac{\alpha}{n^2 \pi_j }   \right)^2 \right) p_j  \\ 
	&\leq & 
	\left(  - \frac{\alpha}{n \pi_i}  + c \left( \frac{\alpha}{n \pi_i} \right)^2  \right)p_i + \sum_{i = 1}^n \left( \frac{\alpha}{n^2\pi_j} + c \left(\frac{\alpha}{n^2 \pi_j}  \right)^2 	 \right) p_j  \\ 
	&\leq & - \alpha p_i \left(  1 - \gamma  - c \alpha (1 + \gamma)^2 \right) p_i + \frac{\alpha}{n} \left( 1 + \gamma + \alpha c (1 + \gamma)^2 \right)  \\
	&\leq & \hat{\alpha} \left(  (1 + \delta) \frac{1}{n} - p_i  \right).
\end{eqnarray*}
\end{proof}

\begin{lemma}
\label{lem:phigood}
If $y_{n / 4} \leq 0$, then we have that 
\begin{eqnarray*}
\E \left[  \Phi(t + 1) \,|\, y(t) \right] 
\leq \left( 1 -   \frac{\hat{\alpha} \epsilon}{3n} \right) \Phi(t) + 1.
\end{eqnarray*}
\end{lemma}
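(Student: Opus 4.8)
The plan is to start from the general bound on $\E[\Delta\Phi \mid y(t)]$ in Lemma~\ref{lem:phibound}, namely
$$
\E[\Delta\Phi \mid y(t)] \le \sum_{i=1}^n \hat\alpha\Big((1+\delta)p_i - \tfrac1n\Big)\exp(\alpha y_i(t)),
$$
and to exploit the hypothesis $y_{n/4}(t) \le 0$, which says that at least a quarter of the queues have top label below the mean. I would split the sum at index $n/4$. For the first $n/4$ indices, the exponential factors $\exp(\alpha y_i(t))$ are each at most $1$ (since $y_i \le y_{n/4} \le 0$ there), so that block contributes at most $\sum_{i=1}^{n/4}\hat\alpha\big((1+\delta)p_i - \tfrac1n\big) \le \hat\alpha\big((1+\delta)\sum_{i\le n/4}p_i - \tfrac14\big)$, a bounded quantity, which I want to absorb into the additive ``$+1$'' of the statement (after checking $\hat\alpha \le 1$ and the constant is $O(1)$; here I would use $\sum_{i=1}^{n/4}p_i \le \tfrac14(1+\beta)$ from the running estimate $\sum_{i=1}^m p_i \simeq \tfrac mn(1+\beta-\tfrac mn\beta)$).

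For the tail indices $i > n/4$, I keep the exponential factors, so this block is at most $\hat\alpha\sum_{i>n/4}\big((1+\delta)p_i-\tfrac1n\big)\exp(\alpha y_i(t))$. The key is that the coefficient $(1+\delta)p_i - \tfrac1n$ is non-increasing in $i$ and, crucially, negative for $i \ge n/4$: using $\sum_{i=1}^{n/4}p_i \ge \tfrac14 + \epsilon$ (equivalently the estimate with $\beta = 16\epsilon$) and $\delta \le \epsilon$, one gets $p_i \le p_{n/4} \le \tfrac1n\big(1 + \beta - \tfrac{\beta}{2}\cdot\text{(something)}\big)$ — more carefully, I want to show $(1+\delta)p_i - \tfrac1n \le -\tfrac{c'\epsilon}{n}$ for $i \ge n/4$ for an explicit constant $c'$, leading ultimately to the claimed rate $\tfrac{\hat\alpha\epsilon}{3n}$. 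Then $\sum_{i>n/4}(\,\cdot\,)\exp(\alpha y_i) \le -\tfrac{c'\epsilon}{n}\Phi_{>n/4}$, and since $\Phi_{\le n/4} \le n/4$ (each term at most $1$), we have $\Phi_{>n/4} \ge \Phi(t) - n/4$. Combining, $\E[\Delta\Phi\mid y(t)] \le -\tfrac{c'\hat\alpha\epsilon}{n}\Phi(t) + (\text{const})$, and picking constants so that $c' \ge \tfrac13$ and the additive term is at most $1$ gives $\E[\Phi(t+1)\mid y(t)] \le (1-\tfrac{\hat\alpha\epsilon}{3n})\Phi(t)+1$.

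The main obstacle is the bookkeeping of constants: I need to verify that the negative drift from the tail genuinely dominates at rate $\tfrac{\hat\alpha\epsilon}{3n}$ after accounting for (i) the positive contribution of the first $n/4$ queues, (ii) the slack $\delta \le \epsilon$ in Eq.~(\ref{eq:param-beta}), and (iii) the fact that I threw away the terms with small positive $y_i$ between $n/4$ and wherever the coefficient changes sign. The cleanest route is probably the one used in the proof of Lemma~\ref{lem:phibad}: bound $(1+\delta)p_i - \tfrac1n$ at the threshold index and use monotonicity, but here the threshold is $n/4$ rather than $\lambda n$, so I would re-derive $p_{n/4}\le \tfrac1n - \tfrac{4\epsilon}{n}\cdot(\text{factor})$ from the closed form $p_i = (1-\beta)\tfrac1n + \beta[\tfrac2n(1-\tfrac{i-1}{n}) - \tfrac1{n^2}]$ at $i = n/4$, which gives $p_{n/4} \approx \tfrac1n(1 + \tfrac\beta2)$; this is still $>\tfrac1n$, so in fact the sign change happens later, and I should instead group the genuinely-negative tail (indices past roughly $(1+\delta)/(1+\beta)\cdot$ something) against the mildly-positive middle band, exactly as the $\lambda$-split does with $\lambda = 2/3 - 1/54$, and then use $\exp(\alpha y_i)$ monotonicity to show the middle band's positive mass is outweighed. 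So the honest statement is that I would mimic the $\lambda$-split of Lemma~\ref{lem:phibad} but under the easier hypothesis $y_{n/4}\le0$ (which lets me bound $\Phi_{\le n/4}\le n/4$ outright instead of invoking $\Psi$), and the delicate part is confirming the resulting coefficient is at least $\epsilon/3$ rather than some smaller constant.
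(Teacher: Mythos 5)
You correctly catch that the naive version of your split fails (indeed $p_{n/4}\approx\frac1n(1+\frac{\beta}{2})>\frac1n$, so the coefficients $(1+\delta)p_i-\frac1n$ are still positive at $i=n/4$ and only change sign past $i\approx n/2$), but the fallback you settle on --- mimicking the $\lambda$-split of Lemma~\ref{lem:phibad} with $\Phi_{\le n/4}\le n/4$ in place of the appeal to $\Psi$ --- does not close, and this is a genuine gap rather than constant-chasing. The $\lambda$-split yields $\E[\Delta\Phi\mid y(t)]\le\frac{\hat\alpha}{n}\bigl(\frac{1+\delta}{\lambda}-1\bigr)\Phi_{\le\lambda n}-\frac{3\epsilon\hat\alpha}{n}\Phi_{>\lambda n}$, and the hypothesis $y_{n/4}\le0$ controls only $\Phi_{\le n/4}$: the middle band $\Phi_{(n/4,\lambda n]}$ can be as large as $\frac{\lambda-1/4}{1-\lambda}\Phi_{>\lambda n}\approx 1.13\,\Phi_{>\lambda n}$ (by sortedness of the $y_i$), while its coefficient $\frac{1+\delta}{\lambda}-1\approx 0.55$ is a fixed constant, far exceeding the tail's $3\epsilon\le\frac{3\beta}{16}$. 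Sharpening the middle-band coefficient to its maximum $(1+\delta)p_{n/4}-\frac1n\approx\frac{\beta/2+\delta}{n}$ does not rescue it either: up to $\delta$-terms one would need $\frac{\beta}{2}\cdot\frac{\lambda-1/4}{1-\lambda}<\beta(2\lambda-1)$, i.e. $4\lambda^2-5\lambda+\frac74<0$, whose discriminant is negative, so \emph{no} choice of split point $\lambda$ makes this two-band bookkeeping (max coefficient times max exponential mass per band) beat the tail's drift --- those two worst-case bounds are never simultaneously tight, and using them both loses exactly the $\Theta(\epsilon)$ margin the lemma needs.

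The missing idea is to not track coefficient signs at any index at all. The paper's proof keeps the uniform negative part whole, writing $\E[\Delta\Phi\mid y(t)]\le\hat\alpha(1+\delta)\sum_i p_i e^{\alpha y_i(t)}-\frac{\hat\alpha}{n}\Phi(t)$, bounds the bottom quarter by $\sum_{i<n/4}p_ie^{\alpha y_i}\le 1$ (the hypothesis gives $e^{\alpha y_i}\le1$ there), and then applies a single Chebyshev-sum-inequality step to the \emph{entire} top block: since over $i\ge n/4$ the $p_i$ are non-increasing and the $e^{\alpha y_i}$ non-decreasing, $\sum_{i\ge n/4}p_ie^{\alpha y_i}\le\frac{4}{3n}\bigl(\sum_{i\ge n/4}p_i\bigr)\bigl(\sum_{i\ge n/4}e^{\alpha y_i}\bigr)\le\frac{4\Phi(t)}{3n}\bigl(\frac34-\epsilon\bigr)$. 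The hypothesis enters a second time here through probability mass, not signs: the cheapest quarter of queues carries the largest $p_i$ and absorbs $\sum_{i<n/4}p_i\ge\frac14+\epsilon$, so only $\frac34-\epsilon$ of the removal probability is left to pair with large exponentials. Then $(1+\delta)\bigl(1-\frac{4\epsilon}{3}\bigr)-1\le\delta-\frac{4\epsilon}{3}\le-\frac{\epsilon}{3}$ (using $\delta\le\epsilon$) gives $\E[\Delta\Phi\mid y(t)]\le-\frac{\hat\alpha\epsilon}{3n}\Phi(t)+\hat\alpha(1+\delta)\le-\frac{\hat\alpha\epsilon}{3n}\Phi(t)+1$, which is the claim. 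So the $\epsilon/3$ rate comes from this aggregate probability-mass accounting over the top three quarters --- the step your outline never reaches.
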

\begin{proof}

We start from the inequality
\begin{equation}
\E \left[ \Delta \Phi \,|\, y(t) \right] \leq  \sum_{i = 1}^n \hat{\alpha}\left(  (1 + \delta) p_i - \frac{1}{n}  \right) \exp\left( \alpha y_i(t) \right) \leq \hat{\alpha} (1 + \delta)  \sum_{i = 1}^n p_i \exp\left( \alpha y_i(t) \right) - \frac{\hat{\alpha}}{n} \Phi(t).
\end{equation}

\noindent We will now focus on bounding the first term (without the constants). We can rewrite it as: 
\begin{equation}
\sum_{i = 1}^n p_i \exp\left( \alpha y_i(t) \right) = \sum_{i = 1}^{n / 4 - 1} p_i \exp\left( \alpha y_i(t) \right) + \sum_{i = n / 4}^{n} p_i \exp\left( \alpha y_i(t) \right).
\end{equation}

\noindent Since $y_{n / 4} \leq 0$, the first term is upper bounded by $1$. For the second term, notice that 
\begin{equation}
\sum_{i = n / 4}^{n} p_i \exp\left( \alpha y_i(t) \right) = \sum_{j = 1}^{3 n / 4} p_{n - j + 1} \exp{\left( \alpha y_{n - j + 1} (t) \right)}.
\end{equation}

\noindent The $p$ terms are non-decreasing in $j$, while the $y$ terms are non-increasing in $j$. Further, note that 
\begin{equation}
	\sum_{j = 1}^{3n/4} \exp{\left( \alpha y_{n - j + 1} (t) \right)} \leq \Phi. 
\end{equation}

\noindent The whole sum is maximized when these non-increasing terms are equal. 
We therefore are looking to bound 
\begin{equation}
\sum_{i = n / 4}^{n} p_i \exp\left( \alpha y_i(t) \right) \leq \frac{4 \Phi}{3n} \sum_{i = n/4}^{n}  p_i.
\end{equation}

\noindent Notice that 
\begin{equation}
 \sum_{i = n/4}^{n}  p_i = 1 - \sum_{i = 1}^{n  / 4 - 1} p_i \leq 1 - \left(\frac{1}{4} + \epsilon \right) = 
 \frac{3}{4} - \epsilon.   
\end{equation}

\noindent Hence we have that 
\begin{eqnarray*}
\E \left[ \Delta \Phi \,|\, y(t) \right] &\leq& \hat{\alpha}  \left[ (1 + \delta) \sum_{i = 1}^n p_i \exp\left( \alpha y_i(t) \right) - \frac{1}{n} \Phi(t) \right] \\
&\leq& \hat{\alpha}  \left[ (1 + \delta) \frac{4 \Phi}{3n} \left( \frac{3}{4} - \epsilon \right) - \frac{1}{n} \Phi(t) +  ( 1 + \delta ) \right]  \\ 
&\leq& \hat{\alpha}  \left[  \frac{1 + \delta}{n}  \left( 1 - \frac{4 \epsilon}{3} \right) \Phi(t) - \frac{1}{n} \Phi(t) + (1 + \delta) \right] \\
&=& \hat{\alpha} \left[ \frac{ \Phi(t) }{n} \left[  \left( 1 - \frac{4\epsilon}{3} \right) (1 + \delta)  - {1} \right]   + (1 + \delta)    \right]  \\ 
&\leq &- \hat{\alpha}  \left[  \frac{4\epsilon}{3} (1 + \delta)  - {\delta} \right] \frac{ \Phi(t) }{n}   + 1 \leq - \hat{\alpha} (1 + \delta)  \frac{\epsilon}{3}  \frac{ \Phi(t) }{n}   + 1,
\end{eqnarray*}

\noindent where in the last step we have used the fact that $\delta \leq \epsilon$. Hence, we get that 
\begin{eqnarray*}
\E \left[  \Phi(t + 1) \,|\, y(t) \right] 
\leq \left( 1 -   \frac{\hat{\alpha} \epsilon}{3n} \right) \Phi(t) + 1.
\end{eqnarray*}

\end{proof}

\begin{lemma}
\label{lem:psigood}
If $y_{3n / 4} \geq 0$, then we have that 
\begin{eqnarray*}
\E \left[ \Psi (t + 1) \,|\, y(t) \right] \leq &  \left(   1 -  \frac{\hat{\alpha} \epsilon}{3n}  \right) \Psi + 1, 
\end{eqnarray*}
\end{lemma}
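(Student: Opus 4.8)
This statement is the mirror image of Lemma~\ref{lem:phigood}, with the potentials $\Phi,\Psi$ and the signs of the exponents interchanged, so the plan is to follow that proof almost verbatim; the one genuine difference is that here we need a \emph{lower} bound on a probability-weighted sum rather than an upper bound. First I would invoke Lemma~\ref{lem:psibound} and rewrite it as $\E[\Delta\Psi\mid y(t)] \le \frac{\hat{\alpha}(1+\delta)}{n}\,\Psi(t) - \hat{\alpha}\sum_{i=1}^{n} p_i\exp(-\alpha y_i(t))$. Thus it suffices to prove $\sum_{i=1}^{n} p_i\exp(-\alpha y_i(t)) \ge \frac{1}{n}\bigl(1+\tfrac{4\epsilon}{3}\bigr)\Psi(t) - O(1)$: feeding this back in and using $\delta\le\epsilon$ (so that $(1+\delta)-(1+\tfrac{4\epsilon}{3}) \le -\tfrac{\epsilon}{3}$) together with $\hat{\alpha}<1$ gives $\E[\Delta\Psi\mid y(t)] \le -\frac{\hat{\alpha}\epsilon}{3n}\Psi(t) + 1$, which is the claim after adding $\Psi(t)$ to both sides.

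The heart of the matter is the displayed lower bound, and here is how I would obtain it. Since the queues are kept sorted in increasing order of top label, both $(p_i)_i$ (by the explicit formula for the removal probabilities) and $(\exp(-\alpha y_i))_i$ are non-increasing in $i$. I would split the sum at index $3n/4$ and discard the (nonnegative) tail, keeping $\sum_{i=1}^{3n/4-1} p_i\exp(-\alpha y_i)$, then apply Chebyshev's sum inequality to these two similarly-sorted sequences on the range $i\le 3n/4-1$: $\sum_{i=1}^{3n/4-1} p_i\exp(-\alpha y_i) \ge \frac{1}{3n/4-1}\bigl(\sum_{i=1}^{3n/4-1}p_i\bigr)\bigl(\sum_{i=1}^{3n/4-1}\exp(-\alpha y_i)\bigr)$. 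For the first factor I use $\sum_{i=1}^{3n/4-1}p_i \simeq \tfrac34\bigl(1+\tfrac\beta4\bigr) \ge \tfrac34+\epsilon$ (the estimate $\sum_{i=1}^{m}p_i\simeq\frac mn(1+\beta-\frac mn\beta)$, and $\epsilon=\beta/16$). For the second factor I use the hypothesis $y_{3n/4}\ge 0$: every $i\ge 3n/4$ has $y_i\ge 0$, hence $\exp(-\alpha y_i)\le 1$, and there are at most $n/4+1$ such indices, so $\sum_{i=1}^{3n/4-1}\exp(-\alpha y_i) \ge \Psi(t)-(n/4+1)$. Using $\frac{1}{3n/4-1}\ge\frac{4}{3n}$, the two factors multiply out to $\frac{1}{n}\bigl(1+\tfrac{4\epsilon}{3}\bigr)\Psi(t) - \frac{1}{n}\bigl(1+\tfrac{4\epsilon}{3}\bigr)\bigl(\tfrac n4+1\bigr)$, and the subtracted term is indeed $O(1)$.

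Finally I would tidy up two edge points. The additive constant in the conclusion must come out exactly $\le 1$: it equals $\hat{\alpha}$ times a quantity bounded by $(1+\tfrac{4\epsilon}{3})(\tfrac14+\tfrac1n)$, which is below $1$ for $\epsilon$ small and $n$ not tiny, and $\hat{\alpha}<1$, so this is fine. And the Chebyshev estimate is only informative when $\Psi(t)\ge n/4+1$; in the complementary range I would instead use the trivial bound $\sum_i p_i\exp(-\alpha y_i)\ge 0$, which gives $\E[\Psi(t+1)\mid y(t)]\le\bigl(1+\tfrac{\hat{\alpha}(1+\delta)}{n}\bigr)\Psi(t)$, and then check directly that this is $\le\bigl(1-\tfrac{\hat{\alpha}\epsilon}{3n}\bigr)\Psi(t)+1$ because $\tfrac{\hat{\alpha}}{n}(1+\delta+\tfrac\epsilon3)\Psi(t)=O(\hat{\alpha})\le 1$ there. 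I expect the only real obstacle to be bookkeeping: pointing Chebyshev's sum inequality in the lower-bound direction (the $\Phi$ proof only needed an upper bound on its probability-weighted sum, which is softer, since there one can just equalize terms) and tracking the additive $O(1)$ terms tightly enough to land on the stated constant $+1$; everything else is a transcription of the proof of Lemma~\ref{lem:phigood}.
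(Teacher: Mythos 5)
Your proposal is correct and follows essentially the same route as the paper's proof: split the bound from Lemma~\ref{lem:psibound} at index $3n/4$, use the hypothesis $y_{3n/4}\ge 0$ so that $\exp(-\alpha y_i)\le 1$ on the top quarter, equalize the monotone prefix via a Chebyshev-sum step together with $\sum_{i\le 3n/4}p_i\ge \tfrac34+\epsilon$, and finish with $\delta\le\epsilon$. The only difference is bookkeeping: you separate out $(1+\delta)\Psi/n$ and lower-bound $\sum_i p_i\exp(-\alpha y_i)$, carefully using $\sum_{i<3n/4}\exp(-\alpha y_i)\ge\Psi-(n/4+1)$ and absorbing the slack into the additive constant, whereas the paper applies the equalization directly to the combined coefficients $\bigl((1+\delta)\tfrac1n-p_i\bigr)$ against $\Psi$ itself, yielding the same drift with the same constants.
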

\begin{proof}
We start from the inequality
\begin{eqnarray*}
\E \left[ \Delta \Psi \,|\, y(t) \right] \leq & 
\sum_{i = 1}^n \hat{\alpha}\left(  (1 + \delta) \frac{1}{n} - p_i  \right) \exp\left( - \alpha y_i(t) \right)  \\ 
\leq & \sum_{i = 1}^{3n / 4} \hat{\alpha}\left(  (1 + \delta) \frac{1}{n} - p_i  \right) \exp\left( - \alpha y_i(t) \right) + 
\sum_{i = 3 n / 4 + 1}^{n} \hat{\alpha}\left(  (1 + \delta) \frac{1}{n} - p_i  \right) \exp\left( - \alpha y_i(t) \right) . 
\end{eqnarray*}

\noindent The second term is upper bounded as 

\begin{eqnarray*}
\sum_{i = 3 n / 4 + 1}^{n} \hat{\alpha}\left(  (1 + \delta) \frac{1}{n} - p_i  \right) \leq 
\hat{\alpha} (1 + \delta ) / 4  - \hat{\alpha} \sum_{i = 3 n / 4 + 1}^{n} p_i \leq 1.
\end{eqnarray*}

\noindent We therefore now want to bound
\begin{eqnarray*}
\sum_{i = 1}^{3n / 4} \hat{\alpha}\left(  (1 + \delta) \frac{1}{n} - p_i  \right) \exp\left( - \alpha y_i(t) \right).
\end{eqnarray*}

\noindent We again notice that the first factor is non-decreasing, whereas the second one is non-increasing. 
Hence, the sum is maximized when the non-decreasing terms are equal.
At the same time, we have that  
\begin{eqnarray*}
\sum_{i = 1}^{3n/4} \exp\left( - \alpha y_i(t) \right) \leq \Psi. 
\end{eqnarray*}

Hence, it holds that 
\begin{eqnarray*}
\sum_{i = 1}^{3n / 4} \hat{\alpha}\left(  (1 + \delta) \frac{1}{n} - p_i  \right) \exp\left( - \alpha y_i(t) \right) \leq \frac{4 \Psi}{3n} \sum_{i = 1}^{3n / 4} \hat{\alpha}\left(  (1 + \delta) \frac{1}{n} - p_i  \right) = \hat{\alpha} \frac{1 + \delta}{n} \Psi  - \hat{\alpha} \frac{4\Psi}{3n} \sum_{i = 1}^{3n / 4} p_i  \\
 = \hat{\alpha} \frac{1 + \delta}{n} \Psi  - \hat{\alpha} \frac{1}{n} \Psi \left( 1 + \frac{4 \epsilon}{3}  \right) 
= \frac{\hat{\alpha}}{n} \Psi \left(  \left(1 + \delta\right) - \left(1 + \frac{4 \epsilon}{3} \right) \right) \leq  - \frac{\hat{\alpha}}{n}     \frac{\epsilon}{3} \Psi,
\end{eqnarray*}

\noindent where in the last step we have used the fact that $\delta \leq \epsilon$. Therefore, we get that 
\begin{eqnarray*}
\E \left[ \Psi (t + 1) \,|\, y(t) \right] \leq &  \left(   1 -  \frac{\hat{\alpha} \epsilon}{3n}  \right) \Psi + 1, 
\end{eqnarray*}

\noindent as claimed. 

\end{proof}

\begin{lemma}
\label{lem:phibad}
Given $\epsilon \in ( 0, 1 )$ as defined, assume that $y_{n / 4} > 0$ at $t$, and that $\mathbb{E}[ \Delta \Phi ] \geq - \frac{\epsilon \hat{ \alpha}}{ 3n } \Phi(t)$. 
 Then either $\Phi < \frac{\epsilon}{4} \Psi$, or $\Gamma = O( n )$.   
\end{lemma}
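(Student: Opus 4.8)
The plan is to assume the negation of both alternatives---that $\Phi \geq \frac{\epsilon}{4}\Psi$ and that $\Gamma$ is \emph{not} $O(n)$, equivalently $\Phi$ is large---and derive a contradiction, using the hypothesis $\mathbb{E}[\Delta\Phi] \geq -\frac{\epsilon\hat\alpha}{3n}\Phi(t)$ together with Lemma~\ref{lem:phibound}. First I would fix the splitting index $\lambda = 2/3 - 1/54$, chosen precisely so that $p_{\lambda n} \leq \frac{1}{n} - \frac{4\epsilon}{n}$ (using the formula for $p_i$ and $\epsilon = \beta/16$), and split the sum in Lemma~\ref{lem:phibound} at $\lambda n$. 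For the low part ($i \leq \lambda n$) I would bound the probability factors crudely by $1$, obtaining a term proportional to $\Phi_{\leq \lambda n}$ with a \emph{positive} coefficient $\frac{\hat\alpha}{n}\big((1+\delta)\frac1\lambda - 1\big)$. For the high part ($i > \lambda n$) I would use that the $p_i$ are non-increasing, so $p_i \leq p_{\lambda n} \leq \frac1n - \frac{4\epsilon}{n}$, which together with $\delta \leq \epsilon$ gives a \emph{negative} coefficient $\leq -\frac{3\epsilon\hat\alpha}{n}$ on $\Phi_{>\lambda n}$.

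Combining these with the case hypothesis $\mathbb{E}[\Delta\Phi] \geq -\frac{\epsilon\hat\alpha}{3n}\Phi$, substituting $\Phi_{>\lambda n} = \Phi - \Phi_{\leq\lambda n}$, and collecting terms, I would obtain $\Phi \leq C(\epsilon)\,\Phi_{\leq\lambda n}$ where $C(\epsilon) = \frac{(1+\delta)\frac1\lambda - 1 + 3\epsilon}{3\epsilon - \epsilon/3} = O(1/\epsilon)$. Next I would introduce $B = \sum_{y_i > 0} y_i = \sum_{y_i < 0}(-y_i)$ (equal because we normalize by the mean), and use monotonicity of the $y_i$ to bound $\Phi_{\leq\lambda n} \leq \lambda n\exp(\alpha y_{\lambda n}) \leq \lambda n\exp\!\big(\frac{\alpha B}{(1-\lambda)n}\big)$, since $y_{\lambda n} \leq \frac{B}{(1-\lambda)n}$. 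This yields the upper bound $\Phi(t) \leq \lambda n\, C(\epsilon)\exp\!\big(\frac{\alpha B}{(1-\lambda)n}\big)$.

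For the matching lower bound on $\Psi$, I would use the case hypothesis $y_{n/4} > 0$: all bins with $y_i < 0$ lie among the first $n/4$ ranks. Applying Jensen to the first $n/4$ terms of $\Psi$ gives $\Psi \geq \frac{n}{4}\exp\!\big(-\alpha\,\frac{\sum_{i=1}^{n/4}y_i}{n/4}\big)$; I would then argue $\sum_{i=1}^{n/4}y_i \geq -\frac{3B}{4}$ (the negative part sums to $-B$, the positive part inside is at most $B/4$), giving $\Psi \geq \frac{n}{4}\exp\!\big(\frac{3\alpha B}{n}\big)$. Now if $\Phi \geq \frac{\epsilon}{4}\Psi$ fails we are done; otherwise chaining $\frac{\epsilon}{4}\cdot\frac{n}{4}\exp(\frac{3\alpha B}{n}) \leq \frac{\epsilon}{4}\Psi \leq \Phi \leq \lambda n\,C(\epsilon)\exp(\frac{\alpha B}{(1-\lambda)n})$ and using the arithmetic fact $3 - \frac{1}{1-\lambda} = \frac{3}{19} > 0$ forces $\exp(\frac{\alpha B}{n}) \leq O(1/\epsilon^{14})$, i.e. $B$ is bounded. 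Finally I would plug this back into $\Gamma = \Phi + \Psi \leq \frac{4+\epsilon}{\epsilon}\Phi \leq \frac{4+\epsilon}{\epsilon}\lambda n\,C(\epsilon)\exp(\frac{\alpha B}{(1-\lambda)n}) = O(\mathrm{poly}(1/\epsilon)\,n)$, which is the second alternative.

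The main obstacle is getting the constants to line up: the whole argument hinges on the sign of $3 - \frac{1}{1-\lambda}$ being positive (so that the $\exp(\alpha B/n)$ factors in the $\Psi$ lower bound beat those in the $\Phi$ upper bound), and this in turn forces the specific choice $\lambda = 2/3 - 1/54$ and the relation $\epsilon \geq \delta$. One must be careful that the ``low part'' coefficient $(1+\delta)\frac1\lambda - 1$ stays $O(\epsilon)$-small relative to the $3\epsilon$ available from the high part, so that $C(\epsilon)$ is genuinely $O(1/\epsilon)$ rather than blowing up; this is where the tuning of $\lambda$ against $\beta$ does its real work.
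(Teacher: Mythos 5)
Your proposal is correct and follows essentially the same route as the paper's proof: the same split at $\lambda n = (2/3 - 1/54)n$, the same bound $\Phi \leq C(\epsilon)\Phi_{\leq \lambda n}$, the same upper bound on $\Phi$ and Jensen-based lower bound on $\Psi$ in terms of $B$, and the same use of $3 - \frac{1}{1-\lambda} = 3/19 > 0$ to bound $\exp(\alpha B/n)$ and hence $\Gamma = O(\mathrm{poly}(1/\epsilon)\,n)$. No substantive differences to report.
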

\begin{proof}

Fix $\lambda  = 2 / 3 - 1 / 54$  for the rest of the proof. 
We can split the inequality in Lemma~\ref{lem:phibound} as follows:
\begin{eqnarray}
\E \left[ \Delta \Phi \,|\, y(t) \right] \leq \sum_{i = 1}^{\lambda n} \hat{\alpha}\left(  (1 + \delta) p_i - \frac{1}{n}  \right) \exp\left( \alpha y_i(t) \right) + 
\sum_{i = \lambda n + 1}^{n} \hat{\alpha}\left(  (1 + \delta) p_i - \frac{1}{n}  \right) \exp\left( \alpha y_i(t) \right). 
\label{ineq:phisplit}
\end{eqnarray}

\noindent We bound each term separately. Since the probability terms are non-increasing and the exponential terms are non-decreasing, the first term is maximized when all $p_i$ terms are equal. Since these probabilities are at most $1$, we have
\begin{eqnarray}
\label{ineq:philess}
\sum_{i = 1}^{\lambda n} \hat{\alpha}\left(  (1 + \delta) p_i - \frac{1}{n}  \right) \exp\left( \alpha y_i(t) \right) \leq \frac{\hat{\alpha}}{n} \left( (1 +\delta) \frac{1}{\lambda} - 1 \right) \Phi_{\leq \lambda n}.
\end{eqnarray} 

\noindent The second term is maximized by noticing that the $p_i$ factors are non-increasing, and are thus dominated by their value at $\lambda n$. Noticing that we carefully picked $\lambda$ such that 
$$ p_{\lambda n} \leq \frac{1}{n} - \frac{4 \epsilon}{n}, $$
we obtain, using the assumed inequality $\delta \leq \epsilon$, that 
\begin{eqnarray}
\label{ineq:phimore}
	\sum_{i = \lambda n + 1}^{n} \hat{\alpha}\left(  (1 + \delta) p_i - \frac{1}{n}  \right) \exp\left( \alpha y_i(t) \right)  \leq \hat{\alpha} \left( \delta - 4 \epsilon  \right) \frac{\Phi_{>\lambda n}}{n} \leq - \frac{3 \epsilon \hat{\alpha}}{n} \Phi_{>\lambda n}. 
\end{eqnarray}

\noindent By the case assumption, we know that $
\E \left[ \Delta \Phi \,|\, y(t) \right] \geq - \frac{\hat{\alpha} \epsilon}{3n} \Phi(t).$
 Combining the bounds (\ref{ineq:phisplit}), (\ref{ineq:philess}), and (\ref{ineq:phimore}), this yields:
\begin{eqnarray*}
\frac{\hat{\alpha}}{n} \left( (1 +\delta) \frac{1}{\lambda} - 1 \right) \Phi_{\leq \lambda n}  - \frac{3 \epsilon \hat{\alpha}}{n} \Phi_{>\lambda n} 
 \geq - \frac{\hat{\alpha}\epsilon}{3n} \Phi(t). 
\end{eqnarray*}

\noindent Substituting $\Phi_{>\lambda n} =  \Phi - \Phi_{\leq \lambda n}$ yields:
%
%
\begin{eqnarray*}
 \left(  3 \epsilon - \epsilon / 3 \right) \Phi \leq \left(  (1 + \delta) \frac{1}{\lambda} - 1 + 3 \epsilon  \right) \Phi_{\leq \lambda n} . 
\end{eqnarray*}

\noindent For simplicity, we fix 
$
 C(\epsilon) =  \frac{  (1 + \delta) \frac{1}{\lambda} - 1 + 3 \epsilon }{ 3 \epsilon - \epsilon / 3 } = O\left( \frac{1}{\epsilon} \right), 
$
to obtain 
\begin{eqnarray}
\label{eq:phirel}
\Phi \leq C(\epsilon)  \Phi_{\leq \lambda n}. 
\end{eqnarray}

\noindent Let $B = \sum_{y_i > 0} y_i$. Since we are normalizing by the mean, it also holds that $B = \sum_{y_i < 0} (- y_i )$. 
Notice that 
\begin{eqnarray}
\label{eq:brel}
\Phi_{ \leq \lambda n} \stackrel{y_i \text{incr.}}{\leq} \lambda n \exp{\left(  {\alpha y_{\lambda n} } \right)} 
 \stackrel{y_i \text{incr.}}{\leq} \lambda n \exp{\left(  \frac{\alpha B }{ (1 - \lambda) n} \right)}.
\end{eqnarray}

\noindent We put inequalities~(\ref{eq:phirel}) and~(\ref{eq:brel}) together and get 
\begin{eqnarray}
\label{eq:phitwo}
	\Phi(t) \leq & \lambda n C(\epsilon)  \exp{\left(  \frac{\alpha B }{ (1 - \lambda) n} \right)} ,
\end{eqnarray}

\noindent Let us now lower bound the value of $\Psi$ under these conditions. Since $y_{n / 4} > 0,$ all the costs below average must be in the first quarter of $y$. 
We can apply Jensen's inequality to the first $n / 4$ terms of $\Psi$ to get that 
\begin{eqnarray*}
	\Psi \geq \sum_{i = 1}^{n / 4} \exp \left( - \alpha {y_i} \right) \geq 
	\frac{n}{4} \exp \left(  - {\alpha} \frac{\sum_{i = 1}^{n / 4} y_i }{  n / 4  }      \right).
\end{eqnarray*}

\noindent We now split the sum $\sum_{i = 1}^{n / 4} y_i$ into its positive part and its negative part. We know that the negative part is summing up to exactly $- B$, as it contains all the negative $y_i$'s and the total sum is $0$. The positive part can be of size at most $B / 4$,  since it is maximized when there are exactly $n - 1$ positive costs and they are all equal. 
Hence the sum of the first $n / 4$ elements is at least $ - 3B / 4$, which implies that the following bound holds: 
\begin{eqnarray}
\label{eq:psilb}
	\Psi \geq 
	\frac{n}{4} \exp{ \left(  - {\alpha} \frac{- 3B / 4 }{  n / 4  } \right)} \geq 
	\frac{n}{4} \exp \left(  {\alpha} \frac{ 3B }{ n } \right). 
\end{eqnarray}

\noindent If  $\Phi < \frac{\epsilon}{4} \Psi$, then there is nothing to prove. 
Otherwise, if $\Phi \geq  \frac{\epsilon}{4} \Psi$, we get from~(\ref{eq:psilb}) and~(\ref{eq:phitwo}) that 
\begin{eqnarray*}
	\frac{\epsilon}{4} \frac{n}{4} \exp \left(  {\alpha} \frac{ 3B }{ n } \right) \leq \frac{\epsilon}{4} \Psi  \leq  \Phi(t) \leq & \lambda n C(\epsilon)  \exp{\left(  \frac{\alpha B }{ (1 - \lambda) n} \right)} ,
\end{eqnarray*}
\noindent Therefore, we get that: 
\begin{eqnarray*}
	 \exp \left(  {\alpha} \frac{ B }{ n } \left(  3 - \frac{1}{1 - \lambda}  \right) \right) \leq \frac{4\lambda}{\epsilon} C(\epsilon) = O\left( \frac{1}{\epsilon^2} \right). 
\end{eqnarray*}

\noindent Using the mundane fact that $  3 - \frac{1}{1 - \lambda} = 3 / 19$, we get that 
%
\begin{eqnarray}
\label{Bbound}
	 \exp \left(   \frac{ {\alpha}B }{ n } \right) \leq O\left( \frac{1}{\epsilon^{14}}\right). 
\end{eqnarray}

\noindent To conclude, notice that (\ref{Bbound}) implies we can upper bound $\Gamma$  in this case as:
 \begin{eqnarray*}
 \Gamma  = \Phi + \Psi \leq & \frac{4 + \epsilon}{\epsilon} \Phi \leq \frac{4 + \epsilon}{\epsilon} \lambda n C(\epsilon)  \exp{\left(  \frac{\alpha B }{ (1 - \lambda) n} \right)}
 \leq  O\left( \frac{1}{\epsilon^{14/(1 - \lambda)}}\right) \frac{4 + \epsilon}{\epsilon}  C(\epsilon) \lambda n = O( \text{poly}\left( \frac{1}{\epsilon} \right ) n ). 
 \end{eqnarray*}

\end{proof}

\begin{lemma}
\label{lem:psibad}
Given $\epsilon$ as above, assume that $y_{\frac{3n}{4}} < 0$, and that $\mathbb{E}[ \Delta \Psi ] \geq -\frac{ \alpha \epsilon}{ 3 n } \Psi$. 
	Then either $\Psi > \frac{\epsilon}{4} \Phi$ or $\Gamma = O (n )$. 
\end{lemma}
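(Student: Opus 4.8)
The plan is to run the proof of Lemma~\ref{lem:phibad} with the roles of $\Phi$ and $\Psi$ (and of ``above'' and ``below the mean'') interchanged; I only flag the places where the mirror is not literal. I would begin from Lemma~\ref{lem:psibound}, that is $\mathbb{E}[\Delta\Psi\mid y(t)]\leq\sum_{i=1}^n\hat{\alpha}\left((1+\delta)\frac1n-p_i\right)\exp(-\alpha y_i(t))$, and split the sum at index $\lambda_1 n$ with $\lambda_1=1-\lambda=\frac13+\frac1{54}$, for which $p_{\lambda_1 n}\geq\frac1n+\frac{4\epsilon}{n}$ (the mirror of the choice $p_{\lambda n}\leq\frac1n-\frac{4\epsilon}{n}$ used in Lemma~\ref{lem:phibad}). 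On the low block $i\leq\lambda_1 n$ the coefficient $(1+\delta)\frac1n-p_i$ is at most $\frac{\delta-4\epsilon}{n}\leq-\frac{3\epsilon}{n}$ (using $\delta\leq\epsilon$), so that block contributes at most $-\frac{3\epsilon\hat{\alpha}}{n}\Psi_{\leq\lambda_1 n}$. The structural difference from Lemma~\ref{lem:phibad} is that for $\Psi$ the helpful negative coefficients sit at \emph{small} $i$, exactly where the weights $\exp(-\alpha y_i)$ are \emph{largest}, so the split moves to the lower third. On the high block $i>\lambda_1 n$ I would simply drop $-p_i\leq 0$, bounding the coefficient by $(1+\delta)\frac1n$ and the block by $\frac{(1+\delta)\hat{\alpha}}{n}\Psi_{>\lambda_1 n}$; this crude estimate replaces the probability-rearrangement step used for $\Phi$, and is the ``slight difference'' the paper alludes to.

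Combining the two block bounds with the case hypothesis $\mathbb{E}[\Delta\Psi]\geq-\frac{\hat{\alpha}\epsilon}{3n}\Psi$ and substituting $\Psi_{>\lambda_1 n}=\Psi-\Psi_{\leq\lambda_1 n}$ gives $\Psi_{\leq\lambda_1 n}\leq\frac{1+\delta+\epsilon/3}{1+\delta+3\epsilon}\Psi$, hence $\Psi\leq C(\epsilon)\,\Psi_{>\lambda_1 n}$ with $C(\epsilon)=\frac{3(1+\delta+3\epsilon)}{8\epsilon}=O(1/\epsilon)$. Writing $B=\sum_{y_i>0}y_i=\sum_{y_i<0}(-y_i)$ and using that the queues are sorted increasingly, $\lambda_1 n\,y_{\lambda_1 n}\geq\sum_{i\leq\lambda_1 n}y_i\geq-B$, so $y_{\lambda_1 n}\geq-B/(\lambda_1 n)$ and $\Psi_{>\lambda_1 n}\leq(1-\lambda_1)n\exp(\alpha B/(\lambda_1 n))$; therefore $\Psi\leq C(\epsilon)(1-\lambda_1)n\exp(\alpha B/(\lambda_1 n))$.

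For the matching lower bound on $\Phi$ I would use the hypothesis $y_{3n/4}<0$: it forces every bin with positive $y$ into the top $n/4$ ranks, so $B$ sits on at most $n/4$ bins, while the negative mass among those same $n/4$ bins is at most $B/4$ (they are the $\leq n/4$ least-negative of at least $3n/4$ negative bins carrying total magnitude $B$). Hence $\sum_{i>3n/4}y_i\geq 3B/4$, and Jensen applied to the last $n/4$ terms of $\Phi$ gives $\Phi\geq\frac n4\exp(3\alpha B/n)$, the exact mirror of the bound on $\Psi$ in Lemma~\ref{lem:phibad}. If $\Psi<\frac\epsilon4\Phi$ there is nothing to prove; otherwise $\frac\epsilon4\cdot\frac n4\exp(3\alpha B/n)\leq\Psi\leq C(\epsilon)(1-\lambda_1)n\exp(\alpha B/(\lambda_1 n))$, and since $3-1/\lambda_1=3/19>0$ this forces $\exp(\alpha B/n)=O(1/\epsilon^{14})$. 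Feeding this back into the bound on $\Psi$ and using $\Phi\leq\frac4\epsilon\Psi$ gives $\Gamma=\Phi+\Psi=O(\text{poly}(1/\epsilon)\,n)$, which establishes the claimed dichotomy.

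The one genuinely delicate point is the choice of $\lambda_1$: it must be small enough that $p_{\lambda_1 n}-\frac1n$ is a positive $\Theta(\epsilon/n)$ (so the low block is usefully negative) yet large enough that $3-1/\lambda_1>0$ (so the two exponential bounds can be played off against each other to pin down $B$). The naive symmetric guess $\lambda_1=1/4$ meets the first requirement but not the second; $\lambda_1=\frac13+\frac1{54}\in\left(\frac13,\frac{15}{32}\right)$ threads the needle. Verifying this, together with keeping the inequality directions straight once $\Phi$ and $\Psi$ swap roles, is where I expect the bookkeeping to need the most care; everything else is a transcription of the argument in Lemma~\ref{lem:phibad}.
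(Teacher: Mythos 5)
The proposal is correct and follows the paper's own proof essentially step for step: the same split at $\lambda_1 n$ with $\lambda_1 = \tfrac13+\tfrac1{54}$, the same crude $(1+\delta)/n$ bound on the high block in place of the rearrangement step used for $\Phi$, the same constant $C(\epsilon)=\frac{1+\delta+3\epsilon}{3\epsilon-\epsilon/3}$, the same bounds $\Psi_{>\lambda_1 n}\le(1-\lambda_1)n\exp\left(\frac{\alpha B}{\lambda_1 n}\right)$ and $\Phi\ge\frac n4\exp\left(\frac{3\alpha B}{n}\right)$, and the same endgame pinning down $\exp(\alpha B/n)=O(\mathrm{poly}(1/\epsilon))$. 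Note that, exactly like the paper's appendix proof, the argument establishes the dichotomy ``$\Psi<\frac{\epsilon}{4}\Phi$ or $\Gamma=O(n)$'' --- the version stated in the main text and used in Lemma~\ref{lem:gammabound} --- so the ``$>$'' in the appendix statement is evidently a typo rather than a discrepancy with the proposal.
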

\begin{proof}
	Fix $\lambda = \lambda_2 = 1 / 3 + 1 / 54$. 
	We start from the general bound on $\Delta \Psi$ from Lemma~\ref{lem:psibound}. 
	We had that 
	\begin{eqnarray*} 
		\E \left[ \Delta \Psi \,|\, y(t) \right] &\leq& \sum_{i = 1}^n \hat{\alpha}\left(  (1 + \delta) \frac{1}{n} - p_i  \right) \exp\left( - \alpha y_i(t) \right) \\ 
		&=& \sum_{i = 1}^{\lambda n - 1} \hat{\alpha}\left(  (1 + \delta) \frac{1}{n} - p_i  \right) \exp\left( - \alpha y_i(t) \right) + 
		\sum_{i = \lambda n}^{n} \hat{\alpha}\left(  (1 + \delta) \frac{1}{n} - p_i  \right) \exp\left( - \alpha y_i(t) \right) \\ 
		&\leq& \frac{\hat{\alpha}}{n} \left(  (1 + \delta) - (1 + 4 \epsilon) \right)  \Psi_{< \lambda n} +  \frac{\hat{\alpha}}{n} \left( 1 + \delta \right) \Psi_{\geq \lambda n} \\
		&\leq& - \frac{\hat{3 \alpha \epsilon}}{n} \Psi_{< \lambda n} +  \frac{\hat{\alpha}}{n} \left( 1 + \delta \right) \Psi_{\geq \lambda n}. 
	\end{eqnarray*}

\noindent Using the assumption, we have that
	\begin{eqnarray*} 
		-\frac{ \hat{\alpha} \epsilon}{ 3n } \Psi \leq \E \left[ \Delta \Psi \,|\, y(t) \right]
		\leq - \frac{3\hat{\alpha}\epsilon}{n}  \left( \Psi - \Psi_{\geq \lambda n} \right) +  \frac{\hat{\alpha}}{n} \left( 1 + \delta \right) \Psi_{\geq \lambda n}. 
	\end{eqnarray*}

We can re-write this as
	\begin{eqnarray*} 
		\Psi\leq \Psi_{\geq \lambda n} \frac{1 + \delta + 3 \epsilon }{  3 \epsilon - \epsilon / 3    } = C(\epsilon) \Psi_{\geq \lambda n}. 
	\end{eqnarray*}

\noindent However, we also have that
\begin{eqnarray*}
	\Psi_{\geq \lambda n} \leq ( 1 - \lambda ) n \exp{ \left(-\alpha y_{\lambda n} \right)} \leq 
	 ( 1 - \lambda ) n \exp{ \left( \frac{\alpha B}{n  \lambda } \right)}.
\end{eqnarray*}
 
 \noindent At the same time, since $y_{3n/4} < 0$, we have that
 \begin{eqnarray*} 
 \Phi \geq \frac{n}{4} \exp{ \left(  \frac{3\alpha B}{n}    \right) }.
 \end{eqnarray*}

\noindent  If $\Psi < \frac{\epsilon}{4} \Phi$, we can conclude. Let us examine the case where $\Psi \geq \frac{\epsilon}{4} \Phi$. 
Putting everything together, we get

\begin{eqnarray*}
	\frac{\epsilon}{4} \frac{n}{4} \exp\left( \frac{3\alpha B}{n} \right) \leq \frac{\epsilon}{4} \Phi \leq \Psi \leq C(\epsilon) \Psi_{\geq \lambda n} \leq \left(1 - \lambda\right) n C(\epsilon) \exp\left( \frac{\alpha B}{n \lambda}  \right).
\end{eqnarray*}

\noindent Alternatively, 
\begin{eqnarray*}
	\exp\left( \frac{\alpha B}{n} \left( 3 - \frac{1}{\lambda} \right) \right) \leq \frac{16}{ \epsilon} C(\epsilon) (1 - \lambda) = \frac{16}{ \epsilon}  (1 - \lambda) \frac{1 + \delta + 3 \epsilon }{  3 \epsilon - \epsilon / 3    } = O\left( \frac{1}{\epsilon^2} \right).
\end{eqnarray*}

\noindent Therefore, 
\begin{eqnarray*}
	\exp\left( \frac{\alpha B}{n} \right) \leq  O\left( \frac{1}{\epsilon^{13}} \right) .
\end{eqnarray*}

\noindent To complete the argument, we bound:

\begin{eqnarray*}
	\Gamma = \Psi + \Phi \leq  \left(1 + \frac{4}{\epsilon} \right) \Psi \leq \left(1 - \lambda\right) C(\epsilon) \exp\left( \frac{\alpha B}{n \lambda}\right) n 
	\leq   O \left( \frac{n}{\epsilon^{22}} \right). 
\end{eqnarray*}
\end{proof}

\section{Proof of Theorem \ref{thm:Emaxrank}}
\label{sec:app-proofs-maxrank}

Before we prove Theorem \ref{thm:Emaxrank} we need the following fact about exponential disitributions:

\begin{fact}
\label{fact:poisson}
Let $X_1, X_2, \ldots$ be independent and $X_i \sim \Exp (\lambda)$ for all $i$.
Let $Y_i = \sum_{k \leq i} X_k$.
Fix any interval $I \subseteq [0, \infty)$ of length $m$.
Then, $\# \{i: Y_i \in I \} \sim \Poi (m \lambda)$.
\end{fact}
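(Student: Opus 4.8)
The plan is to recognize that the partial sums $Y_1, Y_2, \ldots$ are precisely the arrival times of a homogeneous Poisson process of rate $\lambda$, so that the statement is the classical fact that such a process places $\Poi(\lambda m)$ points in any window of length $m$. I would establish this in two steps: first for an interval anchored at the origin, and then for an arbitrary interval by a memorylessness argument.

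For $I = [0, m]$, note that $\#\{i : Y_i \in I\}$ is just the number of partial sums not exceeding $m$, and for each integer $k \ge 0$ this count equals $k$ exactly when $Y_k \le m < Y_{k+1}$ (setting $Y_0 = 0$). Since $Y_k$ is a sum of $k$ i.i.d.\ $\Exp(\lambda)$ variables, it has a $\mathrm{Gamma}(k,\lambda)$ law, whose CDF is $\Pr[Y_k \le m] = 1 - \sum_{j=0}^{k-1} e^{-\lambda m}(\lambda m)^j/j!$. Hence
\begin{align*}
\Pr\bigl[\#\{i : Y_i \le m\} = k\bigr] = \Pr[Y_k \le m] - \Pr[Y_{k+1} \le m] = e^{-\lambda m}\frac{(\lambda m)^k}{k!},
\end{align*}
which is exactly the $\Poi(\lambda m)$ probability mass function.

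For a general interval $I$ of length $m$, write $I = (a, a+m]$ with $a \ge 0$; whether the endpoints are included is immaterial, since almost surely no $Y_i$ equals a fixed real number. Let $N = \#\{i : Y_i \le a\}$. Conditioning on $N = \ell$ and on $Y_1,\dots,Y_\ell$, the memoryless property of the exponential distribution (as used in Section~\ref{sec:equivalence}) implies that the overshoot $Y_{\ell+1} - a$ is itself $\Exp(\lambda)$ and independent of the subsequent increments $X_{\ell+2}, X_{\ell+3}, \dots$, which remain i.i.d.\ $\Exp(\lambda)$. Thus the shifted sequence $(Y_{\ell+i} - a)_{i \ge 1}$ has the same law as a fresh copy of $(Y_i)_{i \ge 1}$, and $\#\{i : Y_i \in I\}$ equals the number of points of this fresh sequence lying in $[0,m]$, which by the first step is $\Poi(\lambda m)$, independently of $\ell$. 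Averaging over $N$ gives the claim.

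The only point requiring care is the second step: one must verify that peeling off the points at or below $a$ leaves an overshoot $Y_{N+1} - a$ that is a genuine fresh $\Exp(\lambda)$ variable, independent of the future. This is exactly where memorylessness enters, and it is what makes the count depend on the length $m$ alone, not on the position $a$ of the interval. Everything else is the routine Gamma-to-Poisson CDF identity displayed above.
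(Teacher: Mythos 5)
Your proof is correct: the Gamma-CDF telescoping identity for the anchored interval $[0,m]$ together with the memoryless-overshoot argument for a shifted interval is exactly the classical interarrival-time characterization of the Poisson process, and the one delicate point (that the overshoot $Y_{N+1}-a$ is a fresh $\Exp(\lambda)$ independent of later increments) is handled properly, by the same memorylessness device the paper uses in the proof of Theorem~\ref{lem:distributions}. The paper states Fact~\ref{fact:poisson} without proof, treating it as standard, so your argument simply supplies the standard derivation it implicitly relies on; there is nothing to compare beyond that.
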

 
 \begin{proof}[Proof of Theorem \ref{thm:Emaxrank}]
Let $I(t) = [ w_{\min} (t), w_{\max}(t)]$, and let $L_{j} (t)$ be the number of elements in bin $j$ in $I(t)$ at time $t$.
Then, by memorylessness and Fact \ref{fact:poisson}, for all bins $j$ except the bin $j_{\max}$ containing $w_{\max} (t)$, the number of labels in $I(t)$ in bin $j$ before any deletions occur is distributed as $\Poi (|I| / n)$.
In particular, the expected number of elements after $t$ rounds is bounded by $\E_{X \sim \Poi (|I| / n)} [X] = |I| / n$.
Moreover, for the bin containing $w_{\max} (t)$, by definition, at time $t$ it contains exactly one element in $I$, namely, $w_{\max} (t)$.
Hence, we have
\begin{align*}
\E [\maxrank (t)] &\leq \E_{I(t)} \left[ \E \left[ \sum_{j = 1}^n L_j (t) \middle| I(t)  \right] \right] \\
&\leq \E_{I(t)} \left[ 1 + (n - 1) |I| / n \right] \\
&= O \left( \frac{1}{\alpha} n ( \log n + \log C) \right) \; ,
\end{align*}
by Lemma \ref{lem:Emaxlabel}.
\end{proof}

\section{Omitted Proofs from Section \ref{sec:avgrank}}
\label{sec:app-proofs-avgrank}

We first require the following technical lemma:

\begin{lemma}
\label{lem:poi1}
For any interval $I = [a, b]$ which may depend on $\mu$, we have $\E [\rank (b) - \rank (a)| \mu] \leq n (b - a + 1)$.
\end{lemma}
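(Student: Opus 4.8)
The plan is to write $\rank(b)-\rank(a)$ as a sum of per-bin contributions and to control each one using the memorylessness of the exponential increments together with Fact~\ref{fact:poisson}. First I would note that, since $\rank(v)$ counts the elements currently present whose label is at most $v$, we have $\rank(b)-\rank(a)=\sum_{j=1}^{n}\ell_{j,(a,b]}(t)$, where $\ell_{j,(a,b]}(t)$ is the number of elements currently in bin $j$ whose label lies in $(a,b]$. Hence it suffices to show $\E[\ell_{j,(a,b]}(t)\mid\mu]\le 1+\pi_j(b-a)$ for every $j$, since then summing over $j$ and using $\sum_{j}\pi_j=1$ gives $\E[\rank(b)-\rank(a)\mid\mu]\le n+(b-a)\le n(b-a+1)$.

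To bound a single bin, fix $j$ and let $w_j(t)$ be its current top label; the elements currently in bin $j$ are $w_j(t)$ together with the labels strictly above it, and in increasing order the gaps between consecutive such labels (and between $w_j(t)$ and the smallest one above it) are exponential increments that the removal process has \emph{not yet uncovered}. The crucial structural point is that the removal process only ever reads \emph{top} elements of queues, so at time $t$ it has examined, in bin $j$, only the increments needed to produce $w_j(t)$. By memorylessness — made rigorous via a principle-of-deferred-decisions argument, revealing each bin's increments lazily as the algorithm consults its top — conditioned on the entire history revealed by time $t$ (which determines $\mu$, the interval $[a,b]$, how many removals bin $j$ has undergone, and $w_j(t)$), the remaining increments of bin $j$ are fresh i.i.d.\ $\Exp(\pi_j)$ variables independent of that history. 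Thus, relative to $w_j(t)$, the labels of bin $j$ strictly above $w_j(t)$ are the partial sums of fresh i.i.d.\ $\Exp(\pi_j)$ variables, while the target interval $(a,b]$ is now fixed.

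Given this, $\ell_{j,(a,b]}(t)\le \mathbf{1}\{w_j(t)\in(a,b]\}+\#\{\text{labels of bin }j\text{ strictly above }w_j(t)\text{ lying in }(a,b]\}$. The first term contributes at most $1$. For the second, those labels that can fall in $(a,b]$ lie in $(\max(a,w_j(t)),\,b]$, an interval of length at most $b-a$ (and the count is $0$ if $b\le w_j(t)$), so by Fact~\ref{fact:poisson} applied to the fresh $\Exp(\pi_j)$ increments this count is stochastically dominated by $\Poi(\pi_j(b-a))$, which has mean $\pi_j(b-a)$. Taking expectations conditioned on $\mu$ yields $\E[\ell_{j,(a,b]}(t)\mid\mu]\le 1+\pi_j(b-a)$, and combining over bins completes the proof.

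The main obstacle is exactly the step the ``may depend on $\mu$'' clause is there to accommodate: making rigorous that the labels sitting above each queue's top — although all generated before any removals — were never inspected by the process, so that conditionally on $\mu$ they behave like fresh exponential increments and Fact~\ref{fact:poisson} applies with a fixed interval. Once that independence is pinned down, the remainder is the routine Poisson estimate above; in fact the argument gives the stronger bound $n+(b-a)$, so the stated $n(b-a+1)$ has ample slack.
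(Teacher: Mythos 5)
Your proposal is correct and follows essentially the same route as the paper's proof: both rely on memorylessness/Fact~\ref{fact:poisson} to treat the never-inspected labels above each queue's top as fresh exponential increments, and both absorb the dependence on $\mu$ by setting aside the (at most $n$) top elements, which contribute an additive $n$. Your per-bin accounting with rate $\pi_j$ even yields the sharper bound $n + (b-a)$, which of course implies the stated $n(b-a+1)$.
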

\begin{proof}
We first show that if $I$ does not depend on $\mu$, then $\E [\rank (b) - \rank (a)| \mu] \leq n (b - a)$.
By Fact \ref{fact:poisson} we have
\begin{align*}
\E [\rank(b) - \rank (a) | \mu ] &\leq \E \left[ \sum_{j = 1}^n \ell_{j, [a, b]}  \middle| \mu \right] \\
&=\E \left[ \sum_{j = 1}^n \E_{X \sim \Poi (b - a)} [X] \middle| \mu \right] \\
&= n (b - a) \; .
\end{align*}
To conclude the proof, we now observe that $\mu$ depends on at most $n$ elements, namely, those on top of the queues, and that if we remove those elements, then the remaining elements behave just as above.
Thus, by conditioning on $\mu$, we increase the rank by at most an additional factor of $n$.
 \end{proof}

\begin{lemma} 
\label{lem:below-mu}
	 For all $t$, we have $E [\rank (\mu(t))] \leq O \left( (A + 1 / \alpha^2) n \right)$.
\end{lemma}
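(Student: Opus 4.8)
The plan is to split $(-\infty,\mu(t)]$ into a ``core'' window of width $\Theta(An)$ directly below $\mu(t)$ and a geometric tail of width-$n$ windows $I_k=(\mu-(A+k+1)n,\ \mu-(A+k)n]$, $k\ge0$, to bound the expected number of currently-present labels in each window, and to sum. The starting observation is that a bin $j$ can contribute a present label $\le\mu(t)$ only when its current top label satisfies $w_j(t)\le\mu(t)$, since every label in bin $j$ is at least $w_j(t)$.

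The main tool is memorylessness. Condition on the configuration at time $t$: the per-bin deletion counts and the current tops $w_1(t),\dots,w_n(t)$ (hence $\mu(t)$). Since the removal rule only ever inspects queue \emph{tops}, and the tops of a bin are revealed in increasing order as deletions proceed, the labels of bin $j$ strictly above $w_j(t)$ are, conditionally, a fresh rate-$\pi_j$ Poisson process anchored at $w_j(t)$, independent of the execution so far and of the other bins. Hence, for any window $W$ of width $w$, the number of currently-present labels of bin $j$ in $W$ is stochastically dominated by $1+\Poi(w\pi_j)$, with conditional expectation $O(1+w/n)$ (using $\pi_j\le(1+\gamma)/n$); this is the per-bin counterpart of Lemma~\ref{lem:poi1}.

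Applying this to the core window $(\mu-An,\mu]$ ($w=An$) over the at most $n$ bins with $w_j(t)\le\mu$ gives $\E[\rank(\mu)-\rank(\mu-An)]=O(An)$. For the tail window $I_k$ (width $n$), a bin can hold a present label in $I_k$ only if $w_j(t)\le\sup I_k=\mu-(A+k)n$, i.e.\ only the $b_{<-k}(t)$ ``far'' bins of Lemma~\ref{lem:b} contribute; each contributes $O(1)$ labels to $I_k$ in conditional expectation (more crudely, $O(A+k)$ if one bounds a far bin's count in $I_k$ by its count in all of $(\mu-(A+k+1)n,\mu]$). Taking expectations over the configuration and invoking Lemma~\ref{lem:b} ($\E[b_{<-k}]\le ne^{-\alpha k}$) yields $\E[\#\{\text{present labels in }I_k\}]=O((A+k)\,n e^{-\alpha k})$, and summing the exponentially decaying series, $\sum_{k\ge0}O((A+k)\,n e^{-\alpha k})=O\big((A/\alpha+1/\alpha^2)n\big)=O\big((A+1/\alpha^2)n\big)$. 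Adding the core contribution $O(An)$ gives $\E[\rank(\mu(t))]=O((A+1/\alpha^2)n)$.

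The step I expect to be the main obstacle is making the conditioning rigorous: conditioning on $\mu(t)$ couples the window endpoints to the random configuration and, a priori, to past label values, so one must verify that the not-yet-inspected increments remain i.i.d.\ exponential. The reason this works is that the removal rule touches only tops, which are exposed in increasing order within each bin, so conditioning on the time-$t$ configuration leaves each bin's future increments fresh; everything else (the value of $\mu(t)$, the deletion counts defining $b_{<-k}$, the identity of the ``far'' bins) is a measurable function of what we conditioned on, after which Lemma~\ref{lem:b} supplies the exponential tail that makes the sum over $k$ converge.
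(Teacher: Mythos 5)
Your argument follows the paper's proof in all essentials: a near window of width $\Theta(An)$ just below $\mu$ handled by the memorylessness/Poisson count bound (Lemma~\ref{lem:poi1}), and a geometric tail controlled by the striping quantities $b_{<-k}$ together with $\E[b_{<-k}]\le n e^{-\alpha k}$ from Lemma~\ref{lem:b}. The only cosmetic difference is bookkeeping: you stripe the interval $(-\infty,\mu-An]$ into width-$n$ windows and count present labels per window, whereas the paper stripes the \emph{bins} by how far their top lies below $\mu - An$ and bounds each such bin's total contribution; the two are equivalent. Your explicit justification of the conditioning (labels strictly above the current tops were never inspected, hence remain a fresh Poisson process given the time-$t$ configuration) is exactly the fact the paper invokes, stated more carefully.

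One step in your final computation does not deliver the stated bound, although the fix is already contained in your own text. In the tail you switch to the crude per-bin estimate $O(A+k)$ and conclude $\sum_{k\ge 0}O\bigl((A+k)\,n e^{-\alpha k}\bigr) = O\bigl((A/\alpha + 1/\alpha^2)n\bigr)$, and then assert this is $O\bigl((A+1/\alpha^2)n\bigr)$. That last identification needs $A = O(1/\alpha)$, i.e.\ $\log C = O(1)$; since $C = \poly(1/\epsilon)$ and the hidden constants must not depend on $\beta$, in general $A/\alpha = \log C/\alpha^2$ exceeds $A + 1/\alpha^2$ by a $\Theta(\log (1/\beta))$ factor, so the crude route proves the lemma only up to that loss (and Corollary~\ref{cor:avg} would degrade to $O(n\log(1/\beta)/\beta^2)$). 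Instead, carry through the sharper estimate you already state: each bin counted by $b_{<-k}$ contributes $O(1)$ expected present labels to the width-$n$ window $I_k$ (its top, plus a $\Poi(n\pi_j)$ number of fresh points), so $\E[\#\{\text{present labels in }I_k\}] = O(n e^{-\alpha k})$ and the tail sums to $O(n/\alpha)$; equivalently, as the paper effectively does, measure a far bin's contribution by its distance to $\mu - An$ rather than to $\mu$, giving $\sum_k O(k)\,n e^{-\alpha k} = O(n/\alpha^2)$. Either way the total is $O\bigl((A+1/\alpha^2)n\bigr)$ as claimed.
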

\begin{proof}
For any bin $j$, we have
\begin{align*}
\E [\ell_{j, (-\infty, \mu]} (t) | \mu ] &= \int_{-\infty}^\mu \E [\ell_{j, [x, \mu]} | \mu, w_j (t) = x ] \cdot p_{j, \mu} (x) dx \\
&\stackrel{(a)}{\leq}  \int_{-\infty}^\mu \E_{X \sim \Poi(\mu - x) / n} [1 + X | \mu, w_j (t) = x ] \cdot p_{j, \mu} (x) dx \\
&= \int_{-\infty}^\mu \frac{1}{n} (1 + \mu - x) \cdot p_{j, \mu} (x) dx = \frac{1}{n} \E [(1 + \mu - w_j(t)) \mathrm{1}_{w_j(t) \leq \mu} | \mu] \\
&\leq \frac{1}{n} \left( 1 +  \E [(\mu - w_j(t)) \mathrm{1}_{w_j(t) \leq \mu} | \mu] \right) \; .
\end{align*}
where (a) follows because after we've conditioned on the value of $w_j (t)$, the values of the remaining labels in bin $j$ is independent of $\mu$ and we can apply Fact \ref{fact:poisson}.
Therefore, we have
\begin{align*}
\E [\rank (\mu)] &\stackrel{(a)}{\leq} \E [\rank (\mu - An)] + (A + 1) n = \E \left[ \sum_{j = 1}^n \E [\ell_{j, (-\infty, \mu - An] (t)} | \mu] \right] + (A + 1) n \\
&=  \E \left[ \sum_{j = 1}^n \E [\ell_{j, (w_j(t), \mu - An] (t)} | \mu, w_j (t)] \right] + (A + 1) n \\
&\stackrel{(b)}{\leq} 1 + \frac{1}{n} \E_{\mu} \left[   \sum_{j = 1}^n (\mu - w_j(t)) \mathrm{1}_{w_j(t) \leq \mu - An} \right] + (A + 1) n \\
&\stackrel{(c)}{\leq} 1 + \frac{1}{n} \sum_{k = 1}^\infty \E_{\mu} \left[ \sum_{w_j (t) \in [\mu - (k + 1 + A) n, \mu - (k + A) n]} (\mu - w_j (t) + 1) \right] + (A + 1)n \\
&\leq 1 + \frac{1}{n} \sum_{k = 1}^\infty (k + 2) n \E_{\mu} \left[ b_{<- i} \right] + (A + 1)n \stackrel{(d)}{\leq} 1 + \sum_{k = 1}^\infty (k + 2) n  \exp (-\alpha k) + (A + 1)n \\
&\leq 1 +  \frac{e^\alpha}{(e^\alpha - 1)^2} n + (A + 1)n \;,
\end{align*}
where (a) follows from Lemma \ref{lem:poi1}, (b) follows from Fact \ref{fact:poisson}, (c) follows from Lemma \ref{lem:poi1}, and (d) follows from Lemma \ref{lem:b}. 
Notice that for $\alpha$ small we have $ \frac{e^\alpha}{(e^\alpha - 1)^2} = O(1 / \alpha^2)$.
Simplifying the above then yields the claimed bound.
\end{proof}

\subsection{Other Proofs}

\begin{lemma}
\label{lem:initial}
	We have that $\Gamma(0) = O(n)$. 
\end{lemma}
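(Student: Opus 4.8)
The plan is to use the fact that, before any removal, the top weight $w_j(0)$ of bin $j$ in the exponential process is simply the \emph{first} weight generated in that bin, which by construction is an exponential random variable of rate $\pi_j$ (equivalently, mean $1/\pi_j=\Theta(n)$), and that these $n$ variables are mutually independent across $j$; the number of items inserted into a bin is irrelevant to its top weight. So $\Gamma(0)$ is an explicit function of $n$ independent exponentials, and since this lemma is invoked only as the $t=0$ base case of Lemma~\ref{lem:finalbound}, it suffices to bound $\E[\Gamma(0)]=O(n)$ by routine moment-generating-function estimates. Writing $W=\sum_{k=1}^n w_k(0)$, we have $y_j(0)=\bigl(w_j(0)-W/n\bigr)/n$, hence
\[
\exp\bigl(\alpha y_j(0)\bigr)=\exp\!\Bigl(\tfrac{\alpha}{n}w_j(0)\Bigr)\exp\!\Bigl(-\tfrac{\alpha}{n^2}W\Bigr),\qquad
\exp\bigl(-\alpha y_j(0)\bigr)=\exp\!\Bigl(-\tfrac{\alpha}{n}w_j(0)\Bigr)\exp\!\Bigl(\tfrac{\alpha}{n^2}W\Bigr).
\]

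For the $\Phi(0)$ contribution I would drop the factor $\exp(-\tfrac{\alpha}{n^2}W)\le 1$ (as $W\ge 0$) and compute the exponential MGF directly: since $\pi_j n\in[1-\gamma,1+\gamma]$ with $\gamma\le 1/2$ and $\alpha=\Theta(\beta)$ is small (the inequality $\alpha/(\pi_j n)<1$ is already needed for step (a) of Lemma~\ref{lem:phibound}), we get $\E[\exp(\tfrac{\alpha}{n}w_j(0))]=\pi_j/(\pi_j-\alpha/n)=\bigl(1-\alpha/(\pi_j n)\bigr)^{-1}=O(1)$. Summing over the $n$ bins gives $\E[\Phi(0)]=O(n)$.

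For the $\Psi(0)$ contribution I would instead split $W=w_j(0)+\sum_{k\ne j}w_k(0)$, so that by independence the $j$-th term of $\Psi(0)$ factors as $\E\bigl[\exp\bigl(-\tfrac{\alpha}{n}(1-\tfrac1n)w_j(0)\bigr)\bigr]\cdot\prod_{k\ne j}\E\bigl[\exp\bigl(\tfrac{\alpha}{n^2}w_k(0)\bigr)\bigr]$; the first factor is $\le 1$ because its exponent is non-positive, and each factor of the product equals $\bigl(1-\alpha/(\pi_k n^2)\bigr)^{-1}\le 1+O(\alpha/n)$, so the product is $\le e^{O(\alpha)}=O(1)$. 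Hence $\E[\exp(-\alpha y_j(0))]=O(1)$ as well, giving $\E[\Psi(0)]=O(n)$ and therefore $\E[\Gamma(0)]=O(n)$. The only points needing (minor) care are using the rate of the first generated weight rather than a later one — immediate, since the top of a queue before any deletion is exactly its first weight — and disentangling the correlation between $w_j(0)$ and the mean $W/n$, which the non-positivity trick above resolves without any joint computation; I do not expect any substantive obstacle here, as this is just the $t=0$ anchor of the potential argument and reduces to the bias bound $\pi_j n\in[1-\gamma,1+\gamma]$ together with standard exponential MGF estimates.
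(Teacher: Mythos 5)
Your proposal is correct and takes essentially the same route as the paper's own proof: both rest on the observation that the initial top labels are independent exponentials and bound $\E[\Gamma(0)]$ through moment-generating functions of exponentials combined with the bias bound $1-\gamma\le \frac{1}{n\pi_j}\le 1+\gamma$ (with $\alpha$ small enough that the MGFs are finite). The only, harmless, difference is that for $\Phi(0)$ you drop the factor $\exp\bigl(-\tfrac{\alpha}{n^2}W\bigr)\le 1$ outright instead of carrying the full product of MGFs as the paper does, while your treatment of $\Psi(0)$ is exactly the paper's symmetric computation.
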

\begin{proof}
	In the initial state, the values $x_i(0)$ are independent exponentially distributed random variables with mean $1/(n\pi_i)$. Now,
\begin{eqnarray*}
\E[\Phi(0)] & = & \E\left[\sum\exp\left(\alpha\left(\frac{w_i(0)}{n} - \mu(0)\right)\right)\right] \\
& = & \sum\E\left[\exp\left(\alpha\left(x_i(0) - \frac{1}{n}\sum x_j\right)\right)\right] \\
& =  & \sum\E\left[\exp\left(\alpha x_i(0)\left(1-\frac{1}{n}\right)\right) \prod_{j\neq i}\exp\left(-\frac{\alpha}{n} x_j(0)   \right)\right] \\
& = &  \sum\E\left[\exp\left(\alpha x_i(0)\left(1-\frac{1}{n}\right)\right)\right]\E\left[ \prod_{j\neq i}\exp\left(-\frac{\alpha}{n} x_j(0)   \right)\right],
\end{eqnarray*}

where the last line follows from the (pairwise) independence of the $x_i$. All terms in the right hand side are now exactly moment generating functions of the $w_i$ evaluated at some point. Since the moment generating function of an exponential with parameter $\lambda$ evaluated at $t$ is well known to be $\lambda/(\lambda-t)$, we can compute:
\begin{eqnarray*}
\E[\Phi(0)] & = &  \sum \frac{\frac{1}{n\pi_i}}{\frac{1}{n\pi_i}-\alpha(1-\frac{1}{n})} \prod_{j\neq i}\frac{\frac{1}{n\pi_j}}{\frac{1}{n\pi_j}+\frac{\alpha}{n}} \\ 
& \leq & \sum \frac{1+\gamma}{1-\gamma-\alpha}\prod_{j\neq i} \frac{1}{1+\alpha\pi_j} \\
& \approx & \sum \frac{1+\gamma}{1-\gamma-\alpha}\left(1- \sum_{j\neq i} \alpha\pi_j\right) \\
& = & O(n).
\end{eqnarray*}

The argument for $\Psi(0)$ is symmetric, replacing $\alpha$ with $-\alpha$.

\end{proof}

%
%
%

\end{document}